\PassOptionsToPackage{unicode}{hyperref}
\PassOptionsToPackage{hyphens}{url}
\PassOptionsToPackage{dvipsnames,svgnames,x11names}{xcolor}
\documentclass[
  12pt]{article}

\usepackage{amsmath,amssymb,amsthm,amsfonts}
\usepackage{iftex}
\ifPDFTeX
  \usepackage[T1]{fontenc}
  \usepackage[utf8]{inputenc}
  \usepackage{textcomp} 
\else 
  \usepackage{unicode-math}
  \defaultfontfeatures{Scale=MatchLowercase}
  \defaultfontfeatures[\rmfamily]{Ligatures=TeX,Scale=1}
\fi
\usepackage{lmodern}
\ifPDFTeX\else  
\fi
\IfFileExists{upquote.sty}{\usepackage{upquote}}{}
\IfFileExists{microtype.sty}{
  \usepackage[]{microtype}
  \UseMicrotypeSet[protrusion]{basicmath} 
}{}
\makeatletter
\@ifundefined{KOMAClassName}{
  \IfFileExists{parskip.sty}{%
    \usepackage{parskip}
  }{
    \setlength{\parindent}{0pt}
    \setlength{\parskip}{6pt plus 2pt minus 1pt}}
}{
  \KOMAoptions{parskip=half}}
\makeatother
\usepackage{xcolor}
\makeatletter
\ifx\paragraph\undefined\else
  \let\oldparagraph\paragraph
  \renewcommand{\paragraph}{
    \@ifstar
      \xxxParagraphStar
      \xxxParagraphNoStar
  }
  \newcommand{\xxxParagraphStar}[1]{\oldparagraph*{#1}\mbox{}}
  \newcommand{\xxxParagraphNoStar}[1]{\oldparagraph{#1}\mbox{}}
\fi
\ifx\subparagraph\undefined\else
  \let\oldsubparagraph\subparagraph
  \renewcommand{\subparagraph}{
    \@ifstar
      \xxxSubParagraphStar
      \xxxSubParagraphNoStar
  }
  \newcommand{\xxxSubParagraphStar}[1]{\oldsubparagraph*{#1}\mbox{}}
  \newcommand{\xxxSubParagraphNoStar}[1]{\oldsubparagraph{#1}\mbox{}}
\fi
\makeatother

\usepackage{longtable,booktabs,array}
\usepackage{calc} 
\usepackage{etoolbox}
\makeatletter
\patchcmd\longtable{\par}{\if@noskipsec\mbox{}\fi\par}{}{}
\makeatother
\IfFileExists{footnotehyper.sty}{\usepackage{footnotehyper}}{\usepackage{footnote}}
\makesavenoteenv{longtable}
\usepackage{graphicx}
\makeatletter
\def\maxwidth{\ifdim\Gin@nat@width>\linewidth\linewidth\else\Gin@nat@width\fi}
\def\maxheight{\ifdim\Gin@nat@height>\textheight\textheight\else\Gin@nat@height\fi}
\makeatother
\setkeys{Gin}{width=\maxwidth,height=\maxheight,keepaspectratio}
\makeatletter
\def\fps@figure{htbp}
\makeatother

\makeatletter
\@ifpackageloaded{caption}{}{\usepackage{caption}}
\AtBeginDocument{%
\ifdefined\contentsname
  \renewcommand*\contentsname{Table of contents}
\else
  \newcommand\contentsname{Table of contents}
\fi
\ifdefined\listfigurename
  \renewcommand*\listfigurename{List of Figures}
\else
  \newcommand\listfigurename{List of Figures}
\fi
\ifdefined\listtablename
  \renewcommand*\listtablename{List of Tables}
\else
  \newcommand\listtablename{List of Tables}
\fi
\ifdefined\figurename
  \renewcommand*\figurename{Figure}
\else
  \newcommand\figurename{Figure}
\fi
\ifdefined\tablename
  \renewcommand*\tablename{Table}
\else
  \newcommand\tablename{Table}
\fi
}
\@ifpackageloaded{float}{}{\usepackage{float}}
\floatstyle{ruled}
\@ifundefined{c@chapter}{\newfloat{codelisting}{h}{lop}}{\newfloat{codelisting}{h}{lop}[chapter]}
\floatname{codelisting}{Listing}

\makeatother
\makeatletter
\@ifpackageloaded{caption}{}{\usepackage{caption}}
\@ifpackageloaded{subcaption}{}{\usepackage{subcaption}}
\makeatother

\ifLuaTeX
  \usepackage{selnolig}  
\fi
\usepackage[authoryear]{natbib}
\usepackage{bookmark}

\IfFileExists{xurl.sty}{\usepackage{xurl}}{} 
\hypersetup{
  pdftitle={Fingerprint Analysis for Climate Change Detection and Attribution under a Latent Factor Model},
  pdfauthor={Haoran Li; Yan Li},
  pdfkeywords={Factor models; Measurement error; Random matrix theory; Regularization},
  colorlinks=true,
  linkcolor={blue},
  filecolor={Maroon},
  citecolor={Blue},
  urlcolor={Blue},
  pdfcreator={LaTeX via pandoc}}

\usepackage{algorithm}
\usepackage{algorithmic}
\usepackage{mathtools}
\usepackage{multirow}

\usepackage{bm}
\usepackage{bbm}

\usepackage{float}
\usepackage{enumitem}
\usepackage{diagbox} 

\newtheorem{theorem}{Theorem}[section]
\newtheorem{lemma}{Lemma}[section]
\newtheorem{proposition}{Proposition}[section]

\numberwithin{theorem}{section}
\numberwithin{lemma}{section}
\numberwithin{proposition}{section}
\numberwithin{corollary}{section}
\numberwithin{definition}{section}
\numberwithin{cons}{section}
\numberwithin{remark}{section}
\numberwithin{exa}{section}
\numberwithin{table}{section}
\numberwithin{figure}{section}

\newcommand{\bs}[1]{\boldsymbol{#1}}
\newcommand{\mb}[1]{\mathbb{#1}}
\newcommand{\cov}[1]{\operatorname{Cov}(#1)}
\newcommand{\mc}[1]{\mathcal{#1}}
\newcommand{\inner}[2]{\langle#1, #2 \rangle}
\newcommand{\tr}{\operatorname{Tr}}
\newcommand{\mE}{\mathbb{E}}

\allowdisplaybreaks

\newcommand{\anon}{1}

\begin{document}

\def\spacingset#1{\renewcommand{\baselinestretch}%
{#1}\small\normalsize} \spacingset{1}

\if1\anon
{
  \title{\bf Fingerprint Analysis for Climate Change Detection and Attribution under a Latent Factor Model}
  \author{Haoran Li\thanks{These authors contribute equally to this work. Haoran Li is an Assistant Professor, Department of Mathematics and Statistics, Auburn University, 221 Parker Hall, Auburn, AL, 36849. \url{hzl0152@auburn.edu}} { and Yan Li}\thanks{Corresponding author. Yan Li is an Assistant Professor, Department of Mathematics and Statistics, Auburn University, 221 Parker Hall, Auburn, AL, 36849. \url{yzl0317@auburn.edu}} \\ Department of Mathematics and Statistics \\ Auburn University}
  \maketitle
} \fi

\if0\anon
{
  \bigskip
  \bigskip
  \bigskip
  \begin{center}
    {\LARGE\bf Fingerprint Analysis for Climate Change Detection and Attribution under a Latent Factor Model\par}
\end{center}
  \medskip
} \fi

\bigskip
\begin{abstract}
Detection and attribution (D\&A) analyses provide a statistical
framework for quantifying the contribution of external forcings to
observed climate change. Optimal fingerprinting, the primary approach
for D\&A, is formulated as an errors-in-variables regression with a
high-dimensional covariance structure. Reliable inference is challenging
because covariance matrices must be estimated from a limited number of
control simulations, and climate models may exhibit variability patterns
that differ from those of the observed climate system.
We develop a spiked optimal fingerprinting framework that exploits the
spiked structure of dominant climate variability modes while providing
stable covariance estimation in high-dimensional settings. The proposed
method develops bias-corrected spiked covariance estimation for
constructing adaptive weight matrices and incorporates variability
inflation adjustments to account for model--observation differences in
internal variability. We further develop a valid uncertainty 
quantification procedure for the scaling factor estimators and 
a residual consistency test for assessing model adequacy.
Numerical studies demonstrate improved estimation accuracy and
uncertainty quantification compared with existing practical approaches.
Applied to annual mean near-surface temperature data, the proposed
framework produces shorter and better-calibrated confidence intervals
and leads to different detection and attribution conclusions in several
regions, providing new insights into the interpretation of climate
attribution results.
\end{abstract}

\noindent%
{\it Keywords:} Factor models; Measurement error; Random matrix theory; Regularization
\vfill

\vfill

\newpage
\spacingset{1.8} 

\section{Introduction}
\label{sec:introduction}

The assessment reports of the Intergovernmental Panel on Climate
Change (IPCC) have established that more than half of the observed
warming in recent decades is attributable to anthropogenic forcings,
primarily greenhouse gas emissions
\citep{hegerl2007understanding, Bind:etal:dete:2013,
  eyring2021human}. Detection and attribution (D\&A) analyses provide
the statistical basis for these conclusions. The primary framework is
optimal fingerprinting (OF), a multiple linear regression model
\citep{hegerl1996detecting, Alle:Tett:chec:1999,
  Alle:Stot:esti:2003}, in which the observed climate variable is
regressed on fingerprints representing expected responses to external
forcings. The resulting regression coefficients, called scaling
factors, are used to assess detection and attribution. A forcing is
detected when the confidence interval for its scaling factor lies above
zero; if the interval also contains one, the observed response is
considered consistent with the simulated response. 

\textcolor{black}{Unlike standard regression, OF has several distinguishing features. 
First, the observed climate response is a high-dimensional spatiotemporal vector 
with complex dependence characterized by an unknown covariance matrix $\Sigma$ \citep{Alle:Tett:chec:1999, Alle:Stot:esti:2003, Ribe:Plan:Terr:appl:2013}. Second, the fingerprints (predictors) are estimated from climate-model simulations and 
therefore are subject to simulation uncertainty, leading to an errors-in-variables (EIV) framework. 
Third, preindustrial control runs, which represent
internal climate variability under constant forcing, are available for estimating $\Sigma$. 
However, because only a limited number of control runs are available, the resulting sample covariance estimator may be unstable or singular. The traditional OF framework is formulated as follows:
\begin{equation}\label{eq:model_OF}
\begin{split}
\bs{Y}
&= \sum_{i=1}^p {X}_i \beta_i + \bs{\epsilon},
\qquad \bs{\epsilon} \sim \mathcal{N}(0,\Sigma), \\
\bs{\bar{X}}_i &= \frac{1}{n_i}\sum_{j=1}^{n_i} \bs{\tilde{X}}_{ij},  \quad \bs{\tilde{X}}_{ij} = {X}_i+ \bs{\eta}_{ij},
\quad \bs{\eta}_{ij}\sim\mathcal{N}(0,\Sigma),
\quad j=1,\dots, n_i, ~ i=1,\ldots,p, \\
\bs{Z}_j
&\stackrel{\mathrm{iid}}{\sim}\mathcal{N}(0,\Sigma),
\qquad j=1,\ldots,m,
\end{split}
\end{equation}
where $\bs{Y}\in\mathbb{R}^N$ denotes the observed climate response; ${X}_1,\ldots,{X}_p\in\mathbb{R}^N$ are the expected (unknown) responses to external forcings; $\{\bs{\tilde{X}}_{ij},\, j=1,\dots,n_i\}$ is the simulated fingerprint for the $i$th forcing from $n_i$ climate-model simulations; $\bs{Z}_1,\ldots,\bs{Z}_m$ are preindustrial control simulations used to estimate $\Sigma$. Under the classical OF formulation, the fingerprint simulation errors
and internal variability in the observed response and control
simulations are assumed to share a common covariance structure and are mutually independent. 
See \citet{chen2024statistical} and \citet{chen2026valid} for statistical reviews of this framework.}

\textcolor{black}{Consistent estimation and valid confidence intervals for the scaling factors $\beta_j$ are essential to D\&A analyses, but estimating $\Sigma$ from a limited number of control runs remains challenging. Early approaches projected the data onto a truncated basis of empirical orthogonal functions \citep{hegerl1996detecting, Alle:Tett:chec:1999}, potentially discarding important spatial structure. Subsequent methods have employed linear shrinkage \citep{ledoit2004well, ribes2009adaptation, li2025regularized}, Bayesian model averaging \citep{katzfuss2017bayesian}, integrated likelihood \citep{hannart2016integrated}, structural assumptions on $\Sigma$ \citep{ma2023optimal}, or localized dependence \citep{chen2026valid}. These methods may be computationally intensive, sensitive to prior choices, or limited by restrictive assumptions or suboptimal weighting. Moreover, failure to account for high-dimensional effects can produce overly narrow confidence intervals \citep{fuller1980properties, pesta2012total, delsole2019, li2021confidence, li2023regularized}; see \citet{li2025regularized} and \citet{chen2026valid} for recent calibration methods.}

In summary, existing methods for constructing valid confidence
intervals either impose strong structural assumptions on the covariance
matrix or restrict the weight matrix to linear shrinkage. Climate
responses, however, may exhibit complex and nonlocal dependence,
including long-range dependence and teleconnections across widely
separated regions, which may not be adequately captured by
localization-based approaches. At the same time, climate variability
often exhibits a limited number of dominant large-scale modes, such as
ENSO-related variability and other low-frequency coupled
ocean--atmosphere patterns. Empirical orthogonal function analyses and
studies of major climate variability modes suggest that a substantial
fraction of the variability in high-dimensional climate fields can be
represented by a relatively small number of leading spatiotemporal
patterns \citep{monahan2009empirical, fasullo2020evaluation}. These
dominant modes correspond to leading eigendirections of the covariance
structure and motivate the spiked covariance representation adopted in
this work. Consistently, the ensemble covariance matrices examined in
Section~\ref{sec:real} also exhibit rapidly decaying spectra with a few
dominant eigenvalues.

Motivated by these features, we propose a new flexible OF framework with three main innovations. First, we adopt
a spiked covariance structure and construct a class of spiked weight
matrices that flexibly weight the dominant eigendirections without
imposing localized dependence assumptions. The spike weights are
selected in a data-driven manner by minimizing the asymptotic MSE of the scaling factor estimator, and we develop
consistent variance estimators and valid confidence intervals. Second,
we allow different variability inflation levels for the observed
response, simulated fingerprints, and control runs, thereby
accommodating model--observation differences in internal variability.
Third, we introduce a residual consistency test for assessing model
adequacy in high-dimensional settings. We establish the theoretical
properties of the proposed method and evaluate its finite-sample and
practical performance through simulations and real-data applications.

The remainder of the paper is organized as follows.
Section~\ref{sec:setup} introduces the proposed framework and the structure of datasets. 
Section~\ref{sec:method} develops the estimation and inference procedures.
Section~\ref{subsec:analysis_selection_theta} illustrates the selection of the optimal weight through toy examples.
Section~\ref{sec:simulation} reports simulation results under a range of settings. 
Section~\ref{sec:real} presents D\&A analyses of annual mean temperature across spatial and temporal domains. Section~\ref{sec:disc} concludes with a discussion. Additional details are provided in the Supplementary Materials.
\section{Problem Setup and Data Structure}
\label{sec:setup}

In this section, we extend the OF framework introduced in Section~\ref{sec:introduction} to incorporate latent factors and allow the scale of internal variability to differ between climate-model simulations and the observed climate system.

Recall regression error $\bs{\epsilon}$, the fingerprint noise $\{\bs{\eta}_i\}$, and the control runs $\{\bs{Z}_j\}$.  We assume that the regression error $\bs{\epsilon}$ is decomposed into latent factor components and iid noise:
\begin{equation}
\bs{\epsilon} = \sum_{k=1}^K \sqrt{\ell_k} {U}_k \bs{f}_k^{(y)} +\sigma\bs{e}^{(y)},
\quad\text{with}\quad
\bs{f}_k^{(y)}
\stackrel{\mathrm{iid}}{\sim}\mathcal{N}(0,1),
\quad
\bs{e}^{(y)}\sim\mathcal{N}(0,I_N).
\label{eq:model_y}
\end{equation}
Here, $\bs{f}_k$ represents the $k$th stochastic latent factor, whose effect across spatiotemporal dimension is determined by the deterministic loading vector $U\in\mathbb{R}^N $ and the scalar multiplier $\ell_k>0$. To ensure identifiability, we assume that the loading vectors $U_k$ are orthonormal, so that $U_k^T U_i=1$ if $k=i$ and $U_k^T U_i=0$ otherwise. Under this normalization, the magnitude of the loading effect is determined solely by $\ell_k$, henceforth referred to as the factor strength.


\textcolor{black}{Although it is standard in OF to assume that the fingerprint simulation errors $\bs{\eta}_{ij}$ and control runs $\bs{Z}_j$ share the same covariance structure as the response variability $\bs{\epsilon}$, 
recent climate studies have highlighted potential
mismatches between model-simulated and observed internal variability \citep{jain2023importance, rateb2026structural}. 
Recent work by \citet{li2026adaptable} also provides evidence that model-simulated variability 
tends to be inflated relative to observed variability across different regional scales.
We therefore allow $\bs{\eta}_{ij}$ and $\bs{Z}_j$ to have different variability levels while sharing the same 
latent-factor structure as $\bs{\epsilon}$, with variability multipliers $s_x>0$ and $s_z>0$, respectively. In particular, we assume
the following model:}
\begin{align}
\bs{\eta}_{ij} &= s_x \Big(\sum_{k=1}^K \sqrt{\ell_k} U_k \bs{f}_{ijk}^{(x)} + \sigma \bs{e}^{(x)}_{ij} \Big), \quad j=1,\dots, n_i,\; i=1,\dots, p;\label{eq:model_x} \\
\bs{Z}_j  &= s_z\Big(\sum_{k=1}^K \sqrt{\ell_k} U_k \bs{f}_{jk}^{(z)} + \sigma \bs{e}^{(z)}_j \Big), \quad j=1,\dots, m. \label{eq:model_z}
\end{align}
Here, $\{\bs{f}^{(x)}_{ijk}\}$ and $ \{\bs{f}^{(z)}_{jk}\}$ are iid counterparts of the latent factors $\{\bs{f}_k^{(y)}\}$; $\{\bs{e}_{ij}^{(x)}\}$ and $\{\bs{e}_j^{(z)}\}$ are iid counterparts of the noise $\bs{e}^{(y)}$. 

In what follows, we adopt a covariance-based perspective. Under the latent-factor models \eqref{eq:model_y}--\eqref{eq:model_z}, the covariance matrix of $\bs{Y}$ is
\begin{equation}
\label{eq:spike}
\Sigma= \sum_{k=1}^K \ell_k U_k U_k^T+\sigma^2 I_N= 
U\Lambda U^T+\sigma^2 I_N,
\end{equation}
where $U=(U_1,\dots,U_K)$ and $\Lambda=\operatorname{Diag}(\ell_1,\dots,\ell_K)$. Moreover, $\cov{\bs{\bar{X}}_{i}}=s_x^2n_i^{-1}\Sigma$ and  $\cov{\bs{Z}_{j}}=s_z^2\Sigma$.

The matrix $\Sigma$ is a finite-rank perturbation of a scaled identity matrix, commonly known as the \emph{spiked covariance model} \citep{johnstone2001spiked}. This model has been widely used in applications such as finance, signal processing, and biomedical research; see \citet{JohnstoneP2018} for a comprehensive review. The spectral properties of sample covariance matrices under the spiked covariance model have been extensively studied in the high-dimensional regime $m\asymp N$. We discuss the results relevant to our methodology in Section~\ref{sec:method}.



Figure~\ref{fig:data_structure} illustrates the data structure used in
D\&A analyses through global mean near-surface
temperature anomalies over 1966--2020. The data are aggregated into
11 nonoverlapping five-year means. The observed response $\bs{Y}$,
ensemble mean fingerprints $\bs{\bar X}_1$, $\bs{\bar X}_2$, and
$\bs{\bar X}_3$ corresponding to greenhouse gas (GHG), aerosol (AER),
and natural (NAT) forcings, respectively, are shown together with the
ensemble mean of the control simulations. The control ensemble mean
remains close to zero, consistent with the mean-zero assumption.

The shaded bands represent variability across the forcing simulations,
defined by the 2.5\% and 97.5\% empirical quantiles across ensemble
members. The ensemble sizes for GHG, AER, and NAT are $n_1=43$, $n_2=42$, and $n_3=40$, 
respectively, and the control ensemble contains $m=181$ simulations. This example
corresponds to a single spatial scale represented by the global mean.
In regional D\&A analyses, multiple grid boxes are pooled across space and time to form
high-dimensional spatiotemporal vectors
\citep{Ribe:Plan:Terr:appl:2013}. Their spatial and temporal dependence
then jointly determines the covariance structure.

\begin{figure}[htbp]
  \centering
  \includegraphics[width=4.5in]{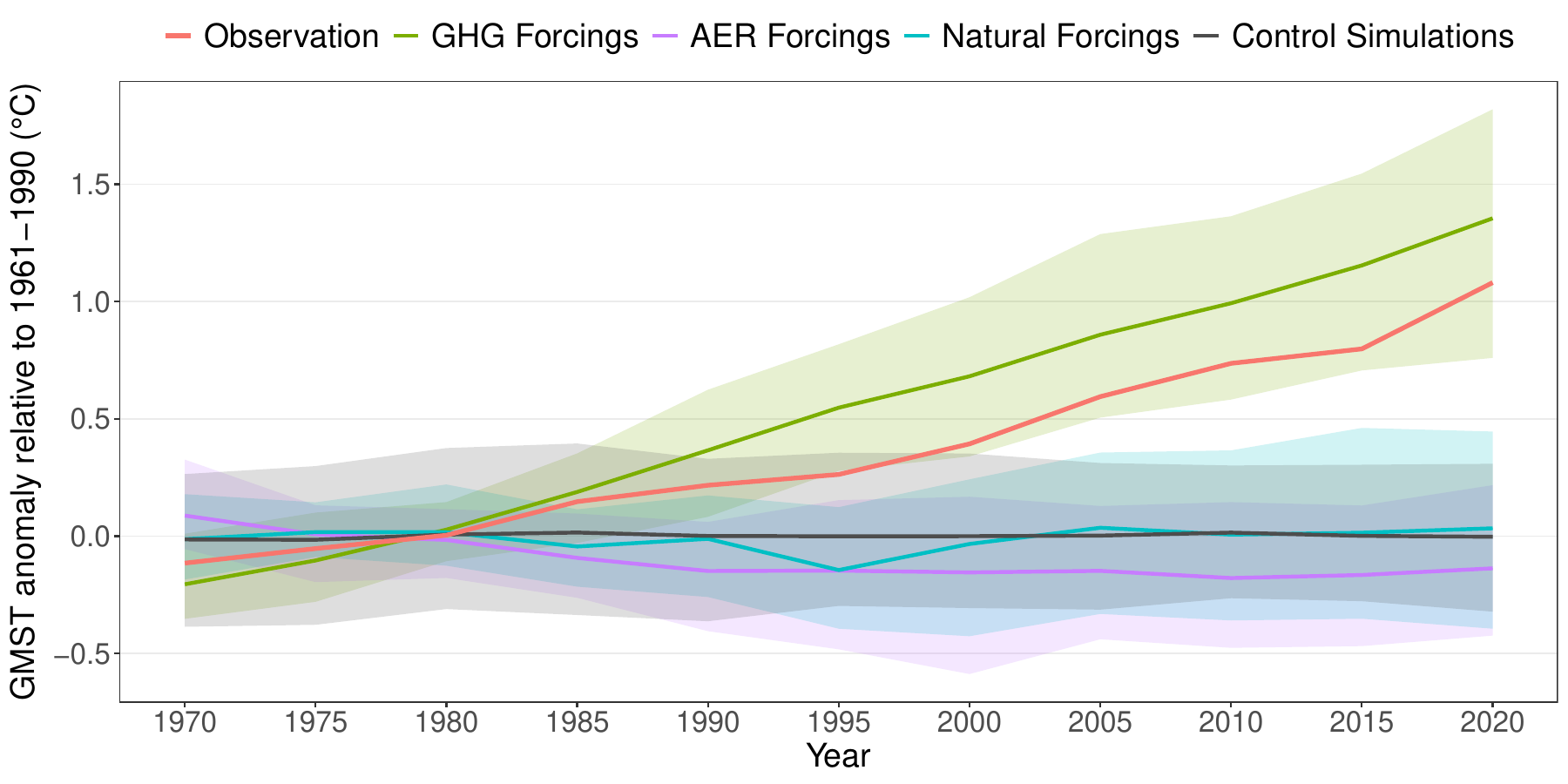}
\caption{Observed global mean temperature anomalies and
  ensemble mean responses to GHG, AER, and NAT forcings over
  1966--2020, together with the ensemble mean of the control
  simulations. Time points are nonoverlapping five-year means labeled
  by their ending years; shaded bands show the 2.5\% and 97.5\%
  empirical quantiles across forcing ensembles.}
  \label{fig:data_structure}
\end{figure}


Figure~\ref{fig:sample_spike} further presents the leading sample
eigenvalues of the covariance matrices estimated from the control
simulations for three representative regions: global, North
America, and eastern North America. These regions represent
global, continental, and subcontinental spatial scales, respectively.
Across the regions, the eigenvalues decay rapidly, with a small
number of dominant eigenvalues separated from the rest.
This pattern suggests that much of the dependence in internal climate
variability is concentrated along a few leading eigendirections,
providing empirical support for the assumed covariance structure.

\begin{figure}[htbp]
  \centering
  \includegraphics[width=5in]{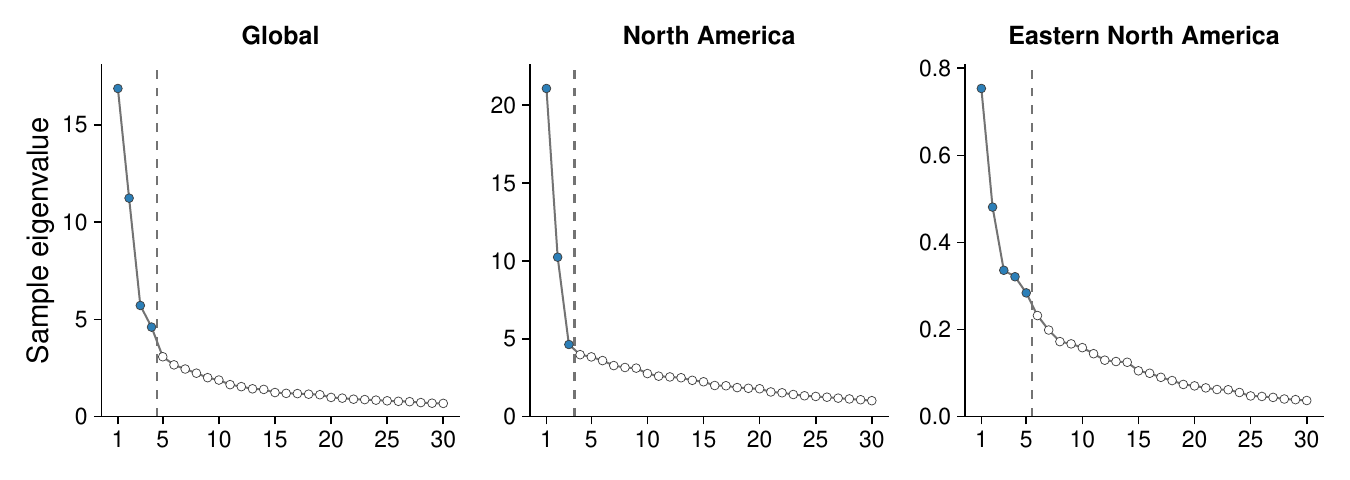}
  \caption{Leading sample eigenvalues of the covariance matrices
    estimated from the control simulations for the global, North
    American, and eastern North American regions.}
  \label{fig:sample_spike}
\end{figure}
\section{Methodology}\label{sec:method}

Under the framework of EIV, the method of \emph{total least squares} (TLS) is one of the broadly used estimation techniques. Denote
\begin{equation}\label{eq:ensemble_means}
X = (X_1,\dots, X_p), \quad \beta = (\beta_1,\dots, \beta_p)^T, \quad \bs{\bar{X}} = (\bs{\bar{X}}_1 ,\dots, \bs{\bar{X}}_p). 
\end{equation}
Further, call $D = \operatorname{Diag}(1/n_1,\dots, 1/n_p)$. Under the model specification, when the variability multiplier $s_x$ is given, the TLS estimator of $\beta$ with a given weight matrix $W$ is obtained as 
\begin{equation}
    \label{eq:TLS}
    \bs{\hat{\beta}} = \bs{\hat{\beta}}(s_x, W) = \arg\min_{\beta \in \mathbb{R}^p} \frac{\| W^{1/2} (\bs{Y} - \bs{\bar{X}} \beta) \|_2^2}{1 + s_x^2 \beta^T D\beta}. 
\end{equation}
The formula is adapted from standard results in the errors-in-variables literature \citep[e.g.,][]{fuller2009measurement}, and the derivation is omitted for brevity. Indeed, $\bs{\hat{\beta}}$ admits a closed-form expression, detailed in Lemma S.1.1 in the Supplementary Material. Ideally, $W$ is proportional to $\Sigma^{-1}$, so that the transformed observations $W^{1/2}\bs{Y}$ and fingerprints $W^{1/2}\bs{\bar{X}}$ have covariance matrices proportional to the identity matrix.

To implement the method, a consistent estimator of $s_x$ and an appropriate empirical choice of $W$ are required. Further, for the purpose of uncertainty quantification, consistent estimators of the factor strengths $\ell_k$, the noise variance $\sigma^2$, and the multiplier $s_z$ are additionally required. 

We assume
throughout that the number of factors $K$ is known. In the numerical
implementation, $K$ is selected using the data-driven procedures of
\citet{kritchman2008determining} and
\citet{passemier2017estimation}; a detailed comparison and practical
recommendations are provided in \citet{li2024testing}. 


Throughout the analysis, we assume the following technical conditions. 

\begin{enumerate}[label ={\bf C\arabic*}]\itemsep0.3em

    \item \label{enum:HD_regime} (Asymptotic regime) While the number of predictors $p$ is fixed, we assume that $N, m \to \infty$ simultaneously such that $\sqrt{N} |N/m- \gamma| \to 0$ for some $\gamma\in (0,\infty)$. In addition, the ensemble sizes $n_i$ are such that $n_i  = O (\log N)$ for $i=1,\dots,p$.  
  
    \item \label{enum:spike_strength} (Significant spikes) As \(N\to\infty\), the number of spikes \(K\), the factor strengths \(\{\ell_k\}\), the noise variance \(\sigma^2\), and the inflation multipliers \(s_x\) and $s_z$ remain fixed constants. Moreover, the spikes are assumed to be detectable in the sense that $\min_{k}{\ell_k}/{\sigma^2}>\sqrt{\gamma}$.

    \item \label{enum:asymptotic_full_rank} (Well-conditioning $X$) 
    {\small $0<\liminf_{N\to\infty} \lambda_{p}(X^\top X/N) \leq \limsup_{N\to\infty} \lambda_{1}(X^\top X/N)<\infty$}. Here, $\lambda_j(\cdot)$ represents the $j$th largest eigenvalue of a symmetric matrix. 
  \end{enumerate}


\subsection{Asymptotic Theory of Empirical Covariance Structure} \label{subsec:asymptotic_spikes}

For convenience, we write the inflated population factor strength for the control runs as $\pi_k = s_z^2 \ell_k$, $k=1,\dots, K$ and write the inflated noise variance as $\sigma_z^2 = s_z^2 \sigma^2$. 
In the context of D\&A analysis, the covariance of observational variability is commonly estimated by the sample covariance matrix of the control runs, defined as
\[
\bs{S}
=
\frac{1}{m}\sum_{j=1}^m \bs{Z}_j\bs{Z}_j^T.
\]
Let the ordered eigenvalues of \(\bs{S}\) be $\lambda_1(\bs{S})\geq \lambda_2(\bs{S})\geq \cdots \geq \lambda_N(\bs{S})$, and denote by \(\bs{\widetilde U}_k\) the eigenvector associated with the \(k\)-th largest eigenvalue, \(k=1,\dots,K\).

Under the spiked covariance model, we estimate the bulk noise level by averaging the trailing eigenvalues: $\bs{\tilde{\sigma}}_z^2= (N-K)^{-1}\sum_{j=K+1}^{N}\lambda_j(\bs{S})$. 
Further, define $\bs{\tilde{\pi}}_k = \lambda_k(\bs{S})-\bs{\tilde{\sigma}}_z^2$, for $k=1,\dots,K$.
Then \(\bs{\tilde{\pi}}_k\) is the \emph{maximum likelihood estimator} (MLE) of \(\pi_k\), \(k=1,\dots,K\), and \(\bs{\tilde{\sigma}}_z^2\) is the MLE of \(\sigma^2_z\) \citep{anderson1956statistical,li2024testing}. Consequently, the MLE of the inflated population covariance matrix \(s_z^2\Sigma\) is
\[
\bs{\widetilde{\Sigma}}=\sum_{k=1}^K \bs{\tilde{\pi}}_k\,\bs{\widetilde U}_k\bs{\widetilde U}_k^T+ \bs{\tilde{\sigma}}_z^2 I_N.
\]

Under the classical regime where the number of control runs \(m\) is substantially larger than the dimension \(N\), these estimators are consistent. However, when \(m\) is limited relative to \(N\), as in Condition~\ref{enum:HD_regime}, their estimation accuracy deteriorates.

In the statistical literature, the asymptotic behavior of the empirical spikes and noise variance under the regime of Condition~\ref{enum:HD_regime} has been studied extensively. We summarize several key results from \cite{paul2007asymptotics}, \cite{onatski2012asymptotics}, and \cite{passemier2017estimation}, which form the theoretical foundation of the current work.

Define two auxiliary functions 
\begin{align*}
    &\psi(x,\gamma) = x+ \frac{\gamma (x+1) }{x}, \quad x\in (0, \infty);\qquad \zeta(x, \gamma) = \left[ \frac{1- \gamma/x^2}{1+ \gamma/x} \right]^{1/2}, \quad x\in [\sqrt{\gamma}, \infty).
\end{align*}
Since eigenvectors are identifiable only up to a sign change, following the convention, we assume without loss of generality that $\inner{\bs{\widetilde{U}}_k}{U_k} >0$, $k=1,\dots, K$. 

\begin{theorem}
    \label{thm:asymptotics_eigenvalues_eigenvectors}
    Suppose that Model \eqref{eq:model_z} and Conditions \ref{enum:HD_regime}--\ref{enum:spike_strength} hold. 
    \begin{itemize}
        \item[1.] The MLE $\bs{\tilde{\pi}}_k$ is biased upwards in the sense that 
        \begin{equation}\label{eq:bias_spikes}
        \bs{\tilde{\pi}}_k \xrightarrow[]{a.s.} \sigma_z^2 \psi\Big(\pi_k/\sigma_z^2, \gamma\Big) = \pi_k +\gamma \frac{\pi_k + \sigma_z^2}{\pi_k/\sigma_z^2}, \quad k=1,\dots, K.
        \end{equation}
        \item[2.] The angle between an empirical spiked eigenvector $\bs{\widetilde{U}}_k$ and a population spiked eigenvector $U_d$ satisfies
        \[ |\inner{\bs{\widetilde{U}}_{k} }{ U_{d} }|  \xrightarrow{a.s.} \delta_{kd} \zeta\Big(\pi_k/\sigma_z^2, \gamma\Big), \qquad  k,d \in \{1,\dots,K\} ,\]
        where $\delta_{kd} =1$ if $k=d$; $0$ otherwise. 
        \item[3.] The following decomposition of the spiked eigenvector $\bs{\widetilde{U}}_k$, $k=1,\dots, K$ holds. 
        \[ \bs{\widetilde{U}}_k = \bs{\omega}_k U_k + \frac{1}{\sqrt{N}} \bs{\omega}_k U \bs{q}_k + \sqrt{1- \bs{\tilde{\omega}}_k^2} U_{\perp} \bs{s}_k, \quad \mbox{where} \]
    \begin{itemize}
        \item[(i)] $\bs{\omega}_k = \zeta\big(\pi_k/\sigma_z^2, \gamma\big) + O_p({1}/{\sqrt{N}})$. 
        \item[(ii)] $\bs{q}_k \in \mb{R}^K$ is a random vector such that  $\bs{q}_k \stackrel{d}{\longrightarrow} \mc{N}(0, A_k)$, where the expression of the covariance matrix $A_k$ is not of direct interest in our context. See \cite{paul2007asymptotics} for the details on $A_k$. 
         
        \item[(iii)] $\bs{\tilde{\omega}}_k$ is such that 
        \[ \bs{\tilde{\omega}}_k^2 = \bs{\omega}_k^2 \left( 1+ \frac{2}{\sqrt{N}} \bs{q}_{kk} + \frac{1}{m} \|\bs{q}_k \|^2_2 \right),\]
        where $\bs{q}_{kk}$ is the $k$th element of $\bs{q}_k$ and $\|\cdot\|_2$ is the vector Euclidean norm. 
        \item[(iv)] $U_{\perp}$ denotes an orthonormal basis for the orthogonal complement of the column space of $U$, and $\bs{s}_k$ is a $(N-K)$-variate random vector following the uniform distribution on the unit $(N-K-1)$-sphere. 
    \end{itemize}
        \item[4.] The empirical noise variance $\bs{\tilde{\sigma}}_x^2$ is biased downward as 
        \[ \frac{N-K}{\sigma_z^2 \sqrt{2\gamma}} \left(\bs{\tilde{\sigma}}_z^2 - \sigma_z^2 \right)  + b(\sigma_z^2) \stackrel{d}{\longrightarrow}N(0,1), \quad \mbox{where} \quad b(\sigma_z^2) = \sqrt{(\gamma/2)} \Big(K + \sum_{k=1}^K \frac{\sigma_z^2}{\pi_k} \Big).\]
    \end{itemize}
\end{theorem}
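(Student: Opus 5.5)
The plan is to reduce every claim to the standard spiked-covariance results of \cite{paul2007asymptotics}, \cite{onatski2012asymptotics} and \cite{passemier2017estimation}, after a rescaling step. By \eqref{eq:model_z}, $\bs{Z}_j \sim \mathcal{N}(0, s_z^2\Sigma)$ with $s_z^2\Sigma = U\operatorname{Diag}(\pi_1,\dots,\pi_K)U^T + \sigma_z^2 I_N$. Dividing by $\sigma_z$, the matrix $\bs{S}/\sigma_z^2$ is the sample covariance of $m$ Gaussian vectors with covariance $I_N + \sum_k (\pi_k/\sigma_z^2) U_kU_k^T$. This is exactly Johnstone's spiked model with spikes $\pi_k/\sigma_z^2$. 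Condition~\ref{enum:spike_strength} places these spikes above the BBP threshold $\sqrt{\gamma}$, since $\pi_k/\sigma_z^2 = \ell_k/\sigma^2$. Condition~\ref{enum:HD_regime} gives $N/m \to \gamma$, and the rate $\sqrt N|N/m-\gamma|\to 0$ lets us replace $N/m$ by $\gamma$ in all $\sqrt N$-scale limits. Eigenvectors are unchanged by the rescaling, and eigenvalues scale by $\sigma_z^2$, so it suffices to prove everything for $\sigma_z^2=1$ and multiply back.

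\textbf{Parts 1 and 2.} First, recall from \cite{paul2007asymptotics} (Theorems 1--2) that, for a supercritical spike $x = \pi_k/\sigma_z^2$, $\lambda_k(\bs{S})/\sigma_z^2 \to x+1+\gamma(x+1)/x$ almost surely. In the same regime, $\langle \bs{\widetilde U}_k, U_k\rangle^2 \to (1-\gamma/x^2)/(1+\gamma/x)$, and $\langle \bs{\widetilde U}_k, U_d\rangle \to 0$ for $d\neq k$ (assuming distinct spikes). Next, by Weyl interlacing the trailing eigenvalues are sandwiched by those of the null matrix, so $\bs{\tilde\sigma}_z^2 \to \sigma_z^2$ almost surely. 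This uses the Marchenko--Pastur law, whose mean is $1$, and the fact that removing $K$ eigenvalues and normalizing by $N-K$ has negligible effect. Combining the two limits gives $\bs{\tilde\pi}_k \to \sigma_z^2[\psi(x,\gamma)+1] - \sigma_z^2 = \sigma_z^2\psi(x,\gamma)$, which is Part 1. Part 2 follows by taking square roots with the sign convention, which matches $\zeta$.

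\textbf{Part 3.} Decompose $\bs{\widetilde U}_k = P_U \bs{\widetilde U}_k + P_{U_\perp}\bs{\widetilde U}_k$. For the in-span component, I would define $\bs{\omega}_k$ as the normalization in Paul's CLT for eigenvectors (his Theorem 5). That theorem gives $\sqrt N(\langle\bs{\widetilde U}_k,U_d\rangle/\bs\omega_k - \delta_{kd}) \Rightarrow$ Gaussian. Set $\bs q_k$ to be this vector, so that $\bs\omega_k = \zeta + O_p(N^{-1/2})$. For the perpendicular component, Gaussian rotational invariance is the key tool. Any orthogonal $O$ fixing $\operatorname{span}(U)$ leaves the law of $\bs S$ invariant. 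It therefore maps $\bs{\widetilde U}_k$ to an equal-in-law vector with the same $P_U$ part. Hence $P_{U_\perp}\bs{\widetilde U}_k/\|P_{U_\perp}\bs{\widetilde U}_k\|$ is uniform on the sphere and independent of the in-span part. Its length is $\sqrt{1-\|P_U\bs{\widetilde U}_k\|^2}$, and expanding $\|P_U \bs{\widetilde U}_k\|^2 = \bs\omega_k^2\|e_k + \bs q_k/\sqrt N\|^2$ yields the formula in (iii). The $1/m$ versus $1/N$ discrepancy is absorbed via $N/m\to\gamma$, adjusting constants inside $\bs q_k$.

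\textbf{Part 4.} This is the main obstacle, since it requires a second-order (CLT-level) statement for a linear spectral statistic with spikes. The plan is to write $(N-K)\bs{\tilde\sigma}_z^2 = \tr(\bs S) - \sum_{k\le K}\lambda_k(\bs S)$. First, $\tr(\bs S)$ is an average of independent quadratic forms, so its CLT is elementary. Explicitly, $\tr \bs S = \sigma_z^2 N + \sum_k\pi_k + O_p(\sqrt{N/m})$ with variance $2\sigma_z^4\tr((s_z^2\Sigma/\sigma_z^2)^2)/m \approx 2\gamma\sigma_z^4 N/N$ on the appropriate scale. Second, subtracting the spiked eigenvalue limits from Part 1 gives the deterministic shift $-\sum_k \sigma_z^2\gamma(x_k+1)/x_k$ plus $O_p(N^{-1/2})$ fluctuations, which are negligible after dividing by $N$. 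Collecting terms, $(N-K)(\bs{\tilde\sigma}_z^2-\sigma_z^2) = -\sigma_z^2\gamma(K+\sum_k\sigma_z^2/\pi_k) + \sigma_z^2\sqrt{2\gamma}\,\mathcal N(0,1) + o_p(1)$, once the $K\sigma_z^2$ term from $N$ versus $N-K$ is tracked. Dividing by $\sigma_z^2\sqrt{2\gamma}$ gives exactly the bias $b(\sigma_z^2)=\sqrt{\gamma/2}(K+\sum_k\sigma_z^2/\pi_k)$, matching \cite{passemier2017estimation}. The delicate bookkeeping is that the $\sqrt N|N/m-\gamma|\to0$ rate is needed to avoid an extra bias term.
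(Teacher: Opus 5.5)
Your proposal is correct and follows essentially the paper's route: the paper gives no separate proof and presents the theorem as a summary of \citet{paul2007asymptotics}, \citet{onatski2012asymptotics} and \citet{passemier2017estimation}, and your rescaling to the unit-noise spiked model, the rotational-invariance argument for the $U_\perp$ component, and the decomposition $(N-K)\bs{\tilde{\sigma}}_z^2=\tr\bs{S}-\sum_{k\le K}\lambda_k(\bs{S})$ for Part 4 are the standard arguments behind those results. Two small corrections: since $\|\bs{\widetilde{U}}_k\|_2=1$ forces $\bs{\tilde{\omega}}_k^2=\bs{\omega}_k^2(1+2\bs{q}_{kk}/\sqrt{N}+\|\bs{q}_k\|_2^2/N)$ exactly, no rescaling of $\bs{q}_k$ produces the $1/m$ form in (iii), which holds only up to $O_p(N^{-1})$; and the rate $\sqrt{N}|N/m-\gamma|\to 0$ is needed in Part 3 for the $\sqrt{N}$-level centering at $\zeta(\cdot,\gamma)$, whereas Part 4 only requires $N/m\to\gamma$.
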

To further interpret these results, Result 1 shows that the empirical factor strengths exhibit an asymptotic positive bias, that is, \(\bs{\tilde{\pi}}_k-\pi_k>0\) asymptotically. Result 2 shows that the empirical loading directions \(\bs{\widetilde{U}}_k\) are not perfectly aligned with the true loading directions \(U_k\), but instead form a systematic nonzero angle with them. Result 3 further characterizes the structure of \(\bs{\widetilde{U}}_k\). It shows that \(\bs{\widetilde{U}}_k\) can be decomposed into a component along the true loading direction \(U_k\) with approximate mass \(\zeta(\pi_k/\sigma_z^2, \gamma)\), together with a component in the noise eigenspace with approximate mass \(\big(1-\zeta^2(\pi_k/\sigma_z^2,\gamma)\big)^{1/2}\). Its projection onto other loading directions \(U_d\), \(d\neq k\), is asymptotically negligible. Finally, the empirical noise variance \(\bs{\tilde{\sigma}}_z^2\) exhibits a negative asymptotic bias after normalization. 

It is generally challenging to correct the bias in the empirical loading directions \(\bs{\widetilde{U}}_k\) unless additional structural assumptions on the population loadings \(U_k\), such as sparsity, are imposed. By contrast, bias-corrected estimators for the factor strengths and the noise level have been developed in the literature. The procedure, originally proposed in \citet{passemier2017estimation} and \citet{li2024testing}, is summarized in Algorithm~1 in Supplementary Materials.

Denote the bias-corrected estimators from Algorithm~1  as $\bs{\hat{\pi}}_k$ and $\bs{\hat{\sigma}}_z^2$. The following asymptotic results follow from \citet{passemier2017estimation} and \citet{li2024testing}.
\begin{proposition}
    \label{prop:consistency_adjusted_estimator}
    Suppose that Model \eqref{eq:model_z} and Conditions \ref{enum:HD_regime}--\ref{enum:spike_strength} hold. Then, 
    \[\frac{N-K}{\sigma_z^2 \sqrt{2\gamma}}(\bs{\hat{\sigma}}_z^2  - \sigma_z^2) \stackrel{d}{\longrightarrow}N(0,1)\qquad \text{and} \qquad \bs{\hat{\pi}}_k - \pi_k = O_p(1/\sqrt{m}), \quad k =1,\dots, K.\]
\end{proposition}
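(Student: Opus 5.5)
Algorithm~1 is not reproduced in the main text, so I take it to be the standard Passemier--Li--Yao fixed-point correction. Start from $\bs{\tilde{\sigma}}_z^2$ and the top sample eigenvalues. Alternate two updates. The first inverts the spike map: find $\bs{\hat{\pi}}_k$ solving $\lambda_k(\bs{S}) = \bs{\hat{\pi}}_k+\bs{\hat\sigma}_z^2 + \gamma_N\,\bs{\hat\sigma}_z^2(\bs{\hat{\pi}}_k+\bs{\hat\sigma}_z^2)/\bs{\hat{\pi}}_k$, with $\gamma_N=N/m$. The second sets
\[
\bs{\hat\sigma}_z^2=\bs{\tilde{\sigma}}_z^2+\frac{\bs{\hat\sigma}_z^2}{N-K}\Big(K+\sum_{k=1}^K \frac{\bs{\hat\sigma}_z^2}{\bs{\hat{\pi}}_k}\Big)\gamma_N,
\]
which is the plug-in for the bias $b(\sigma_z^2)$ in Theorem~\ref{thm:asymptotics_eigenvalues_eigenvectors}, Result 4. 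The proof then treats the noise level and the spikes in turn, using the delta method around the fixed point.

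\emph{Noise variance.} Result 4 gives $\bs{\tilde{\sigma}}_z^2-\sigma_z^2 = O_p(1/N)$, so $\bs{\tilde{\sigma}}_z^2$ is already consistent at rate $1/N$. Result 1 gives $\lambda_k(\bs{S})\to \sigma_z^2\psi(\pi_k/\sigma_z^2,\gamma)+\sigma_z^2$. Condition~\ref{enum:spike_strength} makes $x\mapsto \psi(x,\gamma)$ strictly increasing for $x>\sqrt\gamma$, so the inversion defining $\bs{\hat\pi}_k$ is well defined and continuous near the truth. Hence $\bs{\hat{\pi}}_k\to\pi_k$ a.s.

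Put these limits into the correction. The correction term equals $\frac{\sigma_z^2\sqrt{2\gamma}}{N-K}\,b(\sigma_z^2)+o_p(1/N)$, because $(N-K)^{-1}\gamma_N\sigma_z^2(K+\sum_k\sigma_z^2/\pi_k) = \frac{\sigma_z^2\sqrt{2\gamma}}{N-K}\sqrt{\gamma/2}(\cdots)$. Condition~\ref{enum:HD_regime} supplies $\sqrt N|\gamma_N-\gamma|\to0$, so $\gamma_N$ may be replaced by $\gamma$. Multiplying by $(N-K)/(\sigma_z^2\sqrt{2\gamma})$ removes the bias $b$ exactly up to $o_p(1)$, and Slutsky yields the $N(0,1)$ limit. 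The remainder terms in the correction, coming from the plugged-in estimates, are $O_p(1/N)\cdot o_p(1)$, so they vanish after the $N$ scaling.

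\emph{Spikes.} Here the key input is the CLT for the outlier eigenvalues in \citet{paul2007asymptotics} and \citet{onatski2012asymptotics}, which is finer than Result 1: $\sqrt m\,\big(\lambda_k(\bs{S})-\sigma_z^2(\psi(\pi_k/\sigma_z^2,\gamma_N)+1)\big)$ is asymptotically normal. The centering uses $\gamma_N$, and the difference from the $\gamma$-centering is $o(1/\sqrt N)$. We also have $\bs{\hat\sigma}_z^2-\sigma_z^2=O_p(1/N)$. The map $(\lambda,\sigma^2)\mapsto\pi$ defined by the inverse of $\psi$ has nonzero derivative $\partial_x\psi=1-\gamma/x^2>0$ on the detectable region, and this is exactly where $\min_k\ell_k/\sigma^2>\sqrt\gamma$ is used. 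The delta method therefore gives $\bs{\hat\pi}_k-\pi_k = O_p(1/\sqrt m)+O_p(1/N)=O_p(1/\sqrt m)$.

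The main obstacle is the circularity of the fixed-point iteration: the $\bs{\hat\sigma}_z^2$ update uses $\bs{\hat\pi}_k$ and vice versa. I would first show that the limiting map is a contraction near $(\pi,\sigma_z^2)$. This should hold because the $\sigma^2$-update has slope $O(1/N)$ in its own argument, so the fixed point exists with probability tending to one and inherits the rates above. If the algorithm uses finitely many iterations instead, the same argument applies step by step: each step preserves consistency, and the $\sigma^2$ error stays at order $O_p(1/N)$ with the correct bias removed.
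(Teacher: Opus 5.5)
Your argument is correct. It is essentially the standard argument behind the results of Passemier, Li and Yao (2017) and Li et al.\ (2024), which the paper cites rather than reproves: plug-in removal of the bias $b(\sigma_z^2)$ plus Slutsky for the noise level, and, for the spikes, inversion of $\psi$ on the supercritical branch combined with the outlier-eigenvalue CLT and the delta method. Note that Algorithm~1 (Supplement~S.1) is not a fixed-point scheme but a two-pass procedure, $\bs{\hat{\sigma}}_z^2=\bs{\tilde{\sigma}}_z^2(1+\hat\gamma b_0/(N-K))$ with $b_0$ built from the first-pass $\bs{\hat{\pi}}_k^{(1)}$, followed by one re-inversion whose truncation at $\sqrt{\hat\gamma}\,\bs{\hat{\sigma}}_z^2$ is inactive with probability tending to one; so your step-by-step branch is the relevant one and the contraction argument is not needed.
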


\subsection{Design of Weight Matrices}\label{subsec:design_weight}
The asymptotic behavior of the empirical covariance structure established in Section~\ref{subsec:asymptotic_spikes} indicates that the empirical loading directions \(\bs{\widetilde{U}}_k\), although biased, still retain substantial information about the population directions \(U_k\), provided that \(N/m\) is not excessively large. Motivated by this observation, we propose the following class of weight matrices:
\begin{equation}
\label{eq:weight_matrix_family}
W_{\Theta}
=
\sum_{k=1}^K \theta_k \bs{\widetilde{U}}_k \bs{\widetilde{U}}_k^T + I_N  = \bs{\widetilde{U}} \Theta \bs{\widetilde{U}}^T  + I_N ,  
\end{equation}
where $\theta_k\in[-1,\infty)$, $k=1,\dots,K$ are regularization parameters. Here, we set \(\Theta=\operatorname{Diag}(\theta_1,\dots,\theta_K)\). Note that $W_{\Theta}$ is guaranteed to be nonnegative definite. 

A special choice of $W_{\Theta}$ is obtained by setting 
$\theta_k= -\bs{\hat{\pi}}_k /( \bs{\hat{\pi}}_k + \bs{\hat{\sigma}}_z^2 )$, where \(\bs{\hat{\pi}}_k\) and \(\bs{\hat{\sigma}}_z^2\) are the bias-corrected estimators of \(\pi_k\) and \(\sigma_z^2\). Under this specification, \(W_{\Theta}\) serves as a naive estimator of \(\sigma^{2}\Sigma^{-1}\). Although this estimator remains biased due to the biased empirical directions \(\bs{\widetilde{U}}_k\), it provides a reasonable initial choice for subsequent analysis. Henceforth, we denote this particular choice of $\Theta$ as $\Theta_0$ and the resulting $W_{\Theta_0}$ as $\widehat{W}$. 

We introduce some objects that play a central role in the proposed procedure. Define 
\[ \Pi = \operatorname{Diag}(\bs{\hat{\pi}}_1, \dots, \bs{\hat{\pi}}_K) \qquad \text{and} \qquad \mc{Z} = \operatorname{Diag}\left( \zeta \Big(\frac{\bs{\hat{\pi}}_1}{\bs{\hat{\sigma}}_z^2}, \hat{\gamma} \Big), \dots, \zeta\Big( \frac{\bs{\hat{\pi}}_K}{\bs{\hat{\sigma}}_z^2}, \hat{\gamma} \Big)   \right). \]
Here, $\hat{\gamma} = N/m$. Let 
\[ \alpha =   \frac{ m\sum_{i=1}^p \sum_{j=1}^{n_i} (\bs{\tilde{X}}_{ij} - \bs{\bar{X}}_i)^T (\bs{\tilde{X}}_{ij} - \bs{\bar{X}}_i)  }  { (\sum_{i=1}^p n_i-p) \sum_{j=1}^m \bs{Z}_j^T \bs{Z}_j} .\]

For any admissible $\Theta$,  define 
{\footnotesize
\begin{align}
&\Delta_{\Theta}  =\frac{1}{N} \tr\Big(\Theta (\Pi \mc{Z}^2 + \bs{\hat{\sigma}}_z^2 I_K )  \Big)  + \frac{1}{N} \tr(\Pi) + \bs{\hat{\sigma}}_z^2; \label{eq:def_Delta}\\
&\Phi_{\Theta} = \frac{1}{N}  \tr \Big(\Theta^2 [\mc{Z}^2 \Pi + \bs{\hat{\sigma}}_z^2 I_K]^2 + 2 \Theta[\mc{Z}^2 \Pi^2 + 2\bs{\hat{\sigma}}_z^2\mc{Z}^2 \Pi + \bs{\hat{\sigma}}_z^4 I_K ] + [\Pi + \bs{\hat{\sigma}}_z^2 I_K]^2 \Big) + (1-\frac{K}{N})\bs{\hat{\sigma}}_z^4. \label{eq:def_Phi}\\
&\Psi_\Theta = \bs{\hat{\sigma}}_z^4 [1+ N^{-1}\tr (2\Theta + \Theta^2) ] + \frac{1}{N} \tr[\Pi^2 (\mc{Z}^2 \Theta + I_K)^2] \\
&\qquad + \bs{\hat{\sigma}}_z^2 N^{-1}\tr[\Pi (\mc{Z}^2 \Theta^2 + 2\Theta + \mc{Z}^{-2})]+\bs{\hat{\sigma}}_z^2 N^{-1} \tr [ \Pi \{ I_K + \mc{Z}^2 (2\Theta + \Theta^2)\}].\\
&\Omega_{\Theta} = \frac{1}{N}\bs{\bar{X}}^T \widetilde{W} \bs{\bar{X}} - \alpha \Psi_\Theta, \quad \text{with }\widetilde{W} = \bs{\widetilde{U}} \mc{Z}^{-2} \Pi( \mc{Z}^{2} \Theta+  I_K)^2\bs{\widetilde{U}}^T + \bs{\hat{\sigma}}_z^2 W_\Theta^2  .\label{eq:def_Omega}\\
& G_\Theta = \frac{1}{N}\bs{\bar{X}}^T W_\Theta \bs{\bar{X}} -  \alpha \Delta_{\Theta} D.\label{eq:def_G}
\end{align}}
Note that these functions can be computed efficiently from $\{\bs{\tilde{X}}_{ij}\}$ and $\{\bs{Z}_j\}$ for any $\Theta$.

\begin{lemma}
    \label{lemma:consistency_Delta_Phi}
    Suppose the Models \eqref{eq:model_OF}--\eqref{eq:model_z} and Conditions \ref{enum:HD_regime}--\ref{enum:asymptotic_full_rank} hold. Then, 
    for any fixed $\Theta$ such that $\theta_k \geq  -1$, we have 
    \[ \Delta_{\Theta}  -\frac{s_z^2}{N} \tr[ W_\Theta \Sigma ] = O_p(1/\sqrt{N}), \qquad
     \Phi_{\Theta} -\frac{s_z^4}{N} \tr[ W_\Theta \Sigma W_\Theta \Sigma ] =o_p(1),\]
     \[ \Big\|G_\Theta -\frac{1}{N}X^T W_\Theta X\Big\|_2 = o_p(1), \qquad \Big\|\Omega_{\Theta} - \frac{s_z^2}{N} X^T W_\Theta \Sigma W_\Theta X\Big\|_2 =o_p(1).\]
\end{lemma}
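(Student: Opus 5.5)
The plan is to reduce each claim to explicit population trace formulas. First I expand $\tr[W_\Theta\Sigma]$ and $\tr[W_\Theta\Sigma W_\Theta\Sigma]$ using Theorem~\ref{thm:asymptotics_eigenvalues_eigenvectors}. Then I replace $\pi_k,\sigma_z^2,\zeta(\pi_k/\sigma_z^2,\gamma)$ by their bias-corrected plug-ins via Proposition~\ref{prop:consistency_adjusted_estimator}, continuity of $\zeta$, and $\hat\gamma-\gamma=o(N^{-1/2})$ from Condition~\ref{enum:HD_regime}.

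\textbf{The two trace quantities.} Write $s_z^2\Sigma=U\,\mathrm{Diag}(\pi_k)U^T+\sigma_z^2I_N$. Then
\[
s_z^2\tr[W_\Theta\Sigma]=\sum_k\theta_k\Big(\sum_d\pi_d\inner{\bs{\widetilde U}_k}{U_d}^2+\sigma_z^2\Big)+\sum_k\pi_k+N\sigma_z^2 .
\]
By Result~3 of Theorem~\ref{thm:asymptotics_eigenvalues_eigenvectors}, $\inner{\bs{\widetilde U}_k}{U_d}^2=\delta_{kd}\zeta_k^2+O_p(N^{-1/2})$. Dividing by $N$, every term except $\sigma_z^2$ is $O_p(1/N)$. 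Since $\bs{\hat\sigma}_z^2-\sigma_z^2=O_p(1/N)$, the difference from $\Delta_\Theta$ is in fact $O_p(1/N)$, which is stronger than the stated $O_p(N^{-1/2})$.

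For $\Phi_\Theta$, expand $W_\Theta s_z^2\Sigma W_\Theta s_z^2\Sigma$ and use $\bs{\widetilde U}^T\bs{\widetilde U}=I_K$. The $N$-order term is $\sigma_z^4\tr(I_N-UU^T)$ plus finitely many $K\times K$ traces involving $\bs{\widetilde U}^TU$. These converge to $\mc Z$-expressions, and matching them with \eqref{eq:def_Phi} is bookkeeping. Since only $o_p(1)$ is required after dividing by $N$, only the $(1-K/N)\bs{\hat\sigma}_z^4$ term really matters: the $K$-dimensional traces are $O_p(1/N)$.

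\textbf{The $X$-quadratic forms.} Decompose $\bs{\bar X}=X+\bs{\bar\eta}$ with $\bs{\bar\eta}_i\sim\mc N(0,s_x^2n_i^{-1}\Sigma)$, independent of $\{\bs Z_j\}$ and hence of $W_\Theta$. Conditionally on the control runs:
\begin{itemize}
\item $N^{-1}X^TW_\Theta\bs{\bar\eta}$ has conditional variance $O(N^{-1}\|W_\Theta\|^2\|\Sigma\|)=O(1/N)$, using Condition~\ref{enum:asymptotic_full_rank}, so it is $O_p(N^{-1/2})$.
\item $N^{-1}\bs{\bar\eta}^TW_\Theta\bs{\bar\eta}$ concentrates around $s_x^2N^{-1}\tr(W_\Theta\Sigma)D$, with fluctuations $O_p(N^{-1/2})$ by a Hanson–Wright bound.
\end{itemize}
Next, show $\alpha\to s_x^2/s_z^2$ with $O_p(N^{-1/2})$ error. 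The numerator is a pooled within-ensemble sum of squares with mean $(\sum n_i-p)s_x^2\tr\Sigma$, and the denominator has mean $m s_z^2\tr\Sigma$, so a law of large numbers for chi-square-type quadratic forms gives the rate. Combining these with the first part yields $\|G_\Theta-N^{-1}X^TW_\Theta X\|_2=o_p(1)$; the $n_i=O(\log N)$ growth is harmless because $D$ is bounded.

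\textbf{The main obstacle: $\Omega_\Theta$.} The difficulty is that $\widetilde W$ involves $\mc Z^{-2}$ and must approximate $s_z^2W_\Theta\Sigma W_\Theta$ acting on $X$, even though $\bs{\widetilde U}$ is biased. I would expand
\[
s_z^2W_\Theta\Sigma W_\Theta=\sigma_z^2W_\Theta^2+W_\Theta U\,\mathrm{Diag}(\pi)\,U^TW_\Theta
\]
and handle the two pieces separately.
\begin{itemize}
\item The first piece matches $\bs{\hat\sigma}_z^2W_\Theta^2$ up to $O_p(1/N)$.
\item For the second piece, I would use Result~3 to write $U_k\approx\zeta_k^{-1}\bs{\widetilde U}_k-\zeta_k^{-1}\sqrt{1-\zeta_k^2}\,U_\perp\bs s_k$. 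The key claim is that $X^TU_\perp\bs s_k/\sqrt N\to0$: $\bs s_k$ is uniform on the sphere independently of the deterministic $X$, so $\inner{X_i}{U_\perp\bs s_k}=O_p(\|X_i\|/\sqrt N)=O_p(1)$, which is $o_p(\sqrt N)$.
\item Consequently, for the $X$-projections, $U_k$ behaves like $\zeta_k^{-1}\bs{\widetilde U}_k$. Also $W_\Theta\bs{\widetilde U}_k=(1+\theta_k)\bs{\widetilde U}_k$.
\end{itemize}
This turns $N^{-1}X^TW_\Theta U\mathrm{Diag}(\pi)U^TW_\Theta X$ into $N^{-1}X^T\bs{\widetilde U}\,\mc Z^{-2}\Pi(\ldots)^2\bs{\widetilde U}^TX$ plus $o_p(1)$. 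The cross terms between $\bs{\widetilde U}$ and the $U_\perp$ components need care, since $\bs{\widetilde U}_k$ is itself correlated with $\bs s_k$.

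Finally, replace $X$ by $\bs{\bar X}$. The noise contribution $N^{-1}\bs{\bar\eta}^T\widetilde W\bs{\bar\eta}$ concentrates at $s_x^2N^{-1}\tr(\widetilde W\Sigma)D$. I would compute this trace as in the first part and check that it equals $s_z^{-2}\Psi_\Theta$ to $o_p(1)$, which justifies the subtraction $\alpha\Psi_\Theta$. Verifying this matching of $\Psi_\Theta$'s terms, together with the eigenvector cross terms above, is the step I expect to be the hardest.
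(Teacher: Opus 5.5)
Your proposal is correct and follows essentially the paper's own argument: trace expansion via Theorem~\ref{thm:asymptotics_eigenvalues_eigenvectors} and the bias-corrected plug-ins for $\Delta_\Theta$ and $\Phi_\Theta$; the decomposition $\bs{\bar X}=X+\text{noise}$ with quadratic-form concentration and consistency of $\alpha$ for $G_\Theta$; and, for $\Omega_\Theta$, the Result~3 decomposition in which $N^{-1/2}X^TU_\perp\bs{s}_k=o_p(1)$ lets $U^TX$ be replaced by $\mc{Z}^{-1}\bs{\widetilde U}^TX$. The steps you flag as hardest are in fact routine: the cross terms are products of fixed-dimensional $O_p(1)$ and $o_p(1)$ matrices, so the dependence between $\bs{\widetilde U}_k$ and $\bs{s}_k$ is irrelevant, and $\Psi_\Theta$, like $\Phi_\Theta$, agrees with $s_z^2N^{-1}\tr(\widetilde W\Sigma)$ up to $O_p(1/N)$ through its leading term alone. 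One caveat: your computation correctly gives the noise contribution as $\alpha\Psi_\Theta D$, not $\alpha\Psi_\Theta$, so for the fourth claim to hold the subtraction in $\Omega_\Theta$ must be read as $\alpha\Psi_\Theta D$ (as in $G_\Theta$). The paper's definition and proof drop this factor of $D$ in exactly the same way.
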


\subsection{Estimation of Variability Inflation Multipliers}\label{subsec:estimation_s}

To estimate the inflation multipliers $s_x$ and $s_z$, we consider a
user-specified interval $[0,\Upsilon]$, where $\Upsilon>0$ is a
sufficiently large upper bound. The subsequent analysis is insensitive
to the choice of $\Upsilon$ provided that the interval contains the true
value of $s_x$. 

Suppose a weight matrix $W = W_\Theta$ is given and consider the TLS estimator $\bs{\hat{\beta}}(s, W)$ when $s\in [0, \Upsilon]$. A candidate choice of $W$ is $\widehat{W}$.
 Define the loss function 
\begin{equation}
    \label{eq:loss_estimation_s}
   L(s,W)= \left|\frac{s^2}{N} \frac{\big\|W^{1/2} (\bs{Y} -\bs{\bar{X}} \bs{\hat{\beta}}(s,W) ) \big\|_2^2  }{1+ s^2 \big(\bs{\hat{\beta}}(s,W)\big)^T D \bs{\hat{\beta}}(s,W)}   - \alpha \Delta_{\Theta} \right|.
\end{equation}
We estimate $s_x$ and $s_z$ by 
\begin{equation}
    \label{eq:def_s_hat}
    \bs{\hat{s}}_x(W) = \arg\min_{s\in [0,\Upsilon]} L(s,W), \qquad \text{and} \qquad \bs{\hat{s}}_z(W) = \bs{\hat{s}}_x(W)/\sqrt{\alpha}.
\end{equation}

\begin{theorem}
    \label{thm:consistency_s_hat}
    Suppose that the Models \eqref{eq:model_OF}--\eqref{eq:model_z} and Conditions \ref{enum:HD_regime}--\ref{enum:asymptotic_full_rank} hold. Assume that $\Upsilon > s_x$. Then, for any admissible $\Theta$ as designed in Section \ref{subsec:design_weight} and $W = W_\Theta$, as $N\to\infty$, we have  $\bs{\hat{s}}_x(W_\Theta) - s_x = O_p(1/\sqrt{N})$ and $\bs{\hat{s}}_z(W_\Theta) - s_z  = O_p(1/\sqrt{N})$.
\end{theorem}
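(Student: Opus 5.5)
The argument characterises the minimised TLS objective as a function of $s$, identifies its deterministic limit, and shows that the estimating equation $L(s,W)=0$ pins down $s=s_x$ at rate $N^{-1/2}$. Write $W=W_\Theta$. By Lemma S.1.1, the minimum of the criterion in \eqref{eq:TLS} is
\[
Q(s):=\min_\beta \frac{\|W^{1/2}(\bs{Y}-\bs{\bar X}\beta)\|_2^2}{1+s^2\beta^TD\beta}=\lambda_{\min}\big(M(s)\big),
\]
the smallest eigenvalue of the $(p+1)\times(p+1)$ pencil built from $N^{-1}[\bs{Y},\bs{\bar X}]^TW[\bs{Y},\bs{\bar X}]$ and $\mathrm{Diag}(1,s^2D)$. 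Hence $s^2Q(s)/N$ is exactly the first term inside $L(s,W)$, and the plan is to find its limit.

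\textbf{Step 1: target.} Since $\bs{\epsilon}$, the $\bs{\eta}_{ij}$ and $\bs{Z}_j$ are independent of $\bs{\widetilde U}$ (which is a function of $\bs{Z}$ only), conditioning on $\bs{Z}$ gives
\[
N^{-1}[\bs{Y},\bs{\bar X}]^TW[\bs{Y},\bs{\bar X}]=N^{-1}[X\beta,X]^TW[X\beta,X]+\tau_N\,\mathrm{Diag}(1,s_x^2D)+O_p(N^{-1/2}),
\qquad \tau_N=N^{-1}\tr(W\Sigma).
\]
The error term comes from quadratic forms in Gaussians, whose variance is $O(N^{-1}\tr(W\Sigma W\Sigma)/N)$; this is bounded because $\|W\|\le 1+\max_k|\theta_k|$. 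Cross terms are $O_p(N^{-1/2})$ by Condition~\ref{enum:asymptotic_full_rank}. The signal matrix is singular with null vector $(1,-\beta^T)^T$, and $X^TWX/N$ is well conditioned because $W\succeq (1+\min_k\theta_k)I$ on $\bs{\widetilde U}^\perp$ (handle $\theta_k=-1$ via the rank-$K$ deficiency, which is negligible against rank $p$ projections since $U$ and $X$ are fixed-rank). A first-order eigenvalue perturbation along the null vector then gives
\[
Q(s)=\tau_N\,\frac{1+s_x^2\beta^TD\beta}{1+s^2\beta^TD\beta}\,(1+o_p(1))
\]
uniformly on $[0,\Upsilon]$. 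The $O_p(\log N/\sqrt N)$ distortion from $D$ entries of order $1/\log N$ must be tracked at this point. At $s=s_x$ the leading term is exactly $\tau_N$ up to $O_p(N^{-1/2})$.

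\textbf{Step 2: comparison term.} By Lemma~\ref{lemma:consistency_Delta_Phi}, $\Delta_\Theta=s_z^2\tau_N+O_p(N^{-1/2})$. Also $\alpha$ is a ratio of two chi-square-type traces with means $s_x^2\tr\Sigma$ and $s_z^2\tr\Sigma$, each with relative error $O_p(N^{-1/2})$, so $\alpha=s_x^2/s_z^2+O_p(N^{-1/2})$. Therefore
\[
\alpha\Delta_\Theta=s_x^2\tau_N+O_p(N^{-1/2}),
\]
and $L(s,W)=\big|g_N(s)-g_N(s_x)\big|+O_p(N^{-1/2})$ uniformly, where $g_N(s)=\tau_N(1+s_x^2b)\,s^2/(1+s^2b)$ with $b=\beta^TD\beta$.

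\textbf{Step 3: identification and rate.} The function $g_N$ is strictly increasing in $s$, with derivative $2s\tau_N(1+s_x^2b)/(1+s^2b)^2$, which is bounded below near $s_x>0$ because $\tau_N\ge \sigma^2(1-K/N)\cdot\min(1,1+\min\theta)$ is bounded away from $0$. Since $\hat s_x$ minimises $L$ and $L(s_x,W)=O_p(N^{-1/2})$, we get $|g_N(\hat s_x)-g_N(s_x)|=O_p(N^{-1/2})$. Strict monotonicity then gives $\hat s_x-s_x=O_p(N^{-1/2})$, and the delta method applied to $\hat s_z=\hat s_x/\sqrt\alpha$ gives the same rate for $s_z$.

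\textbf{Main obstacle.} The hardest step is making the eigenvalue expansion in Step 1 uniform in $s\in[0,\Upsilon]$ with an $O_p(N^{-1/2})$ remainder rather than $o_p(1)$. To handle it, I would write $Q(s)$ via the explicit characteristic equation $\det\big(M_{\text{obs}}-Q\,\mathrm{Diag}(1,s^2D)\big)=0$ and apply the implicit function theorem jointly in $(s,\text{perturbation})$. The perturbation is $O_p(N^{-1/2})$, and the gap to the other roots is bounded below by $\lambda_p(X^TWX/N)$ minus a bounded constant. The remaining gap in this plan is that I have only shown consistency of $G_\Theta$, not the $N^{-1/2}$ rate, so the argument may need to use the fact that the signal part $\beta$ enters only through the null direction.
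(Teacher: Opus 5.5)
Your overall plan matches the paper's in substance. You profile out $\beta$ through the smallest eigenvalue $\lambda(s^2)$ of the paper's matrix $A(s^2)$, which is exactly the first term of $L(s,W)$. You match this against $\alpha\Delta_\Theta=s_x^2\tau_N+O_p(N^{-1/2})$, invert an increasing limit near $s_x$, and pass to $\bs{\hat s}_z$ via $\alpha$. The genuine gap is in Step~1, and Steps~2--3 inherit it.

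The expression $\tau_N(1+s_x^2b)/(1+s^2b)$ is just the Rayleigh quotient of the limiting pencil at $(1,-\beta^T)^T$, so it is only an upper bound on the minimum. After normalising by $\operatorname{Diag}(1,s^2D)$, the term you treat as a perturbation becomes $\tau_N\operatorname{Diag}(1,(s_x^2/s^2)I_p)$. This is isotropic only at $s=s_x$. Unless $\min_i n_i\to\infty$, it is of the same order as the spectral gap, so for $s\neq s_x$ it rotates the minimising direction by an $O(1)$ angle; this is exactly why $\bs{\hat\beta}(s)$ is inconsistent for $s\neq s_x$. For instance, take $p=1$, $N^{-1}X^TWX=\tau_N=\beta=s_x=1$, $D=1/2$ and $s=2$. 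The limiting minimum is $(11-\sqrt{57})/8\approx 0.431$, attained near $\beta'\approx 1.57$, not $0.5$. Even if $n_i\to\infty$, the neglected second-order term is of order $\|D\|^2\gg N^{-1/2}$ for $s$ away from $s_x$. Either way, the uniform approximation $L(s,W)=|g_N(s)-g_N(s_x)|+O_p(N^{-1/2})$ in Step~2 is false, and Step~3's bound $|g_N(\bs{\hat s}_x)-g_N(s_x)|=O_p(N^{-1/2})$ does not follow.

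The repair is to keep the limit implicit and use an exact formula only where one holds. Let $\lambda_0(s^2)$ be the smallest eigenvalue of $A_0(s^2)=C_s^T(N^{-1}X^TWX)C_s+\tau_N\operatorname{Diag}(s_x^2I_p,s^2)$, with $C_s=[D^{-1/2},\,s\beta]$.

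\emph{The point $s=s_x$.} Here the noise part is $s_x^2\tau_N I_{p+1}$. The signal part is positive semidefinite with a one-dimensional kernel and nonzero eigenvalues at least $\lambda_p(N^{-1}X^TWX)$. Hence $\lambda_0(s_x^2)=s_x^2\tau_N$ exactly, with an order-one gap. Perturbing in the genuinely small fluctuation $A(s_x^2)-A_0(s_x^2)$ then gives $\lambda(s_x^2)=s_x^2\tau_N+O_p(N^{-1/2})$. No uniform-in-$s$ expansion, implicit-function argument, or rate for $G_\Theta$ is needed.

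\emph{Identification and rate.} For each fixed $\beta'$, the map $s\mapsto s^2R(\beta')/(1+s^2\beta'^TD\beta')$, with $R(\beta')=N^{-1}\|W^{1/2}(\bs{Y}-\bs{\bar X}\beta')\|_2^2$, is strictly increasing. So both $\lambda(s^2)$ and $\lambda_0(s^2)$ are strictly increasing in $s$. Weyl's inequality gives $\sup_{s\le\Upsilon}|\lambda(s^2)-\lambda_0(s^2)|=o_p(1)$, hence consistency. The same monotonicity gives, for $s_1<s_2$, $\lambda(s_2^2)-\lambda(s_1^2)\ge s_2^{-2}\lambda(s_2^2)(s_2^2-s_1^2)/\{1+s_1^2\bs{\hat\beta}(s_2)^TD\bs{\hat\beta}(s_2)\}$. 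Combined with $|\lambda(\bs{\hat s}_x^2)-\lambda(s_x^2)|\le 2L(s_x,W)=O_p(N^{-1/2})$, this yields the rate.

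Your final answer comes out right because $g_N$ matches the true limit in value and slope at $s_x$. Both have derivative $2s_x\tau_N/(1+s_x^2\beta^TD\beta)$ there (for $\lambda_0$, by Hellmann--Feynman). The paper's assertion that the limiting eigenvalue equals $s^2\Delta$ for every $s\in[0,\Upsilon]$ also fails away from $s_x$; its argument, too, really uses only $s=s_x$ and the slope there.

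Finally, when some $\theta_k=-1$, the rank-$K$ part removed from $N^{-1}X^TX$ is not negligible. Invertibility of $N^{-1}X^TWX$ instead follows from $N^{-1}\|\bs{\widetilde U}^Tx\|_2^2\le\max_k\zeta^2(\pi_k/\sigma_z^2,\gamma)\,N^{-1}\|x\|_2^2+o_p(1)$, since $\zeta<1$.
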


\subsection{Selection of Weight Matrices and Confidence Intervals}\label{subsec:CI_select_weights}
In this section, we derive the asymptotic distribution of $\bs{\hat{\beta}}_\Theta =\bs{\hat{\beta}}(\bs{\hat{s}}_x, W_{\Theta})$ for any admissible $\Theta$ as designed in Section \ref{subsec:design_weight}. 

\begin{theorem}
    \label{thm:normality}
Suppose that the Models \eqref{eq:model_OF}--\eqref{eq:model_z} and Conditions \ref{enum:HD_regime}--\ref{enum:asymptotic_full_rank} hold. Let $\Theta$ be as designed in Section \ref{subsec:design_weight} and consider $\bs{\hat{\beta}}_\Theta = \bs{\hat{\beta}}(\bs{\hat{s}}_x, W_\Theta)$. Then, as $N\to\infty$, 
\[  \sqrt{N} \Xi_\Theta^{-1/2} \Big(\bs{\hat{\beta}}_\Theta - \beta \Big) \stackrel{d}{\longrightarrow} N(0, I_p),\]
where $\Xi^{-1/2}_\Theta$ denotes the symmetric inverse square root of the matrix $\Xi_\Theta$ given as 
\[ \Xi_\Theta =  \bs{\hat{s}}_z^{-2} (1+\bs{\hat{s}}_x^2 \bs{\hat{\beta}}^T_\Theta D \bs{\hat{\beta}}_\Theta )G^{-1}_{\Theta} \Big\{ \Omega_{\Theta} + \alpha \Phi_{\Theta} (D^{-1} + \bs{\hat{s}}_x^2 \bs{\hat{\beta}}_{\Theta}\bs{\hat{\beta}}^T_\Theta)^{-1}  \Big\} G^{-1}_{\Theta}. \]
\end{theorem}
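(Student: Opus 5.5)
The proof treats $\bs{\hat{\beta}}_\Theta$ as an M-estimator conditional on the control runs $\{\bs{Z}_j\}$. We first freeze $W=W_\Theta$, which depends only on $\{\bs{Z}_j\}$ and is therefore independent of $\bs{Y}$ and $\{\bs{\tilde X}_{ij}\}$. We take the true $s_x$ in place of $\bs{\hat{s}}_x$ and handle the plug-in at the end.

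Write $\bs{\bar X}=X+\bar{\bs{\eta}}$, where $\bar{\bs{\eta}}=(\bar{\bs{\eta}}_1,\dots,\bar{\bs{\eta}}_p)$. Write $\bs{Y}-\bs{\bar X}\beta=\bs{\epsilon}-\bar{\bs{\eta}}\beta=:\bs{r}$. Conditionally on $W$, $\bs{r}\sim\mathcal N\bigl(0,(1+s_x^2\beta^TD\beta)\Sigma\bigr)$.

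Differentiating the TLS objective in \eqref{eq:TLS} gives the estimating equation
\[
\bs{\bar X}^TW(\bs{Y}-\bs{\bar X}b)\,(1+s_x^2b^TDb)+s_x^2\|W^{1/2}(\bs{Y}-\bs{\bar X}b)\|_2^2\,Db=0 .
\]
Evaluated at $b=\beta$ and divided by $N$, the score is
\[
N^{-1}\Bigl\{(1+s_x^2\beta^TD\beta)\bigl(X^TW\bs{r}+\bar{\bs{\eta}}^TW\bs{r}\bigr)+s_x^2\,\bs{r}^TW\bs{r}\,D\beta\Bigr\}.
\]
Its conditional mean vanishes asymptotically. The term $E[\bar{\bs{\eta}}^TW\bs{r}]=-s_x^2D\beta\,\tr(W\Sigma)$ cancels against $s_x^2D\beta\,(1+s_x^2\beta^TD\beta)\tr(W\Sigma)$ after the normalization; this is the usual TLS bias correction.

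For the Jacobian, $N^{-1}\bs{\bar X}^TW\bs{\bar X}-s_x^2N^{-1}\tr(W\Sigma)D$ converges to $N^{-1}X^TWX$. This is exactly what $G_\Theta$ estimates by Lemma~\ref{lemma:consistency_Delta_Phi}, using $\alpha\Delta_\Theta\approx s_x^2 N^{-1}\tr(W\Sigma)$; the identity $\alpha\to s_x^2/s_z^2$ follows from a law of large numbers for the quadratic traces. By Condition~\ref{enum:asymptotic_full_rank}, and since $W_\Theta$ has eigenvalues bounded away from zero except possibly in $K$ directions, $G_\Theta$ has a nonsingular limit. A Taylor expansion then gives
\[
\sqrt N(\bs{\hat\beta}-\beta)=(1+s_x^2\beta^TD\beta)^{-1}\,(N^{-1}X^TWX)^{-1}\,\sqrt N\cdot\text{score}+o_p(1).
\]
Consistency of $\bs{\hat\beta}$, needed to justify the expansion, follows from uniform convergence of the normalized objective using the same lemma.

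For the CLT of the score, condition on $W$. The score is a linear term $X^TW\bs{r}$ plus centered quadratic forms in the Gaussian vectors $\bs\epsilon,\bar{\bs\eta}_i$, so we apply a CLT for linear-quadratic forms. This can be done via a martingale CLT or by diagonalizing $W^{1/2}\Sigma W^{1/2}$ and using Lyapunov's condition. It needs $\|W\Sigma\|_{op}^2=o(\tr(W\Sigma W\Sigma))$, which holds because the spectra are bounded and $\tr\asymp N$.

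We then compute the conditional covariance. The linear part contributes $(1+s_x^2\beta^TD\beta)\,X^TW\Sigma WX$. The quadratic parts contribute a multiple of $\tr(W\Sigma W\Sigma)$ times a $p\times p$ matrix built from $D$ and $\beta$. Using the Sherman--Morrison identity we aim to show this matrix equals $(1+s_x^2\beta^TD\beta)\,s_x^2(D^{-1}+s_x^2\beta\beta^T)^{-1}$ up to the scalings matching $\alpha\Phi_\Theta$. Lemma~\ref{lemma:consistency_Delta_Phi} replaces $N^{-1}s_z^2X^TW\Sigma WX$ by $\Omega_\Theta$ and $N^{-1}s_z^4\tr(W\Sigma W\Sigma)$ by $\Phi_\Theta$. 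The factors $\bs{\hat s}_z^{-2}$ and $\alpha$ reconcile the $s_z$ versus $s_x$ scalings via Theorem~\ref{thm:consistency_s_hat}. Because the conditional limit law does not depend on $W$ beyond these consistent quantities, the unconditional statement follows by dominated convergence of characteristic functions.

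The plug-in step uses Theorem~\ref{thm:consistency_s_hat}: $\bs{\hat s}_x-s_x=O_p(N^{-1/2})$. The map $s\mapsto\bs{\hat\beta}(s,W)$ is smooth, with derivative $\partial_s\bs{\hat\beta}$ that is $O_p(1)$. So the plug-in perturbs $\sqrt N(\bs{\hat\beta}-\beta)$ by $O_p(1)$, which is not negligible a priori, and this is the main obstacle. We plan to show that the derivative of the population estimating equation in $s$ vanishes at $(s_x,\beta)$. Indeed, the dependence on $s$ enters through $s^2(\|W^{1/2}\bs{r}\|^2 D\beta-\ldots)$, whose expectation cancels exactly at the truth, so $\partial_s\bs{\hat\beta}=o_p(1)$. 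If this cancellation fails, the fallback is a joint expansion of $(\bs{\hat s}_x,\bs{\hat\beta})$ from the loss \eqref{eq:loss_estimation_s}, which would add a term to the variance. The stated form of $\Xi_\Theta$ suggests the orthogonality holds. Finally, Slutsky's theorem with $\Xi_\Theta$ converging in probability to a positive definite limit yields the result.
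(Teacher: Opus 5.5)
Up to the replacement of $s_x$ by $\bs{\hat s}_x$, your argument is the paper's. You use the score equation and a Taylor expansion with Jacobian $\approx-(1+s_x^2\beta^TD\beta)N^{-1}X^TWX$. You diagonalize $W^{1/2}\Sigma W^{1/2}$ to get conditionally independent summands, apply Lindeberg, and finish with Lemma~\ref{lemma:consistency_Delta_Phi} and Slutsky. Working in the original rather than the whitened coordinates is cosmetic, and your covariance bookkeeping is correct.

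The gap is the plug-in step, because the orthogonality you invoke is false. Write $\psi(b,s)=N^{-1}\{\bs{\bar X}^TW(\bs Y-\bs{\bar X}b)(1+s^2b^TDb)+s^2\|W^{1/2}(\bs Y-\bs{\bar X}b)\|_2^2\,Db\}$ for your estimating function with a generic $s$. Using $\mathbb{E}[\bs{\bar X}^TW\bs r\mid W]=-s_x^2\tr(W\Sigma)D\beta$ and $\mathbb{E}[\bs r^TW\bs r\mid W]=(1+s_x^2\beta^TD\beta)\tr(W\Sigma)$, one gets $\mathbb{E}[\psi(\beta,s)\mid W]=(s^2-s_x^2)N^{-1}\tr(W\Sigma)\,D\beta$.

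This vanishes at $s=s_x$, but that is only unbiasedness of the score. Its $s$-derivative there is $2s_xN^{-1}\tr(W\Sigma)D\beta\neq0$. In particular, the $s$-dependent piece $s^2N^{-1}\{(\beta^TD\beta)\,\bs{\bar X}^TW\bs r+\bs r^TW\bs r\,D\beta\}$ has expectation $s^2N^{-1}\tr(W\Sigma)D\beta$, not zero. Implicit differentiation then gives $\partial_s\bs{\hat\beta}(s_x)\approx 2s_x(1+s_x^2\beta^TD\beta)^{-1}N^{-1}\tr(W\Sigma)\,(N^{-1}X^TWX)^{-1}D\beta$. This is of order $\|D\|_2$ and is not made $o_p(1)$ by any cancellation. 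Compare Lemma~\ref{lemma:quantify_beta}, where $\beta(s)$ moves with $s$ through $(s^2-s_x^2)\Delta_\Theta D$.

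The missing ingredient is the size of $D$. The paper uses Condition~\ref{enum:HD_regime} in the form $\|D\|_2=\max_i n_i^{-1}=O(1/\log N)$, i.e.\ ensemble sizes growing like $\log N$. By the closed form in Lemma~\ref{lemma:closed_form_TLS}, $\bs{\hat\beta}(\bs{\hat s}_x)-\bs{\hat\beta}(s_x)$ equals $[\lambda(\bs{\hat s}_x^2)-\lambda(s_x^2)]$ times a product of two bounded inverse matrices sandwiching $D$, applied to $N^{-1}\bs{\bar X}^TW\bs Y$. Hence $\sqrt N\{\bs{\hat\beta}(\bs{\hat s}_x)-\bs{\hat\beta}(s_x)\}=O_p(\sqrt N|\bs{\hat s}_x-s_x|)\,O(\|D\|_2)=O_p(1/\log N)$.

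Your fallback would not rescue the stated result. If $D$ did not shrink, the plug-in would add an $O_p(1)$ term $\partial_s\bs{\hat\beta}\cdot\sqrt N(\bs{\hat s}_x-s_x)$ that is correlated with the score. That term would in general change the limiting covariance, and $\Xi_\Theta$ does not account for it. So this step has to be closed with $\|D\|_2\to0$, not with an orthogonality argument.
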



We now address the problem of selecting the regularization parameters $\Theta$. According to Theorem \ref{thm:normality}, the asymptotic total variance of $\bs{\hat{\beta}}_{\Theta}$ is estimated by the trace of $\Xi_\Theta$. We then propose to select the regularization parameters in a data-driven manner as:
\begin{equation}
    \label{eq:def_optimal_Theta}
    \bs{\widehat{\Theta}}  =\arg \min_{\Theta} \tr[\Xi_\Theta], \quad \text{subject to }~\theta_k \geq - 1 ~\text{ for all }k=1,\dots, K. 
\end{equation}

We next discuss the uncertainty quantification. We write $\bs{\hat{\beta}} = \bs{\hat{\beta}} (\bs{\hat{s}},  W_{\bs{\widehat{\Theta}}})$ in the rest of this section. 
In practice, accurate and reliable confidence intervals for $\beta_i$ play a critical role in detection and attribution analysis of climate variables. For each $\beta_i$, $i=1,\dots, p$, a marginal confidence interval at the asymptotic level $(1-\alpha)$ can be constructed as 
\begin{equation}
    \label{eq:CI_marginal}
    \beta_i \in  \Big[\bs{\hat{\beta}}_i -  \frac{z_{1-\alpha/2}}{\sqrt{N}} \Xi_{ii}^{1/2}, ~~\bs{\hat{\beta}}_i + \frac{z_{1-\alpha/2} }{\sqrt{N}} \Xi_{ii}^{1/2} \Big],
\end{equation}
where $z_{1-\alpha/2}$ denotes the upper $(1-\alpha/2)$-quantile of the standard normal distribution, $\bs{\hat{\beta}}_i$ is the $i$th element of $\bs{\hat{\beta}}$, and $\Xi_{ii}$ is the $i$th diagonal element of $\Xi_{\bs{\widehat{\Theta}}}$. To control the family-wise error rate across the marginal intervals, the Bonferroni correction can be applied by adjusting the individual confidence level to $1-\alpha/p$.


\subsection{Residual Diagnostics} \label{subsec:adequacy_test}

Assessing model adequacy is critical in statistical modeling. The problem is particularly challenging in the current context because of the strong correlation in the residuals and the lack of a consistent estimator for the covariance structure. We propose a residual-based diagnostic procedure. 
Specifically, define the residuals of the regression model as 
\begin{equation}\label{eq:residual}
\bs{\hat{\epsilon}} = \bs{Y} - \bs{\bar{X}}\bs{\hat{\beta}}.
\end{equation}

As an initial step, informal diagnostic plots of the residuals can provide useful insights into potential departures from the model specification. To this end, we standardize the residuals to have approximate unit variance by 
\begin{equation}
    \label{eq:std_residual}
    \bs{\hat{\epsilon}}_{\rm std} = \bs{\hat{s}}_z [\operatorname{Diag}(\bs{S})]^{-1/2} \bs{\hat{\epsilon}}.
\end{equation}
Plotting these standardized residuals against their indices, spatial or temporal coordinates, or the observed response $\bs{Y}$ may reveal systematic patterns. Substantial deviation from a pattern centered around zero with constant variance may suggest violations of model assumptions or evidence of model misspecification. 

Secondly, we develop a formal residual consistency test. Let 
\[ \bs{Q}(\bs{\hat{\epsilon}}) = \frac{1}{N} \bs{\hat{\epsilon}}^T W_{\bs{\widehat{\Theta}}} \bs{\hat{\epsilon}}  - [1+ \bs{\hat{s}}^2_x \bs{\hat{\beta}}^T D\bs{\hat{\beta}}] \bs{\hat{s}}_z^{-2} \Delta_{\bs{\widehat{\Theta}}}.\]

\begin{theorem}
    \label{thm:adequacy_test}
    Suppose that the Models \eqref{eq:model_OF}--\eqref{eq:model_z} and Conditions \ref{enum:HD_regime}--\ref{enum:asymptotic_full_rank} hold. We have 
    \[ \frac{\bs{Q}(\bs{\hat{\epsilon}})}{\sqrt{V_Q}}\stackrel{d}{\longrightarrow} N(0,1), \quad \text{where} \quad V_Q = \frac{2}{N\bs{\hat{s}}_z^4 }  (1+ \bs{\hat{s}}_x^2 \bs{\hat{\beta}}^T D \bs{\hat{\beta}})^2 \Phi_{\bs{\widehat{\Theta}}}.\]
\end{theorem}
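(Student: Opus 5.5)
Write $W=W_{\bs{\widehat{\Theta}}}$ and expand the residual around the truth. Since $\bs{Y}=X\beta+\bs{\epsilon}$ and $\bs{\bar X}=X+\bs{\bar\eta}$ with $\bs{\bar\eta}=(\bs{\bar\eta}_1,\dots,\bs{\bar\eta}_p)$, we have
\[
\bs{\hat\epsilon}=\bs{\epsilon}-\bs{\bar\eta}\beta-\bs{\bar X}(\bs{\hat\beta}-\beta).
\]
The leading term is $\bs{u}=\bs{\epsilon}-\bs{\bar\eta}\beta$. Conditionally on the control runs, and hence on $\bs{\widetilde U}$ and $W$, $\bs{u}$ is Gaussian with covariance $(1+s_x^2\beta^TD\beta)\Sigma$, because $\bs{\epsilon}$ and the $\bs{\eta}_{ij}$ are independent of the $\bs{Z}_j$. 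This conditional independence is what makes the argument tractable. The plan is:
\begin{enumerate}
\item[(a)] prove a CLT for the quadratic form $N^{-1}\bs{u}^TW\bs{u}$;
\item[(b)] show that the correction terms involving $\bs{\hat\beta}-\beta$ are $o_p(N^{-1/2})$;
\item[(c)] show that the centering and scaling estimators are accurate at the $o_p(N^{-1/2})$ level and in ratio, respectively.
\end{enumerate}
I will treat $\bs{\widehat\Theta}$ as converging to a fixed admissible $\Theta^\ast$, or argue uniformly over a compact set of $\Theta$, so that Lemma~\ref{lemma:consistency_Delta_Phi} and Theorem~\ref{thm:normality} apply.

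\textbf{Step (a).} Write $c=1+s_x^2\beta^TD\beta$. Conditional on $\bs{Z}$, $\bs{u}^TW\bs{u}$ equals $c\sum_j\lambda_j\chi^2_{1,j}$, where the $\lambda_j$ are the eigenvalues of $\Sigma^{1/2}W\Sigma^{1/2}$. Its conditional mean is $c\tr(W\Sigma)$ and its conditional variance is $2c^2\tr(W\Sigma W\Sigma)$. The matrix $W\Sigma$ is a finite-rank perturbation of $\sigma^2 I$ with bounded entries, since $\theta_k\ge-1$ and $\Theta$ is bounded. Hence $\max_j\lambda_j^2/\sum_j\lambda_j^2=O(1/N)$, and the Lindeberg condition for the weighted $\chi^2$ sum holds, giving
\[
\frac{N^{-1}\bs{u}^TW\bs{u}-cN^{-1}\tr(W\Sigma)}{\sqrt{2c^2N^{-2}\tr(W\Sigma W\Sigma)}}\to N(0,1).
\]
This holds conditionally, and hence unconditionally.

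\textbf{Step (b).} Expanding the quadratic form in $\bs{\hat\epsilon}$ produces two correction terms:
\[
-2N^{-1}\bs{u}^TW\bs{\bar X}(\bs{\hat\beta}-\beta)\quad\text{and}\quad(\bs{\hat\beta}-\beta)^TG(\bs{\hat\beta}-\beta),
\]
where $G=N^{-1}\bs{\bar X}^TW\bs{\bar X}$. By Theorem~\ref{thm:normality}, $\bs{\hat\beta}-\beta=O_p(N^{-1/2})$, so the second term is $O_p(1/N)$.

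The first term needs more care because $N^{-1}\bs{u}^TW\bs{\bar X}$ is not small. It contains $-N^{-1}\beta^T\bs{\bar\eta}^TW\bs{\bar\eta}$, which is of order $O(1)$ in expectation. Multiplied by $\bs{\hat\beta}-\beta=O_p(N^{-1/2})$, it would contribute $O_p(N^{-1/2})$, the same order as the fluctuation in Step (a).

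I expect this to be the main obstacle. My plan is to use the TLS first-order condition from \eqref{eq:TLS}, which reads
\[
\bs{\bar X}^TW\bs{\hat\epsilon}+\frac{\|W^{1/2}\bs{\hat\epsilon}\|^2}{1+\bs{\hat s}_x^2\bs{\hat\beta}^TD\bs{\hat\beta}}\,\bs{\hat s}_x^2D\bs{\hat\beta}=0.
\]
It shows that $N^{-1}\bs{\bar X}^TW\bs{\hat\epsilon}$ equals an explicit term, namely $-N^{-1}\bs{\hat s}_x^2D\bs{\hat\beta}$ times the normalized quadratic form. Substituting this into $N^{-1}\bs{\hat\epsilon}^TW\bs{\hat\epsilon}$ and writing the result in terms of $\bs{u}$, the cross terms should collapse. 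The net effect is the replacement of $c$ by $1+\bs{\hat s}_x^2\bs{\hat\beta}^TD\bs{\hat\beta}$ up to $o_p(N^{-1/2})$, with only a $p$-dimensional projection lost. I would verify this algebraically with the closed form from Lemma S.1.1. Throughout, the conditional Gaussianity of $\bs{\bar\eta}$ given $\bs{Z}$ controls terms such as $N^{-1}\bs{\epsilon}^TW\bs{\bar\eta}=O_p(N^{-1/2})$.

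\textbf{Step (c).} For the centering, Lemma~\ref{lemma:consistency_Delta_Phi} gives $\Delta_{\Theta}-s_z^2N^{-1}\tr(W\Sigma)=O_p(N^{-1/2})$. This is only borderline, so it must be sharpened to $o_p(N^{-1/2})$. The $\Delta$ error comes from $\bs{\hat\sigma}_z^2$, which is $O_p(1/N)$-accurate by Proposition~\ref{prop:consistency_adjusted_estimator}, and from $N^{-1}\tr(\Theta\cdot)$-type terms whose $O_p(m^{-1/2})$ errors are multiplied by $1/N$. Checking this decomposition should give the stronger rate. A similar concern applies to $\bs{\hat s}_z^{-2}$ and $\bs{\hat s}_x$: their $O_p(N^{-1/2})$ errors from Theorem~\ref{thm:consistency_s_hat} enter the centering at the same order. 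I would handle this by showing that these errors are exactly absorbed through the defining equation $L(\bs{\hat s}_x,W)=0$, which ties $\bs{\hat s}_x^2N^{-1}\|W^{1/2}\bs{\hat\epsilon}\|^2/(1+\cdots)$ to $\alpha\Delta$. With $\alpha=\bs{\hat s}_x^2/\bs{\hat s}_z^2$, this relation identifies $\bs{Q}$ with a centered statistic whose fluctuation is still governed by Step (a).

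For the scaling, Lemma~\ref{lemma:consistency_Delta_Phi} gives $\Phi_{\bs{\widehat\Theta}}/\bigl(s_z^4N^{-1}\tr(W\Sigma W\Sigma)\bigr)\to1$, and Slutsky's lemma then yields $\bs{Q}/\sqrt{V_Q}\to N(0,1)$.
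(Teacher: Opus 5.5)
Your Steps (a)–(b) essentially reproduce the paper's argument. The paper writes $\bs{\hat\epsilon}=X(\beta-\bs{\hat\beta})+\bs{\epsilon}-\bs{\bar\eta}\bs{\hat\beta}$, with $\bs{\hat\beta}$ rather than $\beta$ multiplying the fingerprint noise, and centres at $(1+s_x^2\bs{\hat\beta}^TD\bs{\hat\beta})N^{-1}\tr(W\Sigma)$ from the start. The $O(1)\times O_p(N^{-1/2})$ cross term you flag is therefore absorbed into the centering directly, every remaining $\bs{\hat\beta}$-term is $O_p(N^{-1})$, and the first-order-condition detour is unnecessary. Your sharpening $\Delta_\Theta-s_z^2N^{-1}\tr(W_\Theta\Sigma)=O_p(N^{-1})$ is also correct.

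The genuine gap is the $\bs{\hat s}_z$ part of Step (c). You are right that $\bs{\hat s}_z^{-2}-s_z^{-2}=O_p(N^{-1/2})$ enters the centering at exactly the order of the Step (a) fluctuation. The paper's proof simply invokes Slutsky at this point, which does not suffice. Your proposed remedy, however, does not leave a nondegenerate statistic. Work on the event $L(\bs{\hat s}_x,W)=0$, which the proof of Theorem~\ref{thm:consistency_s_hat} shows has probability tending to one. Using $\bs{\hat s}_z^{-2}=\alpha/\bs{\hat s}_x^2$, the defining equation of $\bs{\hat s}_x$ reads
\[
\frac{1}{N}\bs{\hat\epsilon}^TW\bs{\hat\epsilon}
=\frac{\alpha\,\Delta_{\bs{\widehat\Theta}}}{\bs{\hat s}_x^2}\bigl(1+\bs{\hat s}_x^2\bs{\hat\beta}^TD\bs{\hat\beta}\bigr)
=\bigl(1+\bs{\hat s}_x^2\bs{\hat\beta}^TD\bs{\hat\beta}\bigr)\bs{\hat s}_z^{-2}\Delta_{\bs{\widehat\Theta}} .
\]
So $\bs{Q}(\bs{\hat\epsilon})=0$ identically whenever $\bs{\hat s}_x$ is computed with the same $W=W_{\bs{\widehat\Theta}}$. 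The Step (a) fluctuation is cancelled exactly by the fluctuation of $\bs{\hat s}_z^{-2}$, because $s_z$ is identified only through this same residual quadratic form.

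Now suppose $\bs{\hat s}_x$ is computed with another member $W_{\Theta'}$, for example $\widehat W$. The cancellation still holds at the relevant order, for two reasons:
\begin{itemize}
\item $W_\Theta-W_{\Theta'}=\bs{\widetilde U}(\Theta-\Theta')\bs{\widetilde U}^T$ has rank at most $K$ and bounded norm, so $N^{-1}\bs{u}^T(W_\Theta-W_{\Theta'})\bs{u}$ fluctuates only at order $N^{-1}$.
\item $\Delta_\Theta/\Delta_{\Theta'}=\tr(W_\Theta\Sigma)/\tr(W_{\Theta'}\Sigma)+O_p(N^{-1})$.
\end{itemize}
Running your own expansion for both weights then gives $\bs{Q}(\bs{\hat\epsilon})=O_p(N^{-1})$, while $\sqrt{V_Q}\asymp N^{-1/2}$.

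In either case $\bs{Q}(\bs{\hat\epsilon})/\sqrt{V_Q}\to 0$ in probability. The $N(0,1)$ limit as stated therefore cannot be established by your route or any other, and the paper's Slutsky step hides the same problem. What your Steps (a)–(c) actually prove is the CLT with the true multipliers, i.e.\ with $s_z^{-2}$ in place of $\bs{\hat s}_z^{-2}$ in the centering. That is a different statement, and it is not operational here, since the data provide no estimate of $s_z$ independent of the residual quadratic form.
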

The model assumptions are rejected as misspecified at asymptotic significance level $(1-\alpha)$, if  $\bs{Q}(\bs{\hat{\epsilon}})/\sqrt{V_Q} > z_{1-\alpha}$, where $z_{1-\alpha}$ denotes the $1-\alpha$ quantile of $N(0,1)$. The p-value is obtained accordingly.

\section{Analysis of the Selection Procedure of $\Theta$}\label{subsec:analysis_selection_theta}

In general, the optimal value of $\Theta$ is strongly influenced by
the degree to which the true design matrix $X \in \mathbb{R}^{N \times p}$ aligns 
with the leading spiked eigen-directions $U \in \mathbb{R}^{N \times K}$. To demonstrate this effect more clearly, we consider a simplified 
case with $K=1$ and $p=1$. In this setting, the selection problem reduces to choosing a single scalar parameter $\Theta$, and the alignment 
between $X$ and the leading spiked direction $U=U_1$ can be summarized by the proportion of masses in the true design matrix $X$ onto the 
spiked eigen-direction $u_1 \in \mathbb{R}^N$, i.e., the spike-direction proportion (SDP), defined as
\begin{equation}
\operatorname{SDP}\label{eq:SDP}
= \frac{\operatorname{tr}(X^T U U^T X)} {\operatorname{tr}(X^T X)}.    
\end{equation}



\begin{figure}[!ht]
    \centering
    \begin{subfigure}[t]{3.8in}
        \centering
        \includegraphics[width=\linewidth]{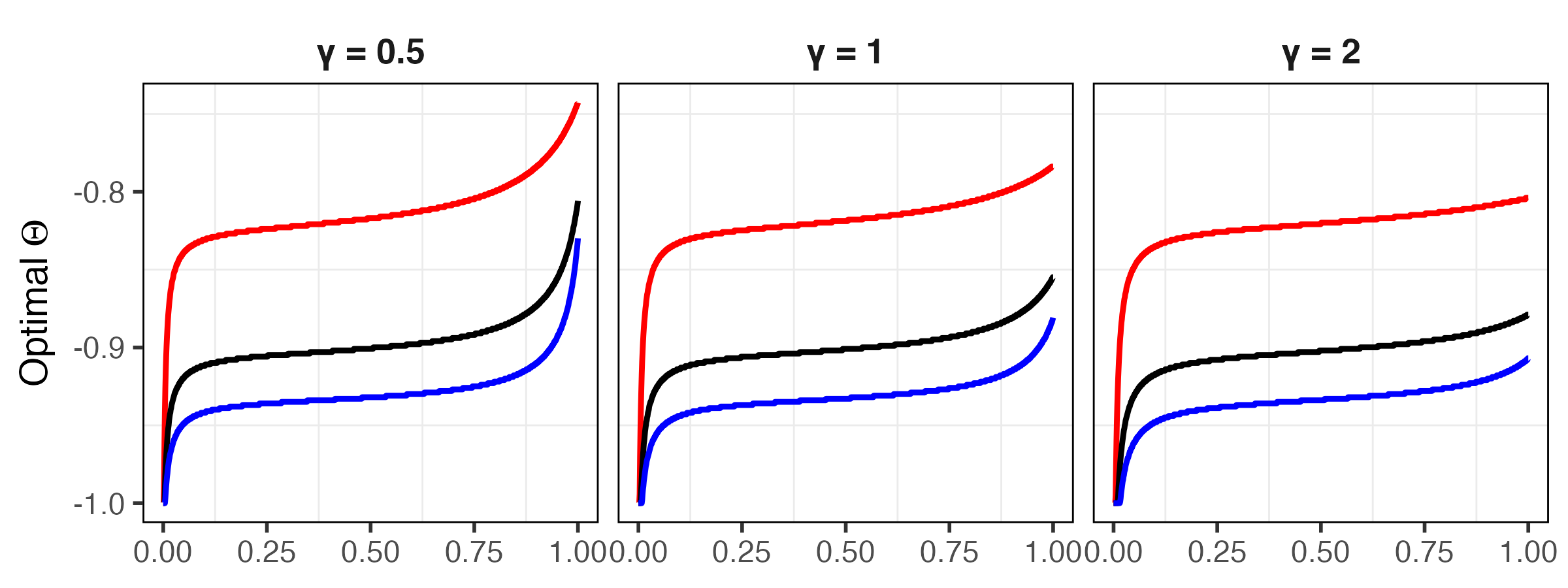}
        \caption{Optimal $\Theta$ as a function of SDP.}
        \label{fig:optimal_theta_panel}
    \end{subfigure}
    ~
    \begin{subfigure}[t]{5in}
        \centering
        \includegraphics[width=\linewidth]{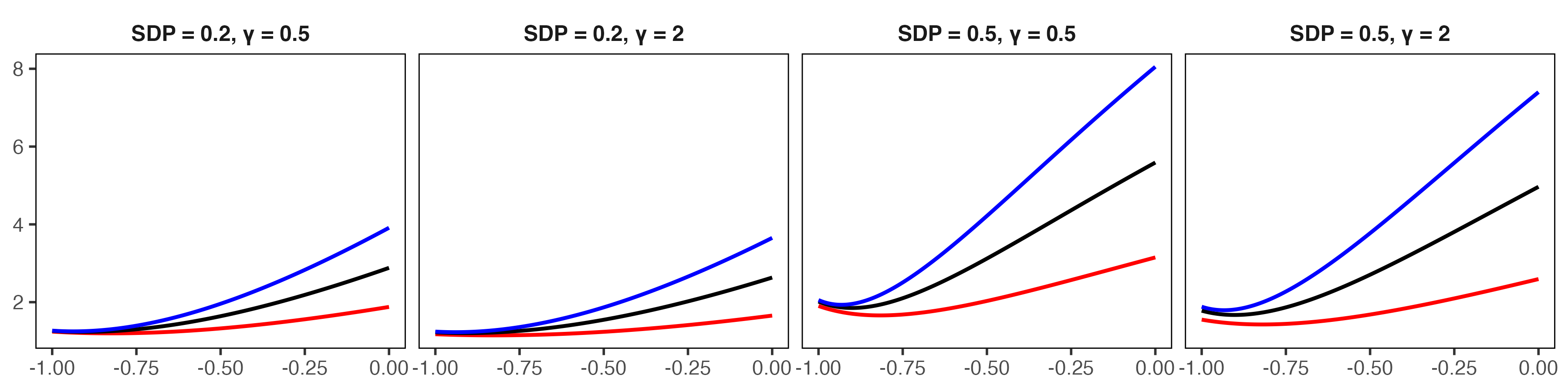}
        \caption{Expected objective $\mathbb{E}\{\Xi(\Theta)\}$ as a
        function of $\Theta$.}
        \label{fig:Xi_theta_panel}
    \end{subfigure}

    \caption{Selection of the optimal weight for $K=J=1$.
    The spike strengths are $\pi_1=5$ (red), $10$ (black), and
    $15$ (blue).}
    \label{fig:optimal_theta}
\end{figure}

We set $N=100$, $\sigma^2=1$, $s_x^2=s_z^2=1$, $\beta=2$, and
$D=1/4$, so that $\beta^2D=1$. Treating all true parameters as known,
we obtained the optimal $\Theta$ by minimizing $\mathbb{E}(\Xi)$.
Figure~\ref{fig:optimal_theta} shows the optimal $\Theta$ as a function
of SDP under different $\pi_1$ and $\gamma$ values, together with the
corresponding objective $\mathbb{E}(\Xi)$ as a function of $\Theta$.

\textcolor{black}{
The results illustrate how the optimal choice of $\Theta$ adapts to the
interaction between the signal structure and the spiked covariance
structure. The optimal values of $\Theta$ are consistently negative,
indicating that the identity weighting case is not optimal in these
settings and that incorporating covariance information improves
efficiency. As the spike strength $\pi_1$ increases, the optimal
$\Theta$ becomes more negative, indicating a stronger reduction of the
influence of the high-variance spiked direction in the weight
construction. In contrast, as SDP increases, the optimal $\Theta$
increases because the spiked direction contains more signal information
and therefore should retain greater influence. In the extreme case of
$\mathrm{SDP}=0$, the optimal choice is $\Theta=-1$, corresponding to
eliminating the influence of the leading spike in the weight
construction, since this direction contains no useful signal
information.
}

The lower panel further shows how the uncertainty of the
scaling factor estimator varies with SDP, spike strength, and 
$\gamma$. The minimum value of $\mathbb{E}\{\Xi(\Theta^*)\}$ increases 
with SDP, indicating larger estimation uncertainty when the signal 
overlaps more strongly with high-variance spiked directions. The qualitative 
patterns across different $\pi_1$ and $\gamma$ values are consistent with 
those observed in the subsequent simulation studies.

\section{Simulation Studies}\label{sec:simulation}

We conducted simulation studies to compare the proposed framework 
with existing approaches, under settings to mimic high-dimensional detection 
and attribution analyses at regional and larger spatial scales. We considered
$n_S\in\{20,40\}$ spatial grid boxes and $n_T=13$ temporal points,
giving overall dimensions $N\in\{260,520\}$. These dimensions are
representative of those encountered in detection and attribution
analyses of near-surface air temperature
\citep{eyring2021human} and in the real-data
applications of Section~\ref{sec:real}.

To generate the key components of the OF model in
\eqref{eq:model_y}--\eqref{eq:model_z}, we first specified the covariance
matrix $\Sigma$ and inflation multipliers $s_x^2$ and $s_z^2$ for the
simulated fingerprints and control runs. We used the spiked covariance
model $\Sigma=U\Lambda U^\top+\sigma^2 I_N$ in \eqref{eq:spike}, with
$\sigma^2=1$ and $s_x^2=s_z^2\in\{1,2\}$, so that
$\alpha=s_x^2/s_z^2=1$. For the spikes, 
we considered $K\in\{3,5\}$ spikes, with
$\Lambda=\operatorname{Diag}(8,7,5)$ for $K=3$ and
$\Lambda=\operatorname{Diag}(10,9,8,7,6)$ for $K=5$. For each
configuration, the population eigenvectors $U_1,\ldots,U_K$ were
generated to be orthonormal.

We next specified the true fingerprints in a two-forcing setting with
$X_1,X_2\in\mathbb{R}^N$, mimicking the real D\&A analysis of 
anthropogenic (ANT) and natural (NAT) forcings, and set $\beta_1=\beta_2=1$. 
The true fingerprints were
constructed to control the overall signal-to-noise ratio (SNR) and
their alignment with the spiked covariance directions, measured by the
spike-direction proportion (SDP) defined by~\eqref{eq:SDP}.
For a prescribed $\mathrm{SNR} \in \{2,5\}$, we set the total signal mass of the two
true fingerprints as $SS_X=\operatorname{tr}(X^\top X)=\mathrm{SNR}\times \operatorname{tr}(\Sigma)$,
where $X=(X_1,X_2)$. This allows us to vary the relative strength of
the forced signal and internal variability across different
signal-strength regimes.

For $\mathrm{SDP}=w\in\{0.2,0.5\}$, we
generated each fingerprint as the sum of two orthogonal components:
one lying in the leading spiked subspace spanned by the columns of
$U$, and the other lying in its orthogonal complement. The relative
magnitudes of these two components were chosen so that the prescribed
SDP was attained. To make the two forcings distinct, the first
fingerprint placed more weight on the strongest spike directions,
whereas the second distributed its spiked-subspace signal more evenly
across the leading $K$ directions. The two fingerprints were then
scaled so that $X_1^\top X_1=SS_X/3$ and
$X_2^\top X_2=2SS_X/3$, mimicking the real climate data in which the
NAT fingerprint has larger variation than the ANT fingerprint. This
construction allows the SNR and SDP to be controlled at their
prescribed levels while maintaining distinct signal structures for
the two forcings.

Given $\Sigma$, $s_x^2$, $s_z^2$, and the true fingerprints $X_1$ and
$X_2$, we generated the observed response, model-simulated
fingerprints, and preindustrial control runs. The observed response
$\bs{Y}$ was generated from~\eqref{eq:model_y}. For each forcing
$i=1,2$, noisy realizations $\bs{\tilde{X}}_{ij}$ were generated
from~\eqref{eq:model_x} with ensemble size $n_i=20$. We also generated
$m\in\{100,200,400\}$ independent control simulations from
\eqref{eq:model_z} for estimating $\Sigma$. All error terms were
normally distributed.

We compared the proposed spiked-weight OF framework, denoted by SOF,
with three competing methods for estimating the scaling factors 
and their confidence intervals. All methods use the TLS
estimator in~\eqref{eq:TLS} but differ in the construction of the
weight matrix $W$ and the inflation multipliers $s_x^2$ and $s_z^2$.
The number of spikes was selected using the method of
\citet{kritchman2008determining}. The first competitor is the
regularized OF method with a linear shrinkage weight estimator
\citep{ribes2009adaptation, li2025regularized}, denoted by ROF. The second competitor 
is an empirical spiked-weight OF method, denoted by EOF, which uses the MLE-based 
empirical spiked covariance estimator to construct the weight matrix and estimate $\Xi$.
For SOF, ROF, and EOF, the inflation multipliers are
estimated as in Section~\ref{subsec:estimation_s}. We also considered
an oracle benchmark using the true spiked eigenvectors, spike
strengths, and inflation multipliers.

We obtained estimates of the scaling factors $\beta_1$ and
$\beta_2$ and constructed 90\% confidence intervals (CIs) using the
corresponding asymptotic distributions of the weighted TLS estimators.
For the SOF, EOF, and Oracle methods, the construction is detailed in
Section~\ref{subsec:CI_select_weights}. For the shrinkage-based ROF
method, we followed the confidence interval construction in
\citet{li2025regularized}. For each configuration, the simulation was
repeated 1000 times. The random-number seeds and simulation scripts in
R are provided in the Supplementary Materials to ensure
reproducibility.

Table~\ref{tab:simulation_RMSE} reports the empirical biases and the
root mean square errors (RMSEs) for $\beta_1$ under normally distributed
errors with $K=5$, and Table~\ref{tab:simulation_CI} summarizes the
corresponding coverage rates and average lengths of the CIs.
Full simulation results are reported in Section~S.3 of the
Supplementary Materials and show similar patterns.

\begin{table}[!ht]
\scriptsize
\setlength{\tabcolsep}{2.4pt}
\renewcommand{\arraystretch}{0.8}
\centering
\caption{Empirical bias (RMSE) of the scaling factor estimator
$\hat{\beta}_1$ obtained from the TLS estimator in
\eqref{eq:TLS} under normal errors with $K=5$. Results are reported
for the proposed SOF method, ROF, EOF, and the Oracle benchmark across
different numbers of control runs $m$, spatiotemporal dimensions $N$,
signal-to-noise ratios (SNRs), and spike-direction proportions (SDPs).}
\label{tab:simulation_RMSE}
\resizebox{\textwidth}{!}{
\begin{tabular}{@{}llrrrr@{\hspace{7pt}}rrrr@{}}
\toprule
 &  & \multicolumn{4}{c}{$N=260$} & \multicolumn{4}{c}{$N=520$} \\
\cmidrule(lr){3-6}\cmidrule(lr){7-10}
 &  & \multicolumn{2}{c}{SDP = 0.2} & \multicolumn{2}{c}{SDP = 0.5} & \multicolumn{2}{c}{SDP = 0.2} & \multicolumn{2}{c}{SDP = 0.5} \\
\cmidrule(lr){3-4}\cmidrule(lr){5-6}\cmidrule(lr){7-8}\cmidrule(lr){9-10}
$m$ & Method & SNR = 2 & SNR = 5 & SNR = 2 & SNR = 5 & SNR = 2 & SNR = 5 & SNR = 2 & SNR = 5 \\
\midrule
\multicolumn{10}{c}{Case 1: $s^2_x = s^2_z=1$} \\
\multirow{4}{*}{100} & SOF & -0.003 (0.091) & -0.004 (0.054) & 0.004 (0.120) & 0.002 (0.075) & 0.001 (0.070) & -0.000 (0.043) & -0.007 (0.091) & -0.001 (0.056)
\\
 & ROF & -0.003 (0.096) & -0.004 (0.058) & 0.002 (0.126) & 0.003 (0.080) & 0.002 (0.073) & -0.000 (0.046) & -0.006 (0.094) & -0.000 (0.058)
\\
 & EOF & -0.017 (0.092) & -0.009 (0.055) & -0.018 (0.120) & -0.006 (0.076) & -0.009 (0.071) & -0.004 (0.044) & -0.022 (0.093) & -0.007 (0.057)
\\
 & Oracle & -0.002 (0.085) & -0.004 (0.050) & 0.002 (0.108) & 0.003 (0.068) & -0.000 (0.065) & 0.001 (0.040) & -0.004 (0.080) & -0.002 (0.047)
\\
\addlinespace[4pt]
\multirow{4}{*}{200} & SOF & 0.004 (0.090) & 0.001 (0.054) & 0.003 (0.112) & -0.001 (0.067) & -0.000 (0.066) & 0.001 (0.042) & 0.005 (0.083) & 0.002 (0.053)
\\
 & ROF & 0.003 (0.097) & 0.001 (0.057) & 0.002 (0.118) & 0.000 (0.070) & 0.000 (0.070) & 0.001 (0.045) & 0.005 (0.088) & 0.003 (0.056)
\\
 & EOF & -0.009 (0.090) & -0.003 (0.054) & -0.014 (0.112) & -0.008 (0.067) & -0.008 (0.066) & -0.002 (0.043) & -0.007 (0.083) & -0.002 (0.053)
\\
 & Oracle & 0.004 (0.087) & 0.001 (0.053) & 0.002 (0.106) & -0.002 (0.064) & -0.000 (0.063) & -0.000 (0.040) & 0.005 (0.076) & 0.001 (0.048)
\\
\addlinespace[4pt]
\multirow{4}{*}{400} & SOF & 0.001 (0.082) & 0.001 (0.052) & 0.002 (0.111) & 0.005 (0.066) & -0.003 (0.064) & -0.001 (0.039) & 0.005 (0.083) & 0.001 (0.048)
\\
 & ROF & 0.000 (0.086) & 0.001 (0.055) & 0.003 (0.119) & 0.003 (0.070) & -0.004 (0.069) & -0.001 (0.041) & 0.005 (0.089) & 0.002 (0.052)
\\
 & EOF & -0.010 (0.082) & -0.004 (0.053) & -0.013 (0.111) & -0.002 (0.066) & -0.010 (0.065) & -0.004 (0.039) & -0.005 (0.082) & -0.003 (0.048)
\\
 & Oracle & -0.001 (0.081) & 0.000 (0.052) & 0.002 (0.108) & 0.003 (0.064) & -0.004 (0.063) & -0.001 (0.038) & 0.005 (0.078) & 0.001 (0.045)
\\
\midrule
\multicolumn{10}{c}{Case 2: $s^2_x = s^2_z=2$} \\
\multirow{4}{*}{100} & SOF & 0.007 (0.099) & 0.002 (0.059) & 0.001 (0.131) & 0.001 (0.075) & 0.009 (0.079) & -0.002 (0.044) & -0.001 (0.095) & 0.003 (0.060)
\\
 & ROF & 0.007 (0.105) & 0.002 (0.063) & 0.004 (0.139) & 0.000 (0.080) & 0.010 (0.081) & -0.001 (0.046) & 0.001 (0.098) & 0.003 (0.062)
\\
 & EOF & -0.021 (0.099) & -0.009 (0.059) & -0.039 (0.133) & -0.016 (0.076) & -0.010 (0.079) & -0.010 (0.045) & -0.030 (0.097) & -0.009 (0.061)
\\
 & Oracle & 0.006 (0.093) & 0.001 (0.055) & -0.002 (0.117) & 0.000 (0.068) & 0.007 (0.070) & 0.000 (0.040) & -0.002 (0.080) & 0.003 (0.052)
\\
\addlinespace[4pt]
\multirow{4}{*}{200} & SOF & 0.003 (0.099) & 0.002 (0.058) & 0.009 (0.125) & -0.000 (0.075) & 0.002 (0.075) & 0.000 (0.042) & -0.004 (0.091) & 0.004 (0.052)
\\
 & ROF & 0.002 (0.105) & 0.001 (0.062) & 0.005 (0.133) & 0.001 (0.080) & 0.002 (0.079) & 0.001 (0.046) & -0.002 (0.096) & 0.004 (0.056)
\\
 & EOF & -0.020 (0.099) & -0.007 (0.058) & -0.026 (0.125) & -0.013 (0.076) & -0.014 (0.075) & -0.006 (0.043) & -0.027 (0.093) & -0.006 (0.053)
\\
 & Oracle & 0.002 (0.095) & 0.002 (0.056) & 0.005 (0.117) & -0.001 (0.071) & 0.003 (0.071) & -0.000 (0.041) & -0.006 (0.083) & 0.003 (0.047)
\\
\addlinespace[4pt]
\multirow{4}{*}{400} & SOF & 0.005 (0.095) & 0.002 (0.056) & -0.002 (0.123) & -0.001 (0.073) & -0.002 (0.071) & -0.001 (0.042) & 0.009 (0.086) & 0.000 (0.053)
\\
 & ROF & 0.005 (0.101) & 0.003 (0.059) & -0.002 (0.131) & -0.001 (0.077) & -0.002 (0.074) & -0.000 (0.045) & 0.007 (0.090) & 0.001 (0.056)
\\
 & EOF & -0.016 (0.095) & -0.006 (0.056) & -0.032 (0.124) & -0.013 (0.073) & -0.015 (0.071) & -0.007 (0.043) & -0.011 (0.085) & -0.007 (0.054)
\\
 & Oracle & 0.004 (0.092) & 0.001 (0.055) & -0.003 (0.117) & -0.002 (0.070) & -0.003 (0.068) & -0.001 (0.041) & 0.004 (0.082) & -0.000 (0.050)
\\
\bottomrule
\end{tabular}%
}
\end{table}

\begin{table}[!ht]
\scriptsize
\setlength{\tabcolsep}{2.0pt}
\renewcommand{\arraystretch}{0.8}
\centering
\caption{Empirical coverage rates (average lengths) of the 90\%
confidence intervals for the scaling factor estimator $\hat{\beta}_1$
constructed using the proposed SOF method, ROF, EOF, and the Oracle
benchmark under normal errors with $K=5$.}
\label{tab:simulation_CI}
\begin{tabular}{@{}llrrrr@{\hspace{7pt}}rrrr@{}}
\toprule
 &  & \multicolumn{4}{c}{$N=260$} & \multicolumn{4}{c}{$N=520$} \\
\cmidrule(lr){3-6}\cmidrule(lr){7-10}
 &  & \multicolumn{2}{c}{SDP = 0.2} & \multicolumn{2}{c}{SDP = 0.5} & \multicolumn{2}{c}{SDP = 0.2} & \multicolumn{2}{c}{SDP = 0.5} \\
\cmidrule(lr){3-4}\cmidrule(lr){5-6}\cmidrule(lr){7-8}\cmidrule(lr){9-10}
$m$ & Method & SNR = 2 & SNR = 5 & SNR = 2 & SNR = 5 & SNR = 2 & SNR = 5 & SNR = 2 & SNR = 5 \\
\midrule
\multicolumn{10}{c}{Case 1: $s^2_x = s^2_z=1$} \\
\multirow{4}{*}{100} & SOF & 88.8 (0.292) & 91.5 (0.181) & 89.0 (0.378) & 87.2 (0.232) & 90.4 (0.227) & 89.6 (0.140) & 88.9 (0.295) & 89.6 (0.183)
\\
 & ROF & 89.3 (0.310) & 90.9 (0.193) & 88.6 (0.401) & 87.5 (0.245) & 89.1 (0.235) & 89.5 (0.145) & 89.2 (0.303) & 90.2 (0.188)
\\
 & EOF & 87.7 (0.289) & 90.5 (0.180) & 84.5 (0.343) & 83.9 (0.211) & 85.0 (0.203) & 85.3 (0.126) & 80.6 (0.236) & 80.0 (0.146)
\\
 & Oracle & 90.3 (0.279) & 91.2 (0.173) & 89.1 (0.349) & 88.1 (0.213) & 89.8 (0.205) & 89.1 (0.127) & 89.5 (0.256) & 91.0 (0.156)
\\
\addlinespace[4pt]
\multirow{4}{*}{200} & SOF & 88.0 (0.283) & 89.6 (0.176) & 89.2 (0.362) & 89.8 (0.222) & 89.5 (0.213) & 88.0 (0.132) & 90.5 (0.275) & 88.1 (0.169)
\\
 & ROF & 88.8 (0.305) & 90.5 (0.188) & 90.2 (0.389) & 91.6 (0.239) & 88.6 (0.228) & 87.7 (0.141) & 89.5 (0.291) & 88.1 (0.180)
\\
 & EOF & 89.7 (0.293) & 90.3 (0.182) & 88.0 (0.353) & 89.2 (0.217) & 88.0 (0.206) & 87.0 (0.128) & 84.7 (0.245) & 83.4 (0.150)
\\
 & Oracle & 88.0 (0.280) & 90.1 (0.173) & 89.9 (0.348) & 91.1 (0.213) & 89.9 (0.205) & 87.5 (0.127) & 91.0 (0.256) & 89.3 (0.156)
\\
\addlinespace[4pt]
\multirow{4}{*}{400} & SOF & 91.7 (0.280) & 89.5 (0.174) & 89.2 (0.354) & 88.6 (0.217) & 90.1 (0.208) & 89.6 (0.129) & 89.3 (0.265) & 91.8 (0.163)
\\
 & ROF & 91.5 (0.297) & 90.1 (0.183) & 89.0 (0.379) & 90.1 (0.232) & 89.1 (0.221) & 90.8 (0.137) & 88.8 (0.282) & 90.7 (0.174)
\\
 & EOF & 92.8 (0.295) & 91.1 (0.183) & 89.2 (0.359) & 89.7 (0.220) & 89.9 (0.208) & 90.3 (0.129) & 87.6 (0.252) & 89.6 (0.154)
\\
 & Oracle & 92.6 (0.279) & 89.0 (0.173) & 90.1 (0.348) & 88.9 (0.213) & 90.5 (0.205) & 90.1 (0.127) & 89.5 (0.256) & 90.8 (0.156)
\\
\midrule
\multicolumn{10}{c}{Case 2: $s^2_x = s^2_z=2$} \\
\multirow{4}{*}{100} & SOF & 89.1 (0.315) & 89.2 (0.192) & 87.6 (0.408) & 89.0 (0.247) & 87.5 (0.243) & 91.3 (0.148) & 90.3 (0.317) & 89.2 (0.193)
\\
 & ROF & 88.6 (0.334) & 87.8 (0.205) & 86.4 (0.431) & 89.5 (0.262) & 87.8 (0.251) & 91.7 (0.154) & 89.8 (0.326) & 90.1 (0.199)
\\
 & EOF & 86.8 (0.310) & 87.6 (0.190) & 83.8 (0.368) & 85.2 (0.224) & 82.9 (0.216) & 87.0 (0.133) & 81.5 (0.255) & 80.0 (0.154)
\\
 & Oracle & 89.1 (0.302) & 90.6 (0.183) & 90.1 (0.380) & 90.9 (0.227) & 88.6 (0.221) & 90.5 (0.134) & 92.5 (0.280) & 89.2 (0.167)
\\
\addlinespace[4pt]
\multirow{4}{*}{200} & SOF & 86.4 (0.305) & 88.9 (0.186) & 87.8 (0.393) & 87.7 (0.237) & 88.2 (0.230) & 90.3 (0.139) & 89.3 (0.297) & 91.3 (0.180)
\\
 & ROF & 88.5 (0.328) & 89.5 (0.199) & 87.5 (0.420) & 88.0 (0.253) & 87.3 (0.245) & 89.1 (0.149) & 89.4 (0.314) & 90.4 (0.191)
\\
 & EOF & 87.7 (0.314) & 90.4 (0.192) & 87.4 (0.382) & 85.7 (0.230) & 85.5 (0.222) & 89.1 (0.135) & 84.0 (0.265) & 86.8 (0.160)
\\
 & Oracle & 87.9 (0.301) & 89.9 (0.183) & 89.9 (0.382) & 88.8 (0.227) & 88.3 (0.222) & 90.7 (0.134) & 91.1 (0.280) & 91.7 (0.167)
\\
\addlinespace[4pt]
\multirow{4}{*}{400} & SOF & 88.0 (0.303) & 89.3 (0.183) & 88.7 (0.387) & 88.5 (0.232) & 87.6 (0.224) & 90.2 (0.136) & 91.6 (0.289) & 89.1 (0.173)
\\
 & ROF & 88.4 (0.320) & 89.4 (0.194) & 87.3 (0.410) & 88.6 (0.248) & 88.9 (0.238) & 89.9 (0.145) & 91.5 (0.307) & 88.9 (0.185)
\\
 & EOF & 90.1 (0.318) & 90.6 (0.193) & 87.9 (0.391) & 88.3 (0.235) & 88.5 (0.225) & 89.2 (0.137) & 89.9 (0.274) & 86.6 (0.164)
\\
 & Oracle & 89.8 (0.302) & 89.6 (0.183) & 89.6 (0.381) & 89.2 (0.227) & 89.7 (0.222) & 89.6 (0.134) & 92.2 (0.280) & 90.6 (0.167)
\\
\bottomrule
\end{tabular}
\end{table}

The results in Table~\ref{tab:simulation_RMSE} show that the biases
of the point estimators were generally small for all methods, although
EOF exhibited slightly larger bias. The bias and RMSE generally
decreased with increasing numbers of control runs $m$, dimension $N$,
and SNR, and with decreasing SDP. These results indicate improved
estimation accuracy as the covariance structure is estimated more
reliably and the forced signal becomes stronger. Across different
methods, the Oracle estimator achieved the smallest RMSE, as expected
from using the truth. Among the practical methods, SOF generally produced slightly 
smaller RMSE than EOF, and both substantially outperformed ROF. 
Since all methods differ primarily in the construction of the weight
matrix $W$, these results highlight the benefit of using a
bias-corrected spiked covariance estimator for constructing the weight
matrix.

For the confidence intervals, Table~\ref{tab:simulation_CI} shows that
SOF, ROF, and the Oracle method maintained coverage rates close to the
nominal level across all settings, with average interval lengths
comparable to the empirical standard deviations of the estimators. 
In contrast, EOF exhibited noticeable undercoverage in most settings,
which is likely caused by the bias in the empirical spike estimates
used for estimating the asymptotic
variance of the scaling factors. The issue was less severe when the
number of control runs $m$ was comparable to the spatiotemporal
dimension $N$ and the SDP was low, where the empirical 
estimator can be estimated more accurately. When the internal
variability in the fingerprints exceeded that in the observed response,
namely $s_x^2=s_z^2=2$, the undercoverage became more severe. 
Among the practical methods, SOF achieved coverage rates closest to
the nominal level while producing shorter confidence intervals. 
The interval lengths decreased as the number of control runs $m$
and the spatiotemporal dimension $N$ increased, and as the signal
strength increased, reflecting improved estimation precision under more
informative settings. This behavior is also consistent with the 
intervals obtained in the real-data application.

Lastly, the proposed confidence intervals are constructed under the
assumption that the underlying error terms follow a normal
distribution. To assess their robustness to departures from this
assumption, we conducted additional simulations using heavier-tailed
multivariate $t_\nu$ distributions with degrees of freedom $\nu=5$ for
the random errors, while maintaining the same covariance matrix
$\Sigma$ as in the previous normal settings. Detailed results 
are presented in Section~S.3 of Supplementary Materials.
The results showed a similar pattern to the normal case: SOF outperformed the other
practical methods and approached the performance of the Oracle method
as the number of control runs $m$ increased. Overall, these results
highlight the benefit of the proposed bias-corrected spiked covariance
estimator for accurate uncertainty quantification and efficient
inference.


\section{Detection and Attribution of Temperature Change}
\label{sec:real}

\subsection{Regional Detection and Attribution Analysis}

To demonstrate the practical performance of the proposed approach, we
conducted detection and attribution analyses of annual mean near-surface
air temperature over 1951--2020 across different spatial scales using
HadCRUT5 observations and CMIP6 simulations. Following
\citet{zhang2006multimodel} and \citet{li2025regularized}, we
considered the global region (GL), the Northern Hemisphere (NH), the
Northern Hemisphere midlatitudes between $30^\circ$N and $70^\circ$N
(NHM), Eurasia (EA), North America (NA), and three subcontinental
North American regions: western (WNA), central (CNA), and eastern
(ENA) North America. These regions span a range of spatial scales and
potential signal-to-noise ratios. For each region, we performed a
two-signal analysis of anthropogenic (ANT) and natural (NAT) forcings
under Models~\eqref{eq:model_y}--\eqref{eq:model_z}. The three key
ingredients, namely, the observed response $\bs{Y}$, the simulated
fingerprints
$\{\widetilde{\bs{X}}_{ij}:j=1,\ldots,n_i\}$ for $i=1,2$, and the
control runs $\bs{Z}_1,\ldots,\bs{Z}_m$, were obtained as follows.

Observations $\bs{Y}$ were obtained from HadCRUT5
\citep{morice2021updated}, which provides monthly anomalies on a
$5^\circ \times 5^\circ$ grid relative to 1961--1990. Annual
anomalies were computed when at least nine months were available and
then averaged into nonoverlapping five-year means. Because the
anomalies are centered relative to the 1961--1990 climatology, 
we omit the 1961--1965 block, leaving 13 time points. To reduce the computational burden for large
spatial domains, GL and the continental regions were aggregated to
coarser grids using area-weighted averages, whereas the original
$5^\circ \times 5^\circ$ resolution was retained for WNA, CNA, and
ENA. Table~\ref{tab:regions} and Figure~\ref{fig:region_boundary}
summarize the spatial domains and dimensions.

\begin{table}[htbp]
  \scriptsize
  \setlength{\tabcolsep}{3.5pt}
  \renewcommand{\arraystretch}{0.6}
  \centering
  \caption{Summary of the spatial domains, coordinate ranges,
  spatiotemporal dimensions ($S$ and $T$), and observation dimensions
  $N$ after removing missing values.}
  \label{tab:regions}
  \begin{tabular}{llcccccc}
    \toprule
    Acronym & Regions & Longitude  & Latitude   & Grid size & $S$ & $T$ & $N$ \\
            &         & ($^\circ$E) & ($^\circ$N) & ($1^\circ \times 1^\circ$) & & &\\
    \midrule
    \multicolumn{8}{c}{Global and Continental Regions}\\
    GL & Global & $-$180 / 180 & $-$90 / 90 & $40 \times 30$ & 54 & 13 & 696 \\
    NH & Northern Hemisphere & $-$180 / 180 & 0 / 90 & $40 \times 30$ & 27 & 13 & 351 \\
    NHM & Northern $30^\circ N$ to $70^\circ N$ & $-$180 / 180 & 30 / 70 & $40 \times 10$ & 36 & 13 & 468 \\
    EA & Eurasia & $-$10 / 180 & 30 / 70 & $10 \times 20$ & 38 & 13 & 494 \\
    NA & North America & $-$130 / $-$50 & 30 / 60 & $10 \times 5$ & 48 & 13 & 624 \\
    \addlinespace[1ex]
    \multicolumn{8}{c}{Subcontinental Regions}\\
    \midrule
    WNA & Western North America & $-$130 / $-$105 & 30 / 60 & $5 \times 5$ & 30 & 13 & 390 \\
    CNA & Central North America & $-$105 / $-$85 & 30 / 50 & $5 \times 5$ & 16 & 13 & 208 \\
    ENA & Eastern North America & $-$85 / $-$50 & 15 / 30 & $5 \times 5$ & 21 & 13 & 273 \\
    \bottomrule
  \end{tabular}
\end{table}

\begin{figure}[!ht]
  \centering
  \includegraphics[width=4in]{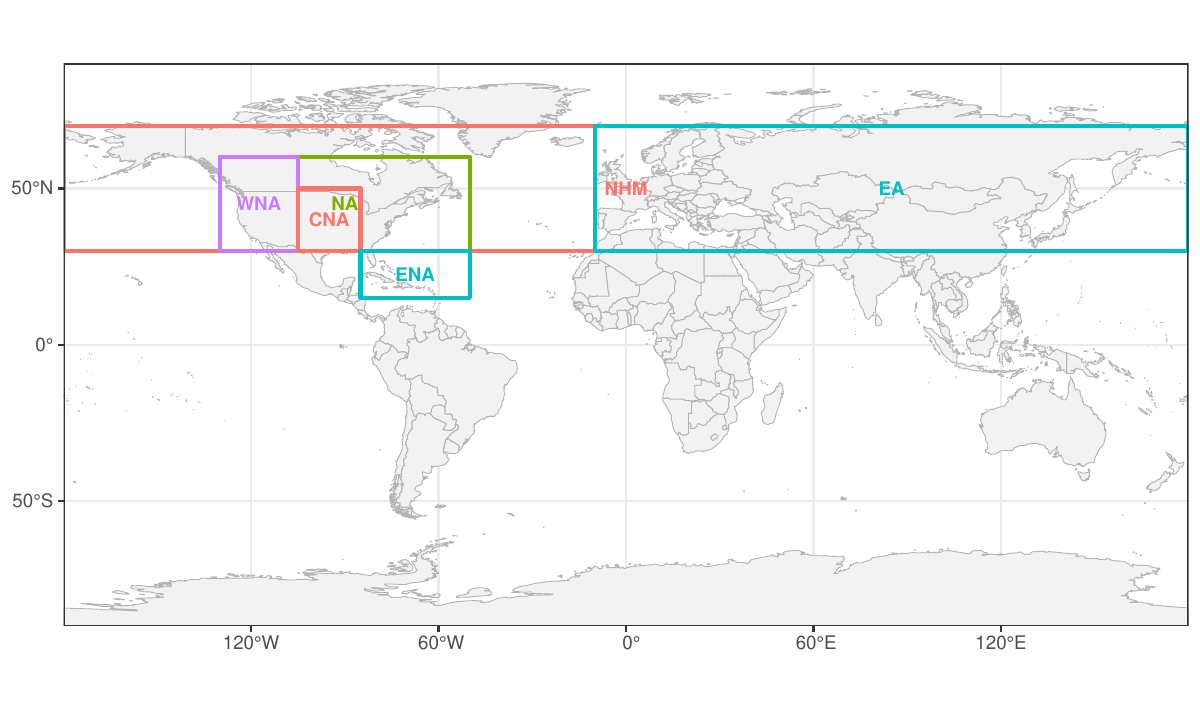}
  \caption{Geographic boundaries of the continental and subcontinental
  regions summarized in Table~\ref{tab:regions}. GL and NH are omitted
  for visual clarity.}
  \label{fig:region_boundary}
\end{figure}

The simulated fingerprints and control runs were derived from
multimodel simulations under the CMIP6 Detection and
Attribution Model Intercomparison Project
\citep{eyring2016overview}. Because the number of simulations available
from each individual climate model was limited, we did not consider
structural differences among models and instead treated the multimodel
ensembles as arising from a common forced-response framework. The
simulations included 43 runs from the hist-GHG experiment, driven by
changes in well-mixed greenhouse gas concentrations; 43 runs from the
hist-aer experiment, driven by changes in anthropogenic aerosols; 40
runs from the hist-nat experiment, driven by natural forcings; and
preindustrial control runs of varying lengths representing internal
climate variability.

For the ensemble means $\bs{\bar{X}}$ used in the TLS estimator in
\eqref{eq:ensemble_means}--\eqref{eq:TLS}, the NAT fingerprint
$\bs{\bar{X}}_{\mathrm{NAT}}$ was obtained by averaging the 40
available hist-nat runs. Because direct ANT simulations were not
available from CMIP6 simulations, we followed the standard linear additivity
assumption
$X_{\mathrm{ANT}}=X_{\mathrm{GHG}}+X_{\mathrm{AER}}$
\citep{zhang2006multimodel}. We constructed
$\bs{\bar{X}}_{\mathrm{ANT}}$ by summing the GHG and AER fingerprints
within each model and then averaging the resulting ANT fingerprints
across models. The corresponding effective ensemble size
$n_{\mathrm{ANT}}$ was used in the TLS estimation. The same temporal
averaging, spatial aggregation, and missing-data mask used for the
observations were applied to all simulated fingerprints.

Control runs were obtained from 29 CMIP6 preindustrial simulations,
ranging from approximately 100 to 1200 years in length. To account for
model drift, a long-term linear trend was removed separately at each
grid box. Assuming temporal stationarity, the detrended runs were
partitioned into nonoverlapping 70-year blocks corresponding to
1951--2020, yielding $m=181$ replicates for estimating
$\Sigma$. Each block underwent the same temporal averaging, spatial
aggregation, and missing-data masking as the observed response
$\bs{Y}$. Additional details on the CMIP6 models and control
simulations are available in \citet{li2025regularized}.

For estimation of the variability inflation multipliers $s_x^2$ and
$s_z^2$ in Section~\ref{subsec:estimation_s}, replicates are required
for each forcing. Because direct ANT replicates were unavailable, we
first estimated the multipliers using a three-signal analysis of GHG,
AER, and NAT. Assuming a common fingerprint variability multiplier
across forcings, the resulting estimates were then used in the
two-signal analysis. Figure~\ref{fig:scales} shows the
estimated multipliers.

\begin{figure}[!ht]
  \centering
  \includegraphics[width=4.5in,height=2in]{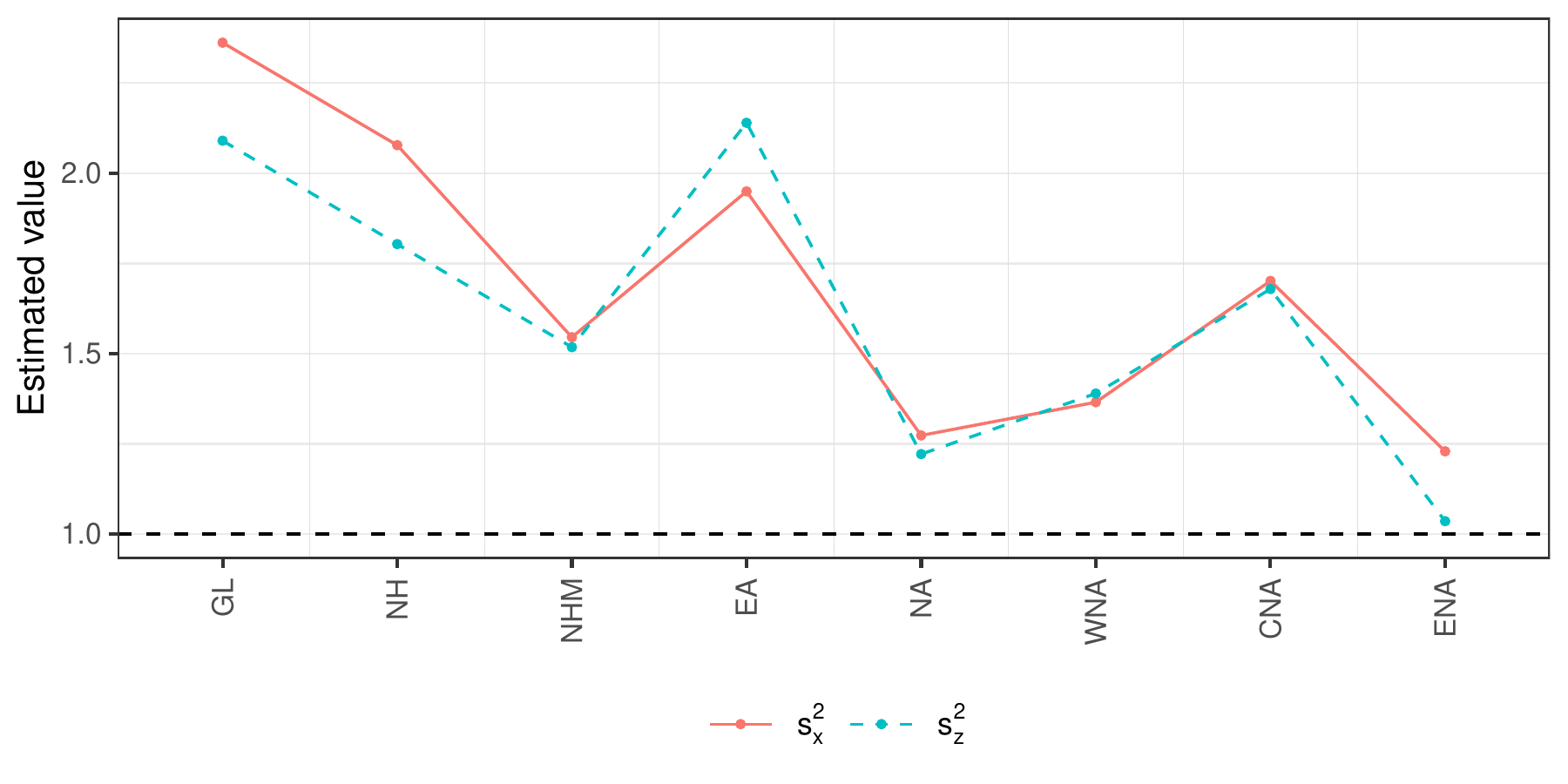}
  \caption{Estimated variance inflation multipliers $\hat s_x^2$ and
    $\hat s_z^2$ across the spatial domains. }
  \label{fig:scales}
\end{figure}

For most regions, both estimated multipliers exceeded one, indicating
greater internal variability in the simulated fingerprints and control
runs than in the observed response. This supports estimating the
variability multipliers rather than fixing them \emph{a priori} at one,
as is commonly done in conventional analyses. Under the estimated
multipliers, the residual consistency test yielded relatively large
p-values for the smaller domains NA, WNA, CNA, and ENA, suggesting
an adequate characterization of variability, but p-values close to
zero for the larger domains GL, NH, NHM, and EA. One possible
explanation is that internal variability differs across subregions
within these broad domains, so that a single multiplier $s_x^2$ or
$s_z^2$ cannot adequately capture the spatial heterogeneity.

We then estimated the scaling factors for the ANT and NAT forcings and
constructed the corresponding 90\% confidence intervals using the
proposed SOF method with an optimally selected spiked weight matrix;
EOF with a weight matrix constructed from empirical spike estimates;
and ROF with a weight matrix based on linear shrinkage. The number of
spikes was selected using the procedure in
Section~\ref{sec:simulation}, with at most ten spikes allowed. For
each method, we considered both the estimated inflation multipliers and
the conventional specification $s_x^2=s_z^2=1$. The results are shown
in Figure~\ref{fig:real_data}.

\begin{figure}[!ht]
  \centering
  \includegraphics[width=4.5in, height =3.5in]{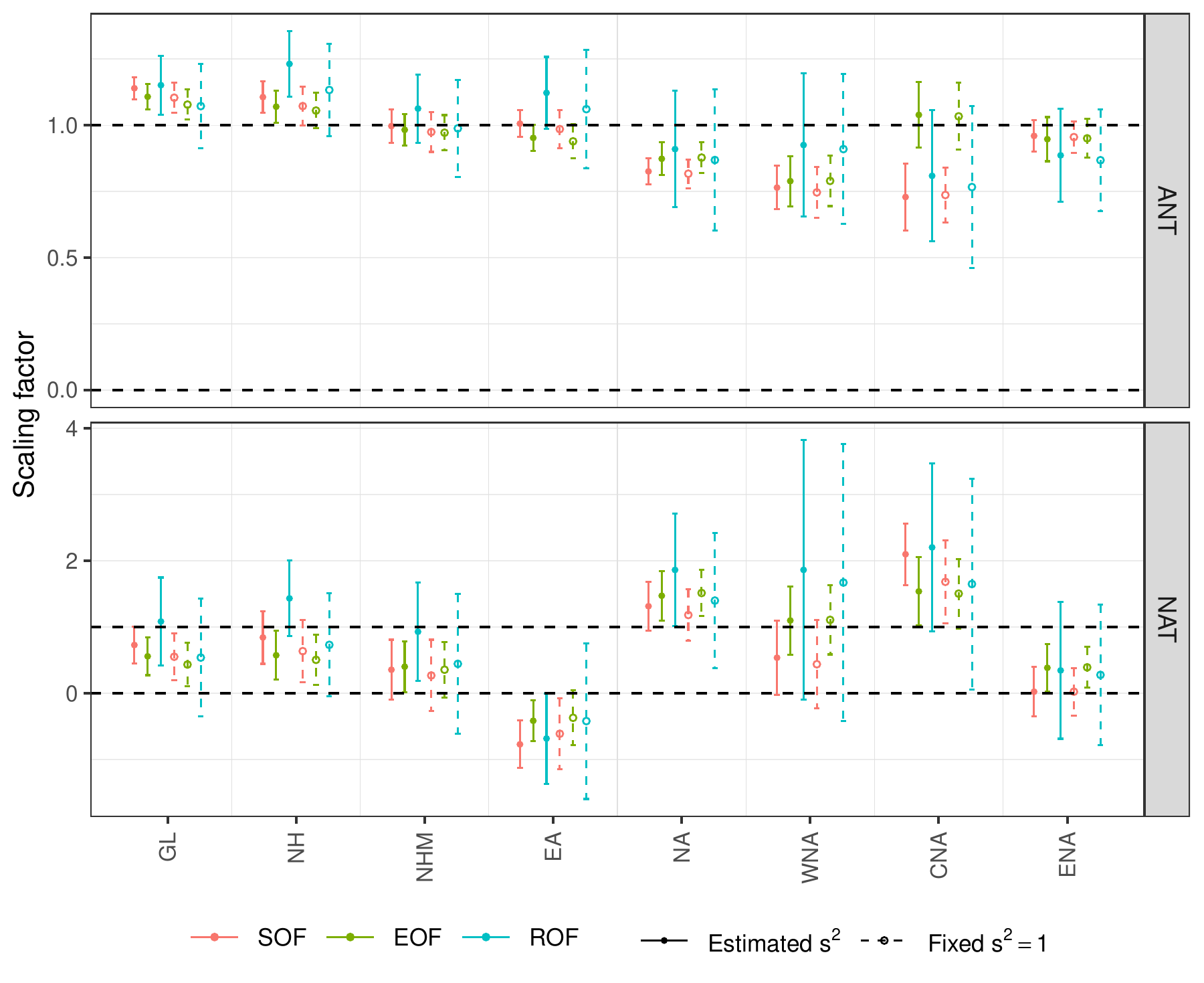}
  \caption{Estimated ANT and NAT scaling factors with 90\% confidence
  intervals across spatial domains over 1951--2020. Results are shown
  for SOF with bias-corrected spike estimation, EOF with empirical
  spike estimation, and ROF with linearly optimal weighting, using
  either the estimated inflation multipliers $s_x^2$ and $s_z^2$ or
  $s_x^2=s_z^2=1$.}
  \label{fig:real_data}
\end{figure}

Under the same multiplier specification, the three methods produced
noticeably different estimates, particularly for NAT,
whereas point estimates within each method were generally similar
between the estimated and unit multiplier specifications. The
confidence intervals differed more substantially. Using the estimated
multipliers generally produced shorter intervals than fixing them at
one, especially in the smaller regions where the residual consistency
test favored the estimated specification. Across methods, SOF and EOF
generally produced shorter intervals than ROF, with SOF slightly
shorter than EOF in most analyses. Together with the simulation
results, which showed substantial bias for EOF but accurate estimation
and valid uncertainty quantification for SOF, these findings support
the improved efficiency and reliability of SOF.

For the ANT forcing, all three methods detected the signal in every
region, as all corresponding intervals excluded zero.
Attribution conclusions, however, varied across methods and regions.
ROF supported both detection and attribution in most regions, except
GL and NH under the estimated multiplier specification, where the
confidence intervals lay entirely above one, suggesting that the
simulated ANT response underestimated the observed response. SOF and
EOF showed similar underestimation in GL and NH under both multiplier
specifications. In contrast, the intervals for NA and WNA lay entirely
below one, suggesting overestimation of the simulated ANT response;
SOF also indicated overestimation in CNA.

The NAT forcing showed a weaker and less stable pattern. The confidence
intervals were generally wider, and conclusions varied more across
methods and multiplier specifications. Under the unit multiplier, 
ROF supported both detection and attribution only in NA,
in agreement with the other methods. With estimated multipliers, ROF
additionally supported detection and attribution in GL, NH, and NHM.
SOF supported both detection and attribution in GL, NH, and NA, and
detection in CNA, where the interval lay entirely above one. EOF gave
a similar pattern but additionally supported detection and attribution
in WNA and provided weak evidence of detection in ENA. The intervals
for EA lay entirely below zero, which may reflect the difficulty of
estimating the weak NAT signal over this broad and climatically
heterogeneous region. Together with the simulation results, these
findings illustrate the practical impact of SOF on detection and
attribution conclusions.


\subsection{Retrospective Attribution Across IPCC Assessment Periods}

Given the central role of detection and attribution analyses in
successive IPCC assessment reports, we examined how conclusions for
global mean temperature change across observational periods
corresponding to these assessments when analyzed using modern
observations, climate model simulations, and statistical methods. We
considered periods beginning in 1951 and ending in the assessment
years from the First to the Sixth Assessment Report. Using HadCRUT5 observations and CMIP6 simulations, 
we performed two-forcing analyses of ANT and NAT and compared the proposed SOF method with the commonly used ROF approach, implemented with the
corrected inferential procedure of \citet{li2025regularized}. EOF was
excluded because of concerns regarding its inferential reliability.
The estimated scaling factors and corresponding 90\% confidence intervals are shown
in Figure~\ref{fig:results_ipcc}.

\begin{figure}[!ht]
  \centering
  \includegraphics[width=5in, height=2in]{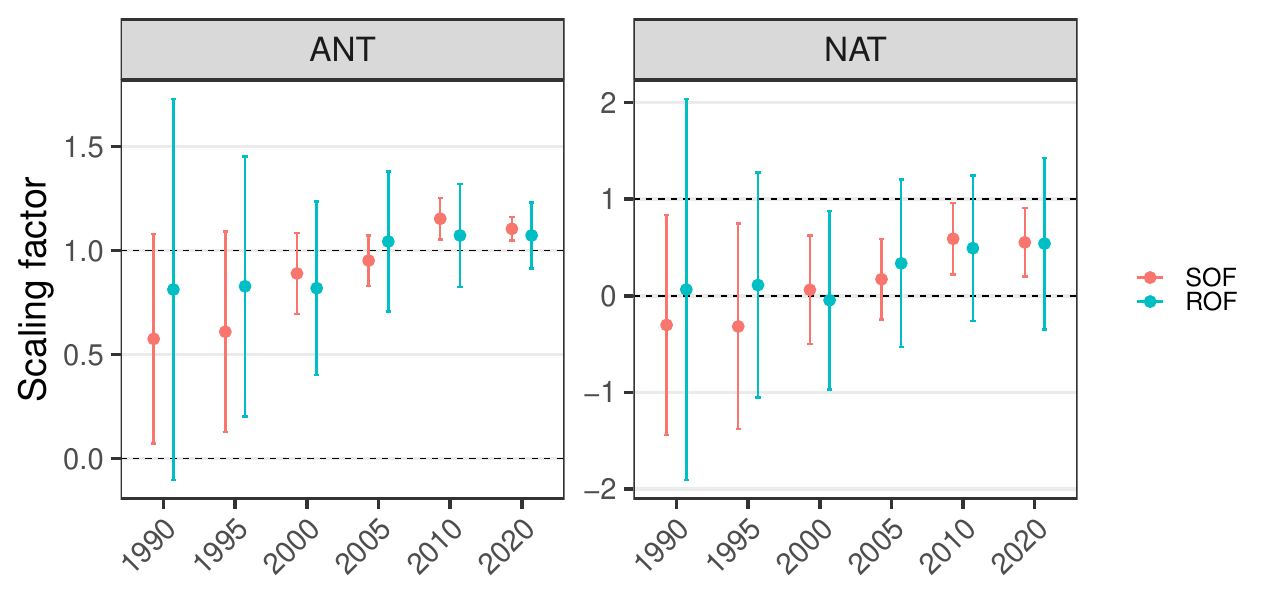}
  \caption{Estimated scaling factors for ANT and NAT forcings, 
  with 90\% CIs, for analyses from 1951 to cutoff years.
  Results are shown for SOF and ROF with $s_x^2=s_z^2=1$.}
  \label{fig:results_ipcc}
\end{figure}

Both methods produced progressively narrower confidence intervals
across successive assessment periods, reflecting the increasing
strength of the forced signal, but led to notably different conclusions
for the earliest periods. For the period ending with the First
Assessment Report, which stated that ``the unequivocal detection of
the enhanced greenhouse effect from observations is not likely''
\citep{IPCC1990}, SOF already supported both detection and attribution
of the ANT signal, whereas ROF did not. Beginning with the period
ending with the Second Assessment Report, which stated that ``the
balance of evidence suggests a discernible human influence on global
climate'' \citep{IPCC1995}, both methods supported detection of the
ANT signal. For the more recent assessment periods, however, the SOF
confidence intervals lay entirely above one, suggesting that the ANT
response simulated by CMIP6 models underestimated the magnitude of the
observed anthropogenic response.

For NAT, the signal was substantially weaker and the conclusions were
less robust. ROF did not detect the signal in any assessment period,
whereas SOF provided at most weak evidence of detection, primarily for
the 2010 and subsequent periods. Because these findings are based on a
single analysis, it is difficult to determine whether the weak NAT
detection reflects a reproducible signal or finite-sample variation.
Thus, unlike ANT, NAT was not robustly detected in this two-forcing
analysis of global mean temperature.

These retrospective findings should not be viewed as contradicting the
historical assessments, which necessarily reflected the
observations, climate model simulations, and statistical methods
available at the time. Rather, together with the simulation studies
and preceding regional analyses, they highlight the potential
importance of the proposed method when the forced signal is weak, 
and motivate revisiting existing D\&A analyses under such settings.


\section{Conclusion}\label{sec:disc}

We develop a spiked optimal fingerprinting framework for reliable detection and 
attribution analysis with limited control simulations and 
potential mismatch between model-simulated and observed variability.
The proposed framework exploits the spiked structure of internal variability 
while regularizing the remaining covariance spectrum to construct stable
weight matrices. It incorporates 
variability inflation, allowing for systematic differences between
model-simulated and observed variability. Together with the proposed
uncertainty quantification procedure and residual diagnostic, the
framework provides a practical approach for reliable inference without
imposing overly restrictive covariance structures.

Simulation studies demonstrate that the proposed method improves 
both estimation accuracy and uncertainty quantification 
relative to existing practical approaches, achieving lower estimation 
error while maintaining confidence interval coverage close to the nominal 
level and producing shorter intervals.
Our analyses of annual mean near-surface temperature changes further
illustrate the practical impact of the proposed framework. The
estimated variability inflation multipliers differed substantially from
the conventional unit specification, indicating that climate model
variability may not fully represent observed variability. The proposed framework led to
shorter confidence intervals and different detection and attribution
conclusions in several regions. These findings highlight the
importance of flexible uncertainty quantification for interpreting
climate attribution analyses and motivate further reassessment of D\&A
results under more comprehensive statistical frameworks.

Several extensions remain of interest. The current framework assumes
constant scaling factors and scalar variability inflation across the
analysis domain. Future work could incorporate spatially varying
scaling factors and region-specific variability adjustments to better
capture heterogeneous climate responses. 
In addition, the current
framework does not explicitly account for structural differences among
climate models within multimodel ensembles. Developing methods that
integrate model structural uncertainty into the fingerprinting
framework may further improve inference for large-scale climate
applications. 

\section{Disclosure statement}\label{disclosure-statement}

The authors declare that they have no conflicts of interest.

\section{Data Availability Statement}\label{data-availability-statement}

The climate datasets used in this study are publicly available at: 1) CMIP6, \url{https://aims2.llnl.gov/search/cmip6/} 2)
HadCRUT5, \url{https://www.metoffice.gov.uk/hadobs/hadcrut5/}.

\phantomsection\label{supplementary-material}
\bigskip

\begin{center}
{\large\bf SUPPLEMENTARY MATERIAL}
\end{center}

(A) Proofs of the Main Text; (B) Detailed Results on Simulation Studies; (C) The data and code used to reproduce the numerical studies are 
  contained in a zipped file named \texttt{reproducibility\_materials.zip}.

\bibliographystyle{apalike}
\setlength{\bibsep}{0pt}
\bibliography{cited}

@article{onatski2012asymptotics,
  title={Asymptotics of the principal components estimator of large factor models with weakly influential factors},
  author={Onatski, Alexei},
  journal={Journal of Econometrics},
  volume={168},
  number={2},
  pages={244--258},
  year={2012},
  publisher={Elsevier}
}

@article{passemier2017estimation,
  title={On estimation of the noise variance in high dimensional probabilistic principal component analysis},
  author={Passemier, Damien and Li, Zhaoyuan and Yao, Jianfeng},
  journal={Journal of the Royal Statistical Society: Series B (Statistical Methodology)},
  volume={79},
  number={1},
  pages={51--67},
  year={2017},
  publisher={Wiley Online Library}
}

@inproceedings{anderson1956statistical,
  title={Statistical inference in factor analysis},
  author={Anderson, Theodore W and Rubin, Herman},
  booktitle={Proceedings of the third Berkeley symposium on mathematical statistics and probability},
  volume={5},
  pages={111--150},
  year={1956}
}

@Article{JohnstoneP2018,
	author    = {Johnstone, Iain M and Paul, Debashis},
	title     = {{PCA} in high dimensions: An orientation},
	journal   = {Proceedings of the IEEE},
	year      = {2018},
	volume    = {106},
	number = {8},
	pages     = {1277--1292}
}

@article{li2024testing,
  title={Testing general linear hypotheses under a high-dimensional multivariate regression model with spiked noise covariance},
  author={Li, Haoran and Aue, Alexander and Paul, Debashis and Peng, Jie},
  journal={Journal of the American Statistical Association},
  volume={119},
  number={548},
  pages={2799--2810},
  year={2024},
  publisher={Taylor \& Francis}
}

@article{johnstone2001spiked,
  title={On the distribution of the largest eigenvalue in principal components analysis},
  author={Johnstone, Iain M},
  journal={The Annals of Statistics},
  volume={29},
  number={2},
  pages={295--327},
  year={2001},
  publisher={Institute of Mathematical Statistics}
}

@book{fuller2009measurement,
  title={Measurement error models},
  author={Fuller, Wayne A},
  year={2009},
  publisher={John Wiley \& Sons}
}

@article{bai1998no,
  title={No eigenvalues outside the support of the limiting spectral distribution of large-dimensional sample covariance matrices},
  author={Bai, Zhi-Dong and Silverstein, Jack W},
  journal={The Annals of Probability},
  volume={26},
  number={1},
  pages={316--345},
  year={1998},
  publisher={Institute of Mathematical Statistics}
}

@incollection{Bind:etal:dete:2013,
  author = {Bindoff, N. L. and Stott, P. A. and AchutaRao, K. M. and others},
  title     = {Detection and Attribution of Climate Change: From Global to Regional},
  booktitle = {Climate Change 2013: The Physical Science Basis},
  publisher = {Cambridge University Press},
  pages     = {867--952},
  year      = {2013},
  doi       = {10.1017/CBO9781107415324.022}
}

@incollection{eyring2021human,
  author    = {Eyring, V. and Gillett, N. P. and Achuta Rao, K. M. and others},
  title     = {Human Influence on the Climate System},
  booktitle = {Climate Change 2021: The Physical Science Basis},
  publisher = {Cambridge University Press},
  pages     = {423--552},
  year      = {2021}
}

@article{hegerl2007understanding,
  title={Understanding and attributing climate change},
  author={Hegerl, Gabriele C and Zwiers, Francis W and others},
  journal={Contribution of Working Group I to the Fourth Assessment Report of the Intergovernmental Panel on Climate Change (IPCC)},
  year={2007}
}

@article{hegerl1996detecting,
  title =	 {Detecting greenhouse-gas-induced climate change with
                  an optimal fingerprint method},
  author =	 {Hegerl, Gabriele C and von Storch, Hans and
                  Hasselmann, Klaus and Santer, Benjamin D and
                  Cubasch, Ulrich and Jones, Philip D},
  journal =	 {Journal of Climate},
  volume =	 9,
  number =	 10,
  pages =	 {2281--2306},
  year =	 1996
}

@article{Alle:Tett:chec:1999,
  author =	 {Allen, M. R. and Tett, S. F. B.},
  journal =	 {Climate Dynamics},
  pages =	 {419--434},
  title =	 {Checking for Model Consistency in Optimal
                  Fingerprinting},
  volume =	 15,
  year =	 1999,
}

@article{Alle:Stot:esti:2003,
  author =	 {Allen, M. R. and Stott, P. A.},
  title =	 {Estimating Signal Amplitudes in Optimal
                  Fingerprinting, Part {I}: {T}heory},
  journal =	 {Climate Dynamics},
  volume =	 21,
  year =	 2003,
  pages =	 {477--491},
  issue =	 5
}

@article{Ribe:Plan:Terr:appl:2013,
  title =	 {Application of Regularised Optimal Fingerprinting to
                  Attribution. {P}art {I}: {M}ethod, Properties and
                  Idealised Analysis},
  author =	 {Ribes, Aur{\'e}lien and Planton, Serge and Terray,
                  Laurent},
  journal =	 {Climate Dynamics},
  volume =	 41,
  number =	 {11-12},
  pages =	 {2817--2836},
  year =	 2013,
  publisher =	 {Springer}
}

@article{ribes2009adaptation,
  title =	 {Adaptation of the Optimal Fingerprint Method for
                  Climate Change Detection Using a Well-Conditioned
                  Covariance Matrix Estimate},
  author =	 {Ribes, Aur{\'e}lien and Aza{\"\i}s, Jean-Marc and
                  Planton, Serge},
  journal =	 {Climate Dynamics},
  volume =	 33,
  number =	 5,
  pages =	 {707--722},
  year =	 2009,
  publisher =	 {Springer}
}

@article{ledoit2004well,
  title =	 {A well-conditioned estimator for large-dimensional
                  covariance matrices},
  author =	 {Ledoit, Olivier and Wolf, Michael},
  journal =	 {Journal of Multivariate Analysis},
  volume =	 88,
  number =	 2,
  pages =	 {365--411},
  year =	 2004,
  publisher =	 {Elsevier}
}

@article{delsole2019,
  Title =	 {Confidence Intervals in Optimal Fingerprinting},
  Author =	 {Timothy DelSole and Laurie Trenary and Xiaoqin Yan
                  and Michael K. Tippett},
  Year =	 2019,
  Journal =	 {Climate Dynamics},
  volume =	 52,
  pages =	 {4111--4126}
}

@article{zhang2006multimodel,
  Title =	 {Multimodel Multisignal Climate Change Detection at
                  Regional Scale},
  Author =	 {Xuebin Zhang and Francis Zwiers and P. A. Stott},
  Journal =	 {Journal of Climate},
  Year =	 2006,
  volume =	 19,
  number =	 17,
  pages =	 {4294--4307}
}

@article{li2023regularized,
  author =        {Li, Yan and Chen, Kun and Yan, Jun and Zhang, Xuebin},
  journal =       {Annals of Applied Statistics},
  number =        {1},
  pages =         {225--239},
  title =         {Regularized Fingerprinting in Detection and
                   Attribution of Climate Change with Weight Matrix
                   Optimizing the Efficiency in Scaling Factor
                   Estimation},
  volume =        {17},
  year =          {2023},
  doi =           {10.1214/22-AOAS1624},
}

@article{li2021confidence,
  title =	 {Uncertainty in optimal fingerprinting is underestimated},
  author =	 {Li, Yan and Chen, Kun and Yan, Jun and Zhang,
                  Xuebin},
  journal =	 {Environmental Research Letters},
  year =	 2021,
  volume =       16,
  number =       8,
  pages =         {084043}
}

@article{katzfuss2017bayesian,
  title={A {B}ayesian Hierarchical Model for Climate Change Detection and Attribution},
  author={Katzfuss, Matthias and Hammerling, Dorit and Smith, Richard L},
  journal={Geophysical Research Letters},
  volume={44},
  number={11},
  pages={5720--5728},
  year={2017},
  publisher={Wiley Online Library}
}

@Article{hannart2016integrated,
  Title =	 {Integrated Optimal Fingerprinting: Method
                  Description and Illustration},
  Author =	 {Alexis Hannart},
  Journal =	 {Journal of Climate},
  Year =	 2016,
  volume =	 29,
  number =	 6,
  pages =	 {1977--1998}
}

@article{ma2023optimal,
  title={Optimal fingerprinting with estimating equations},
  author={Ma, Sai and Wang, Tianying and Yan, Jun and Zhang, Xuebin},
  journal={Journal of Climate},
  volume={36},
  number={20},
  pages={7109--7122},
  year={2023}
}

@article{morice2021updated,
  author =        {Morice, Colin P and Kennedy, John J and others},
  journal =       {Journal of Geophysical Research: Atmospheres},
  number =        {3},
  pages =         {e2019JD032361},
  publisher =     {Wiley Online Library},
  title =         {An updated assessment of near-surface temperature
                   change from 1850: {T}he {HadCRUT5} data set},
  volume =        {126},
  year =          {2021},
}

@article{eyring2016overview,
  author =        {Eyring, Veronika and Bony, Sandrine and others},
  journal =       {Geoscientific Model Development (Online)},
  pages =         {1937--1958},
  publisher =     {Lawrence Livermore National Lab.(LLNL), Livermore, CA
                   (United States)},
  title =         {Overview of the {C}oupled {M}odel {I}ntercomparison
                   {P}roject {P}hase 6 ({CMIP6}) Experimental Design and
                   Organization},
  volume =        {9},
  number =        {5},
  year =          {2016},
}

@article{paul2007asymptotics,
  title={Asymptotics of sample eigenstructure for a large dimensional spiked covariance model},
  author={Paul, Debashis},
  journal={Statistica Sinica},
  pages={1617--1642},
  year={2007},
  publisher={JSTOR}
}

@article{chen2024statistical,
  title={A statistical review on the optimal fingerprinting approach in climate change studies},
  author={Chen, Hanyue and Chen, Song Xi and Mu, Mu},
  journal={Climate Dynamics},
  volume={62},
  number={2},
  pages={1439--1446},
  year={2024},
  publisher={Springer}
}

@misc{li2025regularized,
      title={Regularized Fingerprinting with Linearly Optimal Weight Matrix in Detection and Attribution of Climate Change}, 
      author={Haoran Li and Yan Li},
      year={2025},
      eprint={2505.04070},
      archivePrefix={arXiv},
      primaryClass={stat.ME},
      url={https://arxiv.org/abs/2505.04070}, 
}

@article{fuller1980properties,
  title =	 {Properties of some estimators for the
                  errors-in-variables model},
  author =	 {Fuller, Wayne A},
  journal =	 {Annals of Statistics},
  volume =       8,
  pages =	 {407--422},
  year =	 1980,
  publisher =	 {JSTOR}
}

@article{pesta2012total,
  Title =	 {Total Least Squares and Bootstrapping with
                  Applications in Calibration},
  Author =	 {Michal Pesta},
  Year =	 2013,
  Journal =	 {Statistics},
  volume =	 47,
  number =	 5,
  pages =	 {966--991}
}

@article{jain2023importance,
  title={Importance of internal variability for climate model assessment},
  author={Jain, Shipra and Scaife, Adam A and Shepherd, Theodore G and Deser, Clara and Dunstone, Nick and Schmidt, Gavin A and Trenberth, Kevin E and Turkington, Thea},
  journal={npj Climate and Atmospheric Science},
  volume={6},
  number={1},
  pages={68},
  year={2023},
  publisher={Nature Publishing Group UK London}
}

@article{kritchman2008determining,
  title={Determining the number of components in a factor model from limited noisy data},
  author={Kritchman, Shira and Nadler, Boaz},
  journal={Chemometrics and Intelligent Laboratory Systems},
  volume={94},
  number={1},
  pages={19--32},
  year={2008},
  publisher={Elsevier}
}

@incollection{IPCC1990,
  author =        {{IPCC}},
  booktitle =     {The IPCC Scientific Assessment Contribution of
                   Working Group I to the First Assessment Report of the
                   Intergovernmental Panel on Climate Change},
  editor =        {Houghton, J. T. and Jenkins, G. J. and Ephraums, J. J.},
  pages =         {365},
  publisher =     {Cambridge University Press},
  title =         {Climate Change},
  year =          {1990},
}

@incollection{IPCC1995,
  author =        {{IPCC}},
  booktitle =     {Contribution of Working Group I to the Second
                   Assessment Report of the Intergovernmental Panel on
                   Climate Change},
  editor =        {Houghton, J. T. and Meira Filho, L. G. and
                   Callander, B. A. and Harris, N. and Kattenberg, A. and
                   Maskell, K.},
  pages =         {588},
  publisher =     {Cambridge University Press},
  title =         {Climate Change 1995 the science of climate change},
  year =          {1996},
}

@unpublished{chen2026valid,
  author = {Chen, Hanyue and Chen, Song Xi and Qiu, Jingkun},
  title  = {Valid Confidence Intervals for Detection and Attribution
            in Climate Change Studies},
  year   = {2026},
  note   = {Manuscript}
}

@article{monahan2009empirical,
  title={Empirical orthogonal functions: The medium is the message},
  author={Monahan, Adam H and Fyfe, John C and Ambaum, Maarten HP and Stephenson, David B and North, Gerald R},
  journal={Journal of Climate},
  volume={22},
  number={24},
  pages={6501--6514},
  year={2009}
}

@article{fasullo2020evaluation,
  title={Evaluation of leading modes of climate variability in the {CMIP} archives},
  author={Fasullo, John T and Phillips, AS and Deser, C},
  journal={Journal of Climate},
  volume={33},
  number={13},
  pages={5527--5545},
  year={2020}
}

@article{li2026adaptable,
  title={Adaptable Fingerprinting with Nonlinear Shrinkage for Climate Change Detection and Attribution under Variance Heterogeneity},
  author={Li, Haoran and Li, Yan},
  journal={arXiv preprint arXiv:2608.12496},
  year={2026}
}

@article{rateb2026structural,
  title={Structural deficits in large ensembles limit detection and attribution of terrestrial water storage},
  author={Rateb, Ashraf and R. Scanlon, Bridget and Buzzanga, Brett},
  journal={Communications Earth \& Environment},
  year={2026},
  publisher={Nature Publishing Group UK London}
}

\appendix

\clearpage
\setcounter{section}{0}
\renewcommand{\thesection}{S.\arabic{section}}
\setcounter{equation}{0}
\renewcommand{\theequation}{S.\arabic{equation}}
\setcounter{subsection}{0}
\renewcommand{\thesubsection}{S.\arabic{section}.\arabic{subsection}}
\setcounter{table}{0}
\renewcommand{\thetable}{S.\arabic{table}}
\setcounter{figure}{0}
\renewcommand{\thefigure}{S.\arabic{figure}}

\setcounter{theorem}{0}
\renewcommand\thetheorem{S.\arabic{section}.\arabic{theorem}}
\renewcommand\thelemma{S.\arabic{section}.\arabic{lemma}}

\setcounter{proposition}{0}
\renewcommand\theproposition{S.\arabic{proposition}}

\begin{center}
    {\Large\bf {Supplementary Material to\\ ``Fingerprint Analysis for Climate Change Detection and Attribution under a Latent Factor Model''}}
\end{center}

\section{Implementation Details}\label{sec:implement_details}

\subsection{Bias-Corrected Spike Estimation}

The bias-corrected estimation procedure for the spike strengths and
noise variance, following \citet{passemier2017estimation} and
\citet{li2024testing}, is summarized below.

\begin{algorithm}[h]
\caption{Adjusted estimation of $\pi_k$'s and $\sigma_z^2$}
\label{algo:estimation_spikes_var}
\begin{algorithmic}
\STATE {\bf Input:} $K$,  $\hat\gamma = N/m$, and ordered eigenvalues $\lambda_1(\bs{S})\geq\lambda_2(\bs{S}) \geq\cdots \geq \lambda_N(\bs{S})$ of $\bs{S}$.
\STATE {\bf Step 1:} Compute $\bs{\tilde{\sigma}}_z^2 = (N-K)^{-1} \sum_{j=K+1}^N \lambda_j(\bs{S})$ and $\bs{\tilde{\pi}}_k = \lambda_k (\bs{S}) - \bs{\tilde{\sigma}}_z^2$.
\STATE {\bf Step 2:} Utilizing \eqref{eq:bias_spikes}, for each $k= 1,\dots, K$, calculate a first-iteration estimator of $\pi_k$, say $\bs{\hat{\pi}}_k^{(1)}$, by solving the following equation
\[ \bs{\tilde{\pi}}_k = \bs{\tilde{\sigma}}_z^2 \psi\Big(\bs{\hat{\pi}}_k^{(1)}/ \bs{\tilde{\sigma}}_z^2, \hat\gamma\Big) = \bs{\hat{\pi}}_k^{(1)}+ \frac{\hat\gamma(\bs{\hat{\pi}}_k^{(1)}+ \bs{\tilde{\sigma}}_z^2 )}{\bs{\hat{\pi}}_k^{(1)} /\bs{\tilde{\sigma}}_z^2} .  \]

Note it is a quadratic equation of $\bs{\hat{\pi}}_k^{(1)}$. The equation has two real-valued roots if $\bs{\tilde{\pi}}_k/\bs{\tilde{\sigma}}_z^2 \geq \hat\gamma + 2\sqrt{\hat\gamma}$ and two complex-valued roots otherwise. We take $\bs{\hat{\pi}}_k^{(1)} = \max\{\operatorname{Re}(r_1), \operatorname{Re}(r_2)\}$, where $r_1$ and $r_2$ are the two roots and $\operatorname{Re}(r)$ is the real part of $r$.   
\STATE {\bf Step 3:} Denote $b_0 = K + \sum_{k=1}^K \bs{\tilde{\sigma}}_z^2/\bs{\hat{\pi}}_k^{(1)}$. The adjusted estimator of $\sigma^2_z$ is 
\[ \bs{\hat{\sigma}}_z^2 = \bs{\tilde{\sigma}}_z^2 \left(1+ \frac{\hat\gamma b_0}{N-K}\right).\]
\STATE {\bf Step 4:} Replace $\bs{\tilde{\sigma}}_z^2$ with $\bs{\hat{\sigma}}_z^2$ and repeat \textbf{Step 2} to solve for the second-iteration estimator $\bs{\hat{\pi}}_k^{(2)}$ of $\pi_k$. In case $\bs{\hat{\pi}}_k^{(2)} /\bs{\hat{\sigma}}_z^2 < \sqrt{\hat\gamma}$, set $\bs{\hat{\pi}}_k^{(2)} = \bs{\hat{\sigma}}^2_z \sqrt{\hat\gamma}$. 
\STATE {\bf Output:} The adjusted estimators $\bs{\hat{\pi}}_k = \bs{\hat{\pi}}_k^{(2)}$, $k =1,\dots, K$, and the noise variance $\bs{\hat{\sigma}}_z^2$. 
\end{algorithmic}
\end{algorithm}

\subsection{Closed-Form Expression for the TLS Estimator}

A closed-form expression for the TLS estimator is given in
Lemma~\ref{lemma:closed_form_TLS}.

\begin{lemma}
    \label{lemma:closed_form_TLS}
    The TLS estimator $\hat\beta(s_x, W)$ defined in \eqref{eq:TLS} of the main text admits a closed-form expression as 
    \[ \hat{\beta} = \Big[ \frac{1}{N} \bs{\bar{X}}^T W^{-1}\bs{\bar{X}} - \lambda(s_x^2) D \Big]^{-1} \frac{1}{N} \bs{\bar{X}}^T W^{-1} Y,\]
    where $\lambda(s_x^2)$ is the smallest eigenvalue of the matrix 
    \[ A (s_x^2) = \frac{1}{N} \begin{bmatrix}
        I_p & 0\\
        0 & s_x
    \end{bmatrix} 
    \begin{bmatrix}
        D^{-1/2} \bs{\bar{X}}^T \\
        \bs{Y}^T 
    \end{bmatrix} W^{-1}  
    \begin{bmatrix}
        \bs{\bar{X}} D^{-1/2} & \bs{Y}  
    \end{bmatrix}
    \begin{bmatrix}
        I_p & 0 \\ 0 & s_x
    \end{bmatrix}
    \] 
\end{lemma}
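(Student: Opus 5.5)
The plan is to rewrite the TLS objective as a Rayleigh quotient of the matrix $A(s_x^2)$, so the minimiser is read off from the eigenvector of its smallest eigenvalue. To match the lemma, write $M$ for the weight matrix in the quadratic form, i.e.\ the objective is $(\bs{Y}-\bs{\bar X}\beta)^T M(\bs{Y}-\bs{\bar X}\beta)/(1+s_x^2\beta^TD\beta)$. The lemma's formula is stated with $M=W^{-1}$; the argument below does not depend on which positive definite matrix plays this role, so it also covers $M=W$ as in \eqref{eq:TLS}.

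\textbf{Reparametrisation.} For $\beta\in\mathbb{R}^p$ define
\[
c=c(\beta)=\begin{bmatrix}-D^{1/2}\beta\\ 1/s_x\end{bmatrix}\in\mathbb{R}^{p+1},
\qquad
B=\begin{bmatrix}\bs{\bar X}D^{-1/2} & \bs{Y}\end{bmatrix}\begin{bmatrix}I_p&0\\0&s_x\end{bmatrix}.
\]
Then $Bc=\bs{Y}-\bs{\bar X}\beta$, so the numerator equals $c^TB^TMBc=N\,c^TA(s_x^2)c$. For the denominator,
\[
c^Tc=\beta^TD\beta+s_x^{-2}, \qquad\text{so}\qquad 1+s_x^2\beta^TD\beta=s_x^2\,c^Tc.
\]
Hence the objective is $(N/s_x^2)\,c^TAc/c^Tc$. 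Every $c\in\mathbb{R}^{p+1}$ with nonzero last coordinate is a rescaling of some $c(\beta)$, and the Rayleigh quotient is scale invariant. Therefore minimising over $\beta$ is the same as minimising $c^TAc/c^Tc$ over vectors with nonzero last coordinate.

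\textbf{Identifying the minimiser.} The unconstrained minimum of $c^TAc/c^Tc$ is $\lambda=\lambda(s_x^2)$, the smallest eigenvalue of $A$, attained at its eigenvector $v$. The TLS solution is obtained from $v$ rescaled so its last entry is $1/s_x$. This requires the last coordinate of $v$ to be nonzero. I would justify this through strict Cauchy interlacing. The top-left $p\times p$ block of $A$ is $A_{11}=N^{-1}D^{-1/2}\bs{\bar X}^TM\bs{\bar X}D^{-1/2}$. If $\lambda<\lambda_{\min}(A_{11})$ strictly, then no eigenvector for $\lambda$ can vanish in its last coordinate, since such a vector would be an eigenvector of $A_{11}$ with eigenvalue $\lambda$. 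The strict inequality fails only on a null event: it would force $\bs{Y}$ to lie in a degenerate set relative to $\bs{\bar X}$, which has probability zero under the Gaussian model. The same strict inequality makes $A_{11}-\lambda I_p$ positive definite.

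\textbf{Deriving the closed form.} Write the eigen-equation $Ac=\lambda c$ at $c=c(\hat\beta)$ and keep only its first $p$ rows:
\[
\frac1N D^{-1/2}\bs{\bar X}^TM\big(-\bs{\bar X}\hat\beta+\bs{Y}\big)=-\lambda D^{1/2}\hat\beta.
\]
Multiplying on the left by $D^{1/2}$ and rearranging gives
\[
\Big[\tfrac1N\bs{\bar X}^TM\bs{\bar X}-\lambda D\Big]\hat\beta=\tfrac1N\bs{\bar X}^TM\bs{Y}.
\]
The bracket equals $D^{1/2}(A_{11}-\lambda I_p)D^{1/2}$, which is positive definite, hence invertible. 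Solving for $\hat\beta$ yields the stated formula.

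The algebra is routine. The main obstacle is the measure-zero argument: showing that the last coordinate of the minimising eigenvector is nonzero and that $\lambda<\lambda_{\min}(A_{11})$ almost surely, which is exactly where strict interlacing and the continuity of the Gaussian distribution are needed.
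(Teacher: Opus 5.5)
Your argument is correct. It is the standard Rayleigh-quotient (smallest-eigenvalue) derivation that the paper omits and attributes to the errors-in-variables literature \citep{fuller2009measurement}. You are also right that the $W^{-1}$ in the statement clashes with \eqref{eq:TLS}: the paper itself applies the lemma with $W$ in the proof of Theorem~\ref{thm:consistency_s_hat}, and your general-$M$ version covers both readings. Its deterministic part never uses $M\succ 0$, so it also covers the singular admissible weights with some $\theta_k=-1$, where $W^{-1}$ does not exist.

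The almost-sure step you only sketch can be closed as follows. Write $A=\begin{bmatrix}A_{11}&b\\ b^T&a\end{bmatrix}$ and $\mu=\lambda_{\min}(A_{11})$. Suppose $b$ has a nonzero component in the $\mu$-eigenspace of $A_{11}$. Then $f(t)=a-t-b^T(A_{11}-tI_p)^{-1}b$ runs from $+\infty$ to $-\infty$ on $(-\infty,\mu)$. Hence $\det(A-tI_{p+1})=\det(A_{11}-tI_p)f(t)$ vanishes at some $t<\mu$, which gives $\lambda_{\min}(A)<\mu$. Conditionally on $\bs{\bar X}$ and the control runs, $b=N^{-1}s_xD^{-1/2}\bs{\bar X}^TM\bs{Y}$ is Gaussian with covariance $s_x^2N^{-2}D^{-1/2}\bs{\bar X}^TM\Sigma M\bs{\bar X}D^{-1/2}$. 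This covariance is nonsingular whenever $M\bs{\bar X}$ has full column rank, which holds almost surely, so the required component is nonzero with probability one.
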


\section{Technical Proofs}
\label{sec:technical_proof}

In this section, we present the proofs of all technical results stated in the manuscript. Throughout the analysis, we use the notation $O_{\prec}(N^{a})$ and $o_{\prec}(N^{a})$
to denote a random matrix whose spectral norm is of stochastic order \(O_p(N^{a})\) and $o_p(N^{a})$, for any \(a\in\mathbb{R}\). The dimension of the matrix is determined implicitly by the surrounding context and may vary from line to line.

We start with a collection of technical lemmas. 
\begin{lemma}\label{lemma:expectation_quadratic_form}
Suppose that $Z_1 \sim N\left(0, I_N\right), Z_2 \sim N\left(0, I_N\right)$, and $Z_1$ is independent of $Z_2$. Suppose $D_1$ and $D_2$ are $N \times N$ deterministic symmetric matrices. Then,
$$
\begin{aligned}
& \mathbb{E} Z_1^T D_1 Z_2=0 \\
& \mathbb{E} Z_1^T D_1 Z_1=\operatorname{Tr}\left(D_1\right) \\
& \mathbb{E}\left(Z_1^T D_1 Z_2\right)\left(Z_1^T D_2 Z_2\right)=\operatorname{Tr}\left(D_1 D_2\right) \\
& \mathbb{E}\left(Z_1^T D_1 Z_1\right)\left(Z_1^T D_2 Z_1\right)=2 \operatorname{Tr}\left(D_1 D_2\right)+\operatorname{Tr}\left(D_1\right) \operatorname{Tr}\left(D_2\right)
\end{aligned}
$$  
\end{lemma}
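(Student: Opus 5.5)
The four identities in the lemma are standard Gaussian moment calculations, and the plan is to reduce each one to moments of independent standard normal coordinates. Write $Z_1=(z_1,\dots,z_N)^T$ and $Z_2=(w_1,\dots,w_N)^T$. The $z_a, w_b$ are mutually independent $N(0,1)$ variables, with $\mE z_a=0$, $\mE z_a^2=1$ and $\mE z_a^4=3$. Each identity will be obtained by expanding the quadratic or bilinear form as a finite double (or quadruple) sum. Expectation then passes through the sum term by term, and products of independent factors factor into products of marginal moments.

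\emph{First two identities.} For the bilinear form we have $Z_1^TD_1Z_2=\sum_{a,b}(D_1)_{ab}z_aw_b$. Since $z_a$ and $w_b$ are independent, $\mE z_aw_b=\mE z_a\,\mE w_b=0$, which gives the first identity. For the quadratic form, $\mE z_az_b=\delta_{ab}$, so
\[
\mE Z_1^TD_1Z_1=\sum_a (D_1)_{aa}=\tr(D_1).
\]

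\emph{Third identity.} Expand the product as
\[
\sum_{a,b,c,d}(D_1)_{ab}(D_2)_{cd}\,\mE(z_az_c)\,\mE(w_bw_d),
\]
where the expectation factors by independence of $Z_1$ and $Z_2$. This equals $\sum_{a,b}(D_1)_{ab}(D_2)_{ab}=\tr(D_1D_2^T)$, and symmetry of $D_2$ gives $\tr(D_1D_2)$.

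\emph{Fourth identity.} This is the only step with real bookkeeping, and it is where I expect the main care to be needed. Apply Isserlis' (Wick's) theorem for jointly Gaussian variables:
\[
\mE z_az_bz_cz_d=\delta_{ab}\delta_{cd}+\delta_{ac}\delta_{bd}+\delta_{ad}\delta_{bc}.
\]
This formula correctly gives $3$ when $a=b=c=d$, and it accounts for all pair-partitions. The three pairings contribute, in turn:
\begin{itemize}[nosep]
\item $\sum_{a,c}(D_1)_{aa}(D_2)_{cc}=\tr(D_1)\tr(D_2)$;
\item $\sum_{a,b}(D_1)_{ab}(D_2)_{ab}=\tr(D_1D_2)$;
\item $\sum_{a,b}(D_1)_{ab}(D_2)_{ba}=\tr(D_1D_2)$.
\end{itemize}
Symmetry of $D_1$ and $D_2$ is used to identify the second pairing with $\tr(D_1D_2)$. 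Summing the three contributions yields $2\tr(D_1D_2)+\tr(D_1)\tr(D_2)$.

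As an alternative check on the fourth identity, diagonalize $D_1$ by an orthogonal matrix; rotational invariance of $Z_1$ justifies this. One then computes $\operatorname{Cov}$ of the two quadratic forms via $\operatorname{Var}(z_a^2)=2$, which reproduces the same formula.
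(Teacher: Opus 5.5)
Your proof is correct: each identity follows from the coordinate expansion, and your Isserlis bookkeeping for the fourth identity (with symmetry used to turn $\tr(D_1D_2^T)$ into $\tr(D_1D_2)$) is right. The paper omits the proof as straightforward, and your argument is the standard computation it has in mind.
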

The proof of this lemma is straightforward, and we omit the details.

\begin{lemma}[Lemma 2.7 of \cite{bai1998no}]
    \label{lemma:concentration_quadratic_form}
    Let $Z=\left(z_1, z_2, \ldots, z_p\right)^T$, where $z_i$'s are i.i.d. random variables with mean 0 and variance 1. Let $A$ be a deterministic matrix. Then for any $m \geq 2$, we have
$$
\mathbb{E}\left|Z^T A Z-\operatorname{Tr} A\right|^m \leqslant c_m\left(\mathbb{E} z_1^4 \operatorname{Tr} A A^T\right)^{m / 2}+c_m \mathbb{E} z_1^{2 m} \operatorname{Tr}\left[\left(AA^T\right)^{m / 2}\right],
$$
where $c_m$ is a constant only depending on $m$.
\end{lemma}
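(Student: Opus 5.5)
The plan is to split the centered quadratic form into its diagonal and off-diagonal parts:
\[
Z^TAZ-\tr A=\sum_{i}a_{ii}(z_i^2-1)+\sum_{i\neq j}a_{ij}z_iz_j=:T_1+T_2 .
\]
By $|x+y|^m\le 2^{m-1}(|x|^m+|y|^m)$, it suffices to bound $\mE|T_1|^m$ and $\mE|T_2|^m$ separately by the right-hand side. Throughout, the constant $c_m$ may change from line to line but depends only on $m$.

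\textbf{Diagonal part.} $T_1$ is a sum of independent centered variables, so Rosenthal's inequality gives
\[
\mE|T_1|^m\le c_m\Big(\sum_i a_{ii}^2\,\mE z_1^4\Big)^{m/2}+c_m\sum_i|a_{ii}|^m\,\mE z_1^{2m}.
\]
The first sum satisfies $\sum_i a_{ii}^2\le \tr AA^T$.

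For the second sum, note that $|a_{ii}|^2\le (AA^T)_{ii}$. Writing $(AA^T)_{ii}=e_i^TAA^Te_i$ as an integral against the spectral measure of $AA^T$ at $e_i$, Jensen's inequality with the convex map $t\mapsto t^{m/2}$ (valid since $m\ge 2$) gives
\[
\big((AA^T)_{ii}\big)^{m/2}\le \big((AA^T)^{m/2}\big)_{ii}.
\]
Hence $\sum_i|a_{ii}|^m\le\tr[(AA^T)^{m/2}]$.

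\textbf{Off-diagonal part.} Set $b_{ij}=a_{ij}+a_{ji}$ for $j<i$ and $b_{ij}=0$ otherwise, so that $B$ is strictly lower triangular. Then $T_2=\sum_i d_i$, where $d_i=z_i\sum_{j<i}b_{ij}z_j$, is a martingale-difference sum with respect to $\sigma(z_1,\dots,z_i)$. Burkholder's inequality gives
\[
\mE|T_2|^m\le c_m\,\mE\Big(\sum_i \mE_{i-1}d_i^2\Big)^{m/2}+c_m\sum_i\mE|d_i|^m .
\]

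For the second term, independence of $z_i$ from the inner sum and Rosenthal applied to that inner sum give
\[
\mE|d_i|^m\le \mE|z_1|^m\, c_m\Big[\big(\textstyle\sum_j b_{ij}^2\big)^{m/2}+\mE|z_1|^m\sum_j|b_{ij}|^m\Big].
\]
The row norms of $B$ are dominated by those of $A$ and $A^T$, so the Jensen argument above bounds $\sum_i(\sum_j b_{ij}^2)^{m/2}$ by $c_m\tr[(AA^T)^{m/2}]$. Finally, $\mE|z_1|^m\le \mE z_1^{2m}$ up to the usual Hölder manipulations, using that $\mE z_1^2=1$.

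The conditional-variance term equals $\mE(Z^TB^TBZ)^{m/2}$. I would write $Z^TB^TBZ=\tr(B^TB)+\big(Z^TB^TBZ-\tr B^TB\big)$ and argue by induction on $m$:
\begin{itemize}
\item For $2\le m<4$, use Jensen/Hölder together with the second-moment identity, in the spirit of Lemma \ref{lemma:expectation_quadratic_form}.
\item For $m\ge4$, apply the induction hypothesis with exponent $m/2$ to the centered quadratic form in the matrix $B^TB$.
\end{itemize}
The deterministic piece contributes $(\tr B^TB)^{m/2}\le (4\tr AA^T)^{m/2}$.

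\textbf{Main obstacle.} The difficulty is the recursive term, which produces Schatten-type quantities of $B$ such as $\tr[(B^TB)^{m/2}]$ that must be compared with $\tr[(AA^T)^{m/2}]$. Passing to the strictly lower-triangular truncation does not preserve such norms with constant $1$. My first attempt would invoke the boundedness of triangular truncation on Schatten-$p$ classes for $1<p<\infty$ (Macaev's theorem), whose constant depends only on $p$ and can therefore be absorbed into $c_m$. If that proves awkward, the fallback is the cruder row-wise bound used above, combined with the moment factor $\mE z_1^{2m}$ to absorb whatever remains.
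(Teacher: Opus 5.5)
Your argument is correct, and it follows the standard route behind Lemma~2.7 of Bai and Silverstein (1998), which the paper quotes without proof. That route is Rosenthal plus $\sum_i|a_{ii}|^m\le\tr[(AA^T)^{m/2}]$ for the diagonal, and Burkholder with induction on $m$ for the off-diagonal part. The recursive term $\mE|z_1|^m\tr[(B^TB)^{m/2}]$ is then closed by Macaev's bound on triangular truncation in the Schatten class $S_m$, $m\ge 4$, whose constant depends only on $m$.

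Commit to that step, or decouple $z_iz_j\to z_iz_j'$ first so that only $A-\operatorname{Diag}(A)$ enters the recursion and no triangular truncation is needed. Drop the fallback, because it fails. Row-wise quantities bound $\tr[(B^TB)^{m/2}]$ only from below. Passing through $(\tr B^TB)^{m/2}$ leaves a term $\mE|z_1|^m(\tr AA^T)^{m/2}$ that is not dominated by the right-hand side when $m>4$; take $A$ the shift matrix and $z_1$ with a rare large atom.
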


\begin{lemma}
    \label{prop:angle_tilde_U_U}
    Under our model assumptions, the empirical spiked eigenvectors and the true leading eigenvectors are such that 
    \begin{align*} 
    \bs{\widetilde{U}}^T U & =  \operatorname{Diag}\Big( \zeta \Big(\frac{\pi_1}{\sigma_z^2} , \gamma \Big), \dots, \zeta \Big(\frac{\pi_K}{\sigma_z^2} , \gamma \Big)\Big) + O_{\prec}\big(\frac{1}{\sqrt{N}}\big) \\
    & =\mc{Z} + O_{\prec}\big(\frac{1}{\sqrt{N}} \big).
    \end{align*}
\end{lemma}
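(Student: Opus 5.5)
The plan is to read the lemma off the eigenvector decomposition in Result 3 of Theorem~\ref{thm:asymptotics_eigenvalues_eigenvectors}, entry by entry, and then replace the population $\zeta$'s by their plug-in versions in $\mc{Z}$. Since $K$ is fixed, a $K\times K$ matrix is $O_{\prec}(1/\sqrt{N})$ as soon as each of its entries is $O_p(1/\sqrt{N})$. So the spectral-norm statement reduces to entrywise bounds on $U_d^T\bs{\widetilde{U}}_k$, $k,d\in\{1,\dots,K\}$.

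\textbf{Step 1 (entrywise expansion).} Take the inner product of the decomposition in Result 3 with $U_d$. Using $U_d^T U_\perp = 0$ and $U_d^T U = e_d^T$, we obtain
\[ U_d^T \bs{\widetilde{U}}_k = \bs{\omega}_k \delta_{kd} + \frac{1}{\sqrt{N}}\bs{\omega}_k \bs{q}_{kd}. \]
By (i), $\bs{\omega}_k = \zeta(\pi_k/\sigma_z^2,\gamma) + O_p(N^{-1/2})$, and in particular $\bs{\omega}_k = O_p(1)$. By (ii), $\bs{q}_k$ converges in distribution and is therefore tight, so $\bs{q}_{kd} = O_p(1)$. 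Hence the diagonal entries equal $\zeta(\pi_k/\sigma_z^2,\gamma) + O_p(N^{-1/2})$ and the off-diagonal entries are $O_p(N^{-1/2})$. The sign convention $\inner{\bs{\widetilde U}_k}{U_k}>0$ is consistent with this expansion. Collecting the entries gives the first equality, with a remainder of spectral norm at most $K$ times its largest entry.

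\textbf{Step 2 (plug-in).} It remains to show $\zeta(\bs{\hat{\pi}}_k/\bs{\hat{\sigma}}_z^2,\hat\gamma) - \zeta(\pi_k/\sigma_z^2,\gamma) = O_p(N^{-1/2})$. Proposition~\ref{prop:consistency_adjusted_estimator} gives $\bs{\hat{\pi}}_k - \pi_k = O_p(m^{-1/2}) = O_p(N^{-1/2})$ and $\bs{\hat{\sigma}}_z^2 - \sigma_z^2 = O_p(N^{-1})$. Condition~\ref{enum:HD_regime} gives $\hat\gamma - \gamma = o(N^{-1/2})$. Since $\sigma_z^2>0$ is fixed, the ratio $\bs{\hat{\pi}}_k/\bs{\hat{\sigma}}_z^2$ is within $O_p(N^{-1/2})$ of $\pi_k/\sigma_z^2$. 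By Condition~\ref{enum:spike_strength}, $x_k := \pi_k/\sigma_z^2 = \ell_k/\sigma^2 > \sqrt{\gamma}$ strictly. On a neighborhood of $(x_k,\gamma)$ bounded away from the boundary $x = \sqrt{\gamma}$, the map $\zeta$ is smooth (the quantity under the square root is bounded below by a positive constant). The delta method, or a mean value argument on an event of probability tending to one, then yields the claimed rate.

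\textbf{Main obstacle.} The delicate point is the last step near the detectability boundary. The truncation in Algorithm~\ref{algo:estimation_spikes_var} at $\sqrt{\hat\gamma}$ and the non-differentiability of $\zeta$ at $x=\sqrt{\gamma}$ must be shown irrelevant with probability tending to one. I would handle this by fixing $\epsilon>0$ with $x_k > \sqrt{\gamma}+2\epsilon$. On the event that the estimated ratio lies within $\epsilon$ of $x_k$ and $\hat\gamma$ lies within $\epsilon$ of $\gamma$, the truncation is inactive and $\zeta$ is Lipschitz on the compact neighborhood, and this event has probability tending to one. I would also double-check that the $O_p$ in (i) of Theorem~\ref{thm:asymptotics_eigenvalues_eigenvectors} is uniform over the finitely many $k$, which holds trivially because $K$ is fixed.
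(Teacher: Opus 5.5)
Your proposal is correct and follows the same route as the paper, which simply cites Result 3 of Theorem~\ref{thm:asymptotics_eigenvalues_eigenvectors} for the first equality and Proposition~\ref{prop:consistency_adjusted_estimator} for replacing the population $\zeta$'s by their plug-in versions in $\mc{Z}$. Your entrywise expansion, the spectral-norm reduction for fixed $K$, and the local Lipschitz argument away from the boundary $x=\sqrt{\gamma}$ (including the inactive truncation) make explicit the steps the paper leaves implicit.
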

The results follow directly from Result (3) of Theorem \ref{thm:asymptotics_eigenvalues_eigenvectors} and Proposition \ref{prop:consistency_adjusted_estimator}.

\begin{lemma}
    \label{lemma:converge_alpha}
    Under our model assumptions,  
    \[\alpha = s_x^2/s_z^2 + O_p\big( \cfrac{1}{ \sqrt{N\log N}}\big).\]
\end{lemma}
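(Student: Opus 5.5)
We treat $\alpha$ as a ratio of two independent, suitably normalized quadratic forms in Gaussian vectors. We show that the numerator and the denominator are each concentrated around their means, with relative fluctuations of order $1/\sqrt{N\log N}$ and $1/\sqrt{Nm}$ respectively.

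\textbf{Numerator.} Write $n=\sum_i n_i$. Since $\bs{\tilde X}_{ij}-\bs{\bar X}_i=\bs{\eta}_{ij}-\bar{\bs{\eta}}_i$, the within-forcing sum of squares $T_x=\sum_{i,j}\|\bs{\tilde X}_{ij}-\bs{\bar X}_i\|_2^2$ does not involve $X$. By the standard Gaussian (Cochran) decomposition, applied for each $i$, it equals $\sum_{l=1}^{n-p}\bs{\xi}_l^T\bs{\xi}_l$, where $\bs{\xi}_l$ are i.i.d.\ $\mathcal N(0,s_x^2\Sigma)$. Stacking these vectors gives $T_x=s_x^2\,\bs{g}^T(I_{n-p}\otimes\Sigma)\bs{g}$ with $\bs{g}$ standard Gaussian of dimension $(n-p)N$. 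Lemma~\ref{lemma:expectation_quadratic_form} then gives
\[
\mE T_x=s_x^2(n-p)\tr\Sigma,\qquad \operatorname{Var}(T_x)=2s_x^4(n-p)\tr\Sigma^2 .
\]

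\textbf{Denominator.} Likewise, for $T_z=\sum_{j=1}^m\bs{Z}_j^T\bs{Z}_j$ we have
\[
\mE T_z=s_z^2 m\tr\Sigma,\qquad \operatorname{Var}(T_z)=2s_z^4 m\tr\Sigma^2 .
\]

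\textbf{Size of the traces.} Under \eqref{eq:spike} and Condition~\ref{enum:spike_strength}, $\tr\Sigma=N\sigma^2+\sum_k\ell_k\asymp N$ and $\tr\Sigma^2=N\sigma^4+O(1)\asymp N$. Hence, by Chebyshev or Lemma~\ref{lemma:concentration_quadratic_form},
\[
\frac{T_x}{(n-p)\tr\Sigma}=s_x^2\bigl(1+O_p\bigl(((n-p)N)^{-1/2}\bigr)\bigr),\qquad
\frac{T_z}{m\tr\Sigma}=s_z^2\bigl(1+O_p\bigl((mN)^{-1/2}\bigr)\bigr).
\]
Because $\alpha=\dfrac{T_x/[(n-p)\tr\Sigma]}{T_z/[m\tr\Sigma]}$, the delta method yields
\[
\alpha-s_x^2/s_z^2=O_p\bigl(((n-p)N)^{-1/2}+(mN)^{-1/2}\bigr).
\]
Condition~\ref{enum:HD_regime} gives $m\asymp N$, so the second term is $O_p(1/N)$, which is negligible.

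\textbf{Main obstacle: the ensemble-size term.} The delicate point is matching the stated rate $1/\sqrt{N\log N}$. This requires $n-p\gtrsim\log N$. Condition~\ref{enum:HD_regime} only states $n_i=O(\log N)$, an upper bound. I will read it, as the paper evidently intends, as $n_i\asymp\log N$ (i.e., a matching lower bound $n_i\gtrsim\log N$). With that reading, $((n-p)N)^{-1/2}\asymp(N\log N)^{-1/2}$ and the claim follows. If only the upper bound is available, the argument still gives the weaker but sufficient rate $O_p(N^{-1/2})$, since $n-p\ge 1$; I will note this explicitly. No other step is nontrivial: the independence of the $\bs{\eta}$'s and $\bs{Z}$'s makes the numerator and denominator independent, so no cross terms arise.
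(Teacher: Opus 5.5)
Your proof is correct and follows essentially the same route as the paper's. Both arguments Helmert/Cochran-reduce the within-ensemble sum of squares to $n_T=\sum_i n_i-p$ i.i.d.\ $\mathcal N(0,s_x^2\Sigma)$ quadratic forms, concentrate the numerator and the denominator separately (you via the exact variance and Chebyshev, the paper via a per-vector CLT followed by averaging), and then take the ratio.

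Your remark on the ensemble sizes is also right. The paper's step ``while $n_T=O(\log N)$, it follows then'' only gives the rate $1/\sqrt{N\log N}$ under a matching lower bound $n_T\gtrsim\log N$. So reading Condition~\ref{enum:HD_regime} as $n_i\asymp\log N$, or settling for $O_p(N^{-1/2})$ otherwise, is exactly the needed fix.
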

\begin{proof}
    
First of all, as
\[\bs{\tilde{X}}_{ij} \stackrel{iid}{\sim} N(X_i, s_x^2\Sigma), \quad j =1,\dots, n_i;~~ i =1,2,\dots, p,\]
we find that 
\[ \frac{1}{N}\frac{1}{\sum n_i -p} \sum_{i=1}^p \sum_{j=1}^{n_i} (\bs{\bar{X}}_{ij} - \bs{\bar{X}}_i)^T (\bs{\bar{X}}_{ij} - \bs{\bar{X}}_{i})\]
has the same distribution as 
\[  \frac{s_x^2}{n_T} \sum_{j=1}^{n_T} \frac{1}{N} \bs{Q}_j^T \Sigma \bs{Q}_j, \]
where $n_T = (\sum_{i} n_i) -p$ and $\bs{Q}_j \stackrel{iid}{\sim} N(0, I_N)$. By Lindeberg's central limit theorem, 
\[ \frac{1}{N} \bs{Q}_j^T \Sigma \bs{Q}_j = \frac{1}{N}\tr[\Sigma] + O_p(1/\sqrt{N}).\]
While $n_T = O(\log N)$, it follows then,
\[  \frac{s_x^2}{n_T} \sum_{j=1}^{n_T} \frac{1}{N} \bs{Q}_j^T \Sigma \bs{Q}_j = \frac{s_x^2}{N}\tr[\Sigma] + O_p(1/\sqrt{N\log N}).\]
Similarly, 
\[ \frac{1}{N} \frac{1}{m} \sum_{j=1}^m  \bs{Z}_j^T \bs{Z}_j = \frac{s_z^2}{N}\tr(\Sigma) + O_p(1/\sqrt{Nm}). \]
Then, 
\[ \alpha = \frac{m \sum_{i=1}^p \sum_{j=1}^{n_i} (\bs{\tilde{X}}_{ij}  - \bs{\bar{X}}_i)^T (\bs{\tilde{X}}_{ij} - \bs{\bar{X}}_i)}{ (\sum_{i=1}^p n_i - p) \sum_{j=1}^m \bs{Z}_j^T \bs{Z}_j} = \frac{s_x^2}{s_z^2} + O_p(1/\sqrt{N\log N}).\]

\end{proof}

\subsection{Proof of Lemma \ref{lemma:consistency_Delta_Phi}}\label{subsec:proof_lemma_consistency_Delta_Phi}

Recall that 
\[ W_\Theta =\bs{\widetilde{U}} \Theta \bs{\widetilde{U}}^T  + I_N   \qquad \text{and} \qquad  \Sigma = U \Lambda U^T + \sigma^2 I_N.\]
Using Proposition \ref{prop:consistency_adjusted_estimator} and Lemma \ref{prop:angle_tilde_U_U}, we obtain that 
\begin{align*}
    s_z^2 W_\Theta \Sigma &= s_z^2 \bs{\widetilde{U}} \Theta \mc{Z} \Lambda U^T +  s_z^2\sigma^2 \bs{\widetilde{U}} \Theta \bs{\widetilde{U}}^T +s_z^2 U \Lambda U^T + s_z^2 \sigma^2 I_N + O_{\prec}\big(\frac{1}{\sqrt{N}} \big) \\
    & = \bs{\widetilde{U}} \Theta \mc{Z} \Pi U^T + \bs{\hat{\sigma}}_z^2 \bs{\widetilde{U}} \Theta \bs{\widetilde{U}}^T + U \Pi U^T + \bs{\hat\sigma}^2_z I_N + O_{\prec}\big(\frac{1}{\sqrt{N}}\big).
\end{align*}
It follows that
\begin{align*}
\frac{s_z^2}{N} \tr [W_\Theta \Sigma] &=  \frac{1}{N} \tr[\Theta \mc{Z}\Pi  U^T \bs{\widetilde{U}}] + \bs{\hat{\sigma}}_z^2  \tr[\Theta] + \frac{1}{N}\tr [\Pi] + \bs{\hat{\sigma}}^2_z  + O_p \big(1/\sqrt{N}\big)\\
& = \frac{1}{N} \tr[\Theta (\Pi \mc{Z}^2 + \bs{\hat{\sigma}}_z^2 I_K)]  + \frac{1}{N}\tr[\Pi] + \bs{\hat{\sigma}}_z^2+ O_p(1/\sqrt{N})\\
&= \Delta_\Theta + O_p(1/\sqrt{N}).
\end{align*}
Second, consider $\Phi_\Theta$. 
\begin{align*}
    &s_z^4 W_\Theta \Sigma W_\Theta \Sigma\\
    & = s_z^4 \bs{\widetilde{U}} \Theta \mc{Z} \Lambda \mc{Z} \Theta \mc{Z} \Lambda U^T + s_z^4 \sigma^2 \bs{\widetilde{U}} \Theta \mc{Z} \Lambda \mc{Z} \Theta \bs{\widetilde{U}}^T + s_z^4 \bs{\widetilde{U}}\Theta \mc{Z}\Lambda^2 U^T + s_z^4 \sigma^2 \bs{\widetilde{U}}\Theta \mc{Z}\Lambda U^T \\
    &\quad + s_z^4 \sigma^2 \bs{\widetilde{U}} \Theta^2 \mc{Z}\Lambda U^T + s_z^4 \sigma^4 \bs{\widetilde{U}} \Theta^2 \bs{\widetilde{U}}^T + s_z^4 \sigma^2 \bs{\widetilde{U}}\Theta \mc{Z}\Lambda U^T + s_z^4 \sigma^4 \bs{\widetilde{U}}\Theta\bs{\widetilde{U}}^T \\
    &\quad + s_z^4 U \Lambda \mc{Z} \Theta \mc{Z}\Lambda U^T + s_z^4 \sigma^2 U\Lambda \mc{Z} \Theta \bs{\widetilde{U}}^T + s_z^4 U\Lambda^2 U^T + s_z^4 \sigma^2 U\Lambda U^T \\
    &\quad + s_z^4\sigma^2 \bs{\widetilde{U}} \Theta \mc{Z} \Lambda U^T +  s_z^4\sigma^4 \bs{\widetilde{U}} \Theta \bs{\widetilde{U}}^T +s_z^4\sigma^2 U \Lambda U^T + s_z^4 \sigma^4 I_N + O_{\prec}(1/\sqrt{N})\\
    & =\bs{\widetilde{U}} \Theta \mc{Z} \Pi \mc{Z} \Theta \mc{Z} \Pi U^T + \bs{\hat{\sigma}}^2_z\bs{\widetilde{U}} \Theta \mc{Z} \Pi \mc{Z} \Theta \bs{\widetilde{U}}^T + \bs{\widetilde{U}}\Theta \mc{Z}\Pi^2 U^T +  \bs{\hat{\sigma}}^2_z \bs{\widetilde{U}}\Theta \mc{Z}\Pi U^T\\
    &\quad +  \bs{\hat{\sigma}}^2_z \bs{\widetilde{U}} \Theta^2 \mc{Z} \Pi U^T +  \bs{\hat{\sigma}}^4_z \bs{\widetilde{U}} \Theta^2 \bs{\widetilde{U}}^T +  \bs{\hat{\sigma}}^2_z \bs{\widetilde{U}}\Theta \mc{Z}\Pi U^T + \bs{\hat{\sigma}}^4_z \bs{\widetilde{U}}\Theta\bs{\widetilde{U}}^T\\
    & \quad +  U \Pi \mc{Z} \Theta \mc{Z} \Pi U^T + \bs{\hat{\sigma}}^2_z U \Pi \mc{Z} \Theta \bs{\widetilde{U}}^T +  U\Pi^2 U^T + \bs{\hat{\sigma}}^2_z U \Pi U^T\\
    &\quad + \bs{\hat{\sigma}}^2_z \bs{\widetilde{U}} \Theta \mc{Z} \Pi U^T +  \bs{\hat{\sigma}}^4_z \bs{\widetilde{U}} \Theta \bs{\widetilde{U}}^T + \bs{\hat{\sigma}}^2_z U \Pi U^T +  \bs{\hat{\sigma}}^4_z I_N + O_{\prec}(1/\sqrt{N}).
\end{align*}
It follows that 
\begin{align*}
    \frac{s_z^4}{N} \tr [W_\Theta \Sigma W_\Theta \Sigma ]& =\frac{1}{N} \tr\big[\Theta^2 (\mc{Z}^2 \Pi + \hat\sigma_z^2 I_K )^2  \big]\\
    &+ \frac{2}{N} \tr\big[ \Theta ( \mc{Z}^2 \Pi^2 + 2\bs{\hat{\sigma}}_z^2 \mc{Z}^2 \Pi +\bs{\hat{\sigma}}_z^4 I_K ) \big] \\
    &+ \frac{1}{N}\tr[ (\Pi + \bs{\hat{\sigma}}_z^2 I_K )^2] + (1-K/N)\bs{\hat{\sigma}}_z^4 + O_p(1/\sqrt{N})\\
    & = \Phi_\Theta + O_p(1/\sqrt{N}).
\end{align*}

As for $G_\Theta$, note that we can write 
\[ \bs{\bar{X}} = X + s_x \Sigma^{1/2} \bs{Q} D^{1/2},\]
where $\bs{Q}$ is $N\times p$ matrix of i.i.d. $N(0,1)$ entries. Then, we decompose
\begin{align*} 
\frac{1}{N} \bs{\bar{X}}^T W_\Theta \bs{\bar{X}} &=\frac{1}{N} X^T W_\Theta X + \frac{1}{N} s_x X^T W_\Theta \Sigma^{1/2} \bs{Q} D^{1/2} + \frac{1}{N} s_x D^{1/2}\bs{Q}^T \Sigma^{1/2} W_\Theta X \\
&\quad + \frac{1}{N}  s_x^2 D^{1/2} \bs{Q}^T \Sigma^{1/2}W_\Theta \Sigma^{1/2} \bs{Q} D^{1/2}. 
\end{align*}
Using Lemma \ref{lemma:concentration_quadratic_form}, 
\begin{align*}
   & \frac{1}{N} s_x X^T W_\Theta \Sigma^{1/2} \bs{Q} D^{1/2} = O_\prec(1/\sqrt{N}),\\
   & \frac{s_x^2}{N} \bs{Q}^T \Sigma^{1/2} W_\Theta \Sigma^{1/2}\bs{Q}  = \frac{s_x^2}{N} \tr[W_\Theta \Sigma] I_p + O_{\prec}(1/\sqrt{N}). 
\end{align*}
Together with the convergence of $\alpha$ shown in Lemma \ref{lemma:converge_alpha}, it follows then
\begin{align*}
G_\Theta &= \frac{1}{N}\bs{\bar{X}}^T W_\Theta \bs{\bar{X}} - \alpha \Delta_\Theta D \\
   & =\frac{1}{N} \bs{\bar{X}}^T W_\Theta \bs{\bar{X}}  - \frac{s_x^2}{N}\tr [W_\Theta \Sigma ] D  + O_{\prec}(1/\sqrt{N}) \\ 
   & = \frac{1}{N}X^T W_\Theta X + O_{\prec}(1/\sqrt{N}).
\end{align*}

Lastly, we discuss $\Omega_\Theta$. We first analyze the projection of $X$ onto the leading spike directions. Following Result (3) of Theorem \ref{thm:asymptotics_eigenvalues_eigenvectors},  for each $k=1,\dots,K$, we can express 
\[\bs{\widetilde{U}}_k = U_k \mc{Z}_k + R_k,\] 
where $R_k$ is a $N$-dimensional random vector such that
\[ R_k = \sqrt{\frac{ 1- \zeta^2(\pi_k/\sigma_z^2, \gamma)}{N-K}} U_{\perp} w_k + r_k,\]
with $w_k$ being a standard normal random vector $w_k\sim N(0, I_{N-K})$ and $r_k$ is a negligible vector in the sense that for any $l >0$, 
\[\mE \|r_k\|^l_2 =  o(1).\] 
The representative is valid because a uniformly distributed random vector $\bs{s}_k$ on the unit sphere $\mathbb{S}^{N-K-1}$ can be re-expressed as 
\[ \bs{s}_k = w_k /\|w_k\|.\]
Note that $(N-K)^{-1/2} \|w_k\|_2 \stackrel{p}{\longrightarrow}1$. The residual $r_k$ comes from replacing $\|w_k\|_2$ by $\sqrt{N-K}$, the remainder term $N^{-1/2} \bs{\omega}_k U \bs{q}_k$ and the distance between $\bs{\omega}_k$ and $\zeta (\pi_k/\sigma_z^2, \gamma)$.

For each $j=1,\dots, p$, consider the projection of $X_j$ onto $\bs{\widetilde{U}}_k$ 
\begin{align*}
&\frac{1}{\sqrt{N}} \bs{\widetilde{U}}_k^T X_j = \frac{1}{\sqrt{N}}  \mc{Z} U^T_k X_j + \frac{1}{\sqrt{N}} R^T_k X_j \\
&=  \frac{1}{\sqrt{N}}  \mc{Z} U^T_k X_j + \frac{\sqrt{1-\zeta^2(\pi_k/\sigma_z^2,\gamma)}}{\sqrt{N(N-K)}} w^T_k X_j +\frac{1}{\sqrt{N}} r_k^T X_j .
\end{align*}
Since $N^{-1/2}\|X\|_2 <\infty$,  clearly,
\[ \frac{1}{\sqrt{N}} |r_k^T X_j| \leq \frac{1}{\sqrt{N}} \|r_k\|_2 \|X\|_2 = o_p(1).\]
Moreover, by Lemma \ref{lemma:concentration_quadratic_form},
\[ \mE \Big| N^{-1} w_k^T X_j\Big|^2 = N^{-2}\mE |w_k^T X_jX^T_j w_k| =O(N^{-2}\|X^TX\|_2)= O(N^{-1}).\]
It indicates that for each $j$ and $k$,
\[ \frac{1}{\sqrt{N}} \bs{\widetilde{U}}_k^T X_j = \frac{1}{\sqrt{N}} \mc{Z} U_k^T X_j + o_p(1).\]

Recall that 
\[ \widetilde{W} = \bs{\widetilde{U}} \mc{Z}^{-2} \Pi( \mc{Z}^{2} \Theta+  I_K )^2\bs{\widetilde{U}}^T + \bs{\hat{\sigma}}_z^2 W_\Theta^2 =  \bs{\widetilde{U}} \Pi \Big(  \mc{Z}^2 \Theta^2 + 2 \Theta + \mc{Z}^{-2} \Big)\bs{\widetilde{U}}^T + \bs{\hat{\sigma}}_z^2 W_\Theta^2. \]
We aim to show that  $N^{-1} X^T \widetilde{W} X$ concentrates around $ s_z^2 N^{-1} X^T W_\Theta \Sigma W_\Theta X$. For convenience, we collect $R = (R_1,\dots, R_K)$. Following our analysis,
\[ \frac{1}{\sqrt{N}} R^T X = o_{\prec}(1). \]
We can decompose  
\begin{align*}
    &s_z^2 N^{-1}X^T W_\Theta \Sigma W_\Theta X\\
    &= N^{-1}X^T (\bs{\widetilde{U}} \Theta \bs{\widetilde{U}}^T + I_N)  U\Pi U^T (\bs{\widetilde{U}}\Theta \bs{\widetilde{U}}^T + I_N)  X +\sigma_z^2  N^{-1}X^T W_{\Theta}^2  X\\
    &= N^{-1} X^T  \bs{\widetilde{U}} \Theta^2 \mc{Z}^2 \Pi \bs{\widetilde{U}}^T  X + N^{-1} X^T \bs{\widetilde{U}} \Theta \mc{Z} \Pi U^T  X \\
    &\qquad + N^{-1} X^T U \Theta \mc{Z} \Pi \bs{\widetilde{U}}^T  X+ N^{-1} X^T  U \Pi U^T X + \sigma_z^2 N^{-1} X^T  W_{\Theta}^2 X+ o_{\prec}(1)\\
    &=  N^{-1} X^T  \bs{\widetilde{U}} \Theta^2 \mc{Z}^2 \Pi \bs{\widetilde{U}}^T X  + N^{-1} X^T  \bs{\widetilde{U}} \Theta \Pi \Big(\bs{\widetilde{U}}^T  - R^T \Big) X \\
    &\qquad +  N^{-1} X^T  (\bs{\widetilde{U}} - R) \Theta \Pi \bs{\widetilde{U}}^T X +  N^{-1}X^T  (\bs{\widetilde{U}} - R)  \Pi \mc{Z}^{-2}  (\bs{\widetilde{U}}^T - R^T) X \\
    &\qquad + \sigma_z^2 N^{-1} X^T W_{\Theta}^2 X+ o_{\prec}(1)\\
    & = N^{-1} X^T \widetilde{W} X  + (\sigma_z^2-  \bs{\hat\sigma}_z^{2} ) N^{-1} X^T W_\Theta^2 X + N^{-1} X^T R \mc{Z}^{-2}\Pi R^T  X \\
    &\qquad - \Big(N^{-1}X^T \bs{\widetilde{U}} (\Theta + \mc{Z}^{-2} ) \Pi R^T X + N^{-1}X^T R \Pi (\Theta + \mc{Z}^{-2} ) \bs{\widetilde{U}}^T X \Big)+ o_{\prec}(1)\\
    & = N^{-1} X^T \widetilde{W} X + o_{\prec}(1).
\end{align*}

Next, we quantify the difference between $N^{-1}\bs{\bar{X}}^T \widetilde{W} \bs{\bar{X}}$ and $N^{-1}X^T \widetilde{W} X$. We decompose 
\begin{align*} 
\frac{1}{N} &\bs{\bar{X}}^T \widetilde{W} \bs{\bar{X}} =\frac{1}{N} X^T \widetilde{W} X + \frac{1}{N} s_x X^T \widetilde{W} \Sigma^{1/2} \bs{Q} D^{1/2} \\
&\quad + \frac{1}{N} s_x D^{1/2}\bs{Q}^T \Sigma^{1/2} \widetilde{W}  X + \frac{1}{N}  s_x^2 D^{1/2} \bs{Q}^T \Sigma^{1/2} \widetilde{W} \Sigma^{1/2} \bs{Q} D^{1/2} \\
& = \frac{1}{N} X^T \widetilde{W} X + \frac{1}{N}  s_x^2 D^{1/2} \bs{Q}^T \Sigma^{1/2} \widetilde{W} \Sigma^{1/2} \bs{Q} D^{1/2} + o_{\prec}(1)
\end{align*}
Here, we again use Lemma \ref{lemma:concentration_quadratic_form} to obtain the bound 
\[ \frac{1}{N} s_x X^T \widetilde{W} \Sigma^{1/2} \bs{Q} D^{1/2}  = O_{\prec}(1/\sqrt{N}). \]

We discuss the limiting behavior of $N^{-1} s_x^2 \bs{Q}^T \Sigma^{1/2} \widetilde{W} \Sigma^{1/2}\bs{Q}$. By Lemma \ref{lemma:concentration_quadratic_form}, 
\begin{align*}
&\frac{1}{N} s_x^2 \bs{Q}^T \Sigma^{1/2} \widetilde{W} \Sigma^{1/2}\bs{Q} = \frac{1}{N} s_x^2\tr(\widetilde{W}\Sigma) I_p + o_{\prec}(1)\\
&=N^{-1} (s_x^2 /s_z^2) \tr \Big(\widetilde{W} (U \Pi U^T + \bs{\hat\sigma}_z^2 I_N ) \Big) I_p + o_{\prec}(1)\\
&=\alpha \Psi_\Theta + o_{\prec}(1), 
\end{align*}
where
\begin{align*}
\Psi_\Theta& = \bs{\hat{\sigma}}_z^4 [1+ N^{-1}\tr (2\Theta + \Theta^2) ] + \frac{1}{N} \tr[\Pi^2 (\mc{Z}^2 \Theta + I_K)^2] \\
&+ \bs{\hat{\sigma}}_z^2 N^{-1}\tr[\Pi (\mc{Z}^2 \Theta^2 + 2\Theta + \mc{Z}^{-2})]+\bs{\hat{\sigma}}_z^2 N^{-1} \tr [ \Pi \{ I_K + \mc{Z}^2 (2\Theta + \Theta^2)\}].
\end{align*}
Then, 
\[\Omega_\Theta = \frac{1}{N}\bs{\bar{X}}^T \widetilde{W} \bs{\bar{X}} - \alpha \Psi_\Theta = \frac{1}{N}X^T \widetilde{W}X  + o_{\prec}(1) = \frac{s_z^2}{N} X^T W_\Theta \Sigma W_\Theta X + o_{\prec}(1).  \]

\subsection{Proof of Theorem \ref{thm:consistency_s_hat}}
\label{subsec:proof_theorem_consistency_s_hat}

Let $\hat{\beta}(s) = \hat{\beta}(s, W)$ be the TLS estimator with the weight matrix $W =W_\Theta$ and a general value of $s\in[0,\Upsilon]$. Using the closed-form formula of the TLS estimator shown in Lemma \ref{lemma:closed_form_TLS},
\[ \hat{\beta}(s) = \Big[  \frac{1}{N} \bs{\bar{X}}^T W \bs{\bar{X}} - \lambda(s^2) D \Big]^{-1} \frac{1}{N} \bs{\bar{X}}^T W\bs{Y},\] 
where $\lambda(s^2)$ is the smallest eigenvalue of the matrix
\[A(s^2) = \frac{1}{N} \begin{bmatrix}
    D^{-1/2} & 0 \\ 0 & s 
\end{bmatrix} 
\begin{bmatrix}
    \bs{\bar{X}}^T \\ \bs{Y}^T  
\end{bmatrix} W 
\begin{bmatrix}
    \bs{\bar{X}} & \bs{Y} 
\end{bmatrix}
\begin{bmatrix}
    D^{-1/2} & 0 \\ 0 & s
\end{bmatrix}.
\]
We study the behavior of $A(s^2)$. Notice that we can express $\bs{\bar{X}}$ and $\bs{Y}$ as 
\[ \bs{\bar{X}} = X + s_x \Sigma^{1/2} \bs{Q} D^{1/2}  \qquad \bs{Y} = X\beta + \Sigma^{1/2} \bs{e}.\]
Here, $\bs{Q} = [\bs{Q}_1 , \dots, \bs{Q}_p] \stackrel{iid}{\sim} N(0, I_N)$, $\bs{e} \sim N(0, I_N)$, and $\bs{Q}$ is independent of $\bs{e}$.  

Following Lemma \ref{lemma:consistency_Delta_Phi} and Lemma \ref{lemma:concentration_quadratic_form}, we can conclude that 
\[ A(s^2) = \begin{bmatrix}
    -D^{-1/2} \\ s \beta^T 
\end{bmatrix} \frac{1}{N} X^T W X 
\begin{bmatrix}
    -D^{-1/2} & s\beta 
\end{bmatrix}
+ \begin{bmatrix}
    s_x^2 \Delta I_p & 0 \\ 0 & s^2 \Delta 
\end{bmatrix} + O_{\prec}(1/\sqrt{N}).
\]
Here, $\Delta = \Delta_\Theta$ and $O_{\prec}(1/\sqrt{N})$ is a residual matrix whose spectral norm is of stochastic order $O_p(1/\sqrt{N})$ uniformly for $s\in [0, \Upsilon]$.  
Denote the sum of the leading two terms on the right-hand side to be $A_0(s^2)$. Further, denote the smallest eigenvalue of $A_0(s^2)$ to be $\lambda_0(s^2)$. 
We have that uniformly on $s\in [0, \Upsilon]$,
\[ \lambda(s^2)  - \lambda_0 (s^2) = O_p(1/\sqrt{N}). \]

It is straightforward to verify that when $N$ is sufficiently large, for any $s\in [0, \Upsilon]$, 
\[ \lambda_0(s^2) = s^2 \Delta.\]

Summarizing these results, we conclude that when $N$ is sufficiently large, uniformly on $[0,\Upsilon]$, 
\[\Big[\frac{1}{N} \bs{\bar{X}}^T W \bs{\bar{X}} - \lambda(s^2) D \Big]^{-1}   =  \Big[ \frac{1}{N} X^T W X - (s^2 - s_x^2) \Delta D \Big]^{-1} + O_{\prec}(1/\sqrt{N}).\]

On the other hand, consider the behavior of $N^{-1} \bs{\bar{X}}^T W \bs{Y}$. 
\begin{align*}
    \frac{1}{N} \bs{\bar{X}}^T W \bs{Y} &= \frac{1}{N} X^T W X \beta + \frac{1}{N} X^T W \Sigma^{1/2}\bs{e} + \frac{s_x}{N} D^{1/2}\bs{Q}^T\Sigma^{1/2} W X \beta + \frac{s_x}{N} D^{1/2}\bs{Q}^T \Sigma^{1/2} W \bs{e}.
\end{align*}
Again, using Lemma \ref{lemma:concentration_quadratic_form}, we can show that 
\[ \frac{1}{N} X^T W \Sigma^{1/2}\bs{e} = O_{\prec}(1/\sqrt{N}).\]
\[ \frac{s_x}{N} D^{1/2}\bs{Q}^T\Sigma^{1/2} W X \beta = O_{\prec}(1/\sqrt{N}).\]
\[ \frac{s_x}{N} D^{1/2}\bs{Q}^T \Sigma^{1/2} W \bs{e} = O_{\prec}(1/\sqrt{N}).\]
It follows then 
\[\frac{1}{N} \bs{\bar{X}}^T W \bs{Y} = \frac{1}{N}X^T W X \beta + O_{\prec}(1/\sqrt{N}).\]

Summarizing these results, we obtain the following lemma. 

\begin{lemma}
    \label{lemma:quantify_beta}
   Fix $W = W_\Theta$. Under our assumptions, for any $s\in [0, \Upsilon]$, define
    \[ \beta(s) = \Big[ \frac{1}{N}X^T W X - (s^2 - s_x^2) \Delta_\Theta D \Big]^{-1} \frac{1}{N} X^T W X\beta.\]
    We have 
    \[ \sup_{s\in [0,\Upsilon]} \|\hat{\beta}(s^2) - \beta(s^2) \|_2 = O_p(1/\sqrt{N}).\]
\end{lemma}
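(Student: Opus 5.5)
The plan is to combine the two uniform approximations just derived: one for the inverted matrix $[N^{-1}\bs{\bar{X}}^T W\bs{\bar{X}} - \lambda(s^2)D]^{-1}$ and one for the cross term $N^{-1}\bs{\bar{X}}^T W\bs{Y}$. By Lemma~\ref{lemma:closed_form_TLS},
\[
\hat\beta(s)=M_N(s)^{-1}b_N, \qquad M_N(s)=\tfrac1N\bs{\bar{X}}^TW\bs{\bar{X}}-\lambda(s^2)D,\qquad b_N=\tfrac1N\bs{\bar{X}}^TW\bs{Y},
\]
and $\beta(s)=M_0(s)^{-1}b_0$ with $M_0(s)=N^{-1}X^TWX-(s^2-s_x^2)\Delta_\Theta D$ and $b_0=N^{-1}X^TWX\beta$. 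We write
\[
\hat\beta(s)-\beta(s)=\bigl(M_N(s)^{-1}-M_0(s)^{-1}\bigr)b_N+M_0(s)^{-1}(b_N-b_0)
\]
and bound each term uniformly over $s\in[0,\Upsilon]$.

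\textbf{Step 1: uniform invertibility of $M_0(s)$.} Since $\lambda_0(s^2)=s^2\Delta$ is the smallest eigenvalue of $A_0(s^2)$, the matrix $A_0(s^2)-s^2\Delta I_{p+1}$ is positive semidefinite. Its upper-left block is $D^{-1/2}\bigl[N^{-1}X^TWX-(s^2-s_x^2)\Delta D\bigr]D^{-1/2}$. The argument is then as follows.
\begin{itemize}
\item Condition~\ref{enum:asymptotic_full_rank} holds, and $W_\Theta$ has eigenvalues in $[1+\min_k\theta_k,\,1+\max_k\theta_k]$.
\item $\Delta_\Theta$ is bounded by Lemma~\ref{lemma:consistency_Delta_Phi}, and $D$ is diagonal.
\item The eigenvalue-gap argument that gives $\lambda_0=s^2\Delta$ makes the block strictly positive definite, with smallest eigenvalue bounded below by the spectral gap of $A_0$ at the null vector $(\beta^T,1)^T$ direction.
\end{itemize}
So I would show $\inf_{s}\lambda_p(M_0(s))\ge c>0$ with probability tending to one. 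The case $\theta_k=-1$, where $W$ is singular, needs the stronger condition that $X$ retain mass off $\operatorname{span}(\bs{\widetilde U})$. I would note that the paper's preceding argument implicitly assumes this, and restrict to that case.

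\textbf{Step 2: the perturbation.} The display preceding the lemma gives $\sup_s\|M_N(s)^{-1}-M_0(s)^{-1}\|_2=O_p(N^{-1/2})$. This follows from $M_N(s)-M_0(s)=O_\prec(N^{-1/2})$ uniformly, which in turn rests on two facts:
\begin{itemize}
\item $G_\Theta-N^{-1}X^TWX=O_\prec(N^{-1/2})$ together with Lemma~\ref{lemma:converge_alpha};
\item $\sup_s|\lambda(s^2)-s^2\Delta|=O_p(N^{-1/2})$, by Weyl's inequality applied to $A(s^2)-A_0(s^2)$.
\end{itemize}
The resolvent identity $M_N^{-1}-M_0^{-1}=M_N^{-1}(M_0-M_N)M_0^{-1}$ and Step~1 then give the bound, since $M_N^{-1}$ is uniformly bounded once the perturbation is smaller than $c/2$. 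Next, $b_N-b_0=O_\prec(N^{-1/2})$ by the three bounds from Lemma~\ref{lemma:concentration_quadratic_form}; these do not depend on $s$. Finally, $\|b_N\|_2=O_p(1)$ and $\sup_s\|M_0(s)^{-1}\|_2\le c^{-1}$. Combining these gives the claim.

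\textbf{Main obstacle: uniformity in $s$.} The only $s$-dependent random quantity is $\lambda(s^2)$, and
\[
A(s^2)-A_0(s^2)=\operatorname{Diag}(I_p,s)\,E_N\,\operatorname{Diag}(I_p,s),
\]
where $E_N$ does not depend on $s$ and $E_N=O_\prec(N^{-1/2})$. Hence $\|A(s^2)-A_0(s^2)\|_2\le\max(1,\Upsilon^2)\|E_N\|_2$ uniformly, which handles uniformity directly. The genuinely delicate part is Step~1. I would verify the gap explicitly by writing
\[
A_0(s^2)-s^2\Delta I=v_sv_s^{T}\text{-type term}+\operatorname{Diag}\bigl((s_x^2-s^2)\Delta I_p,\,0\bigr),
\]
and use compactness of $[0,\Upsilon]$ together with continuity in $s$ to obtain a uniform lower bound.
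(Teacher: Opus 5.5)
Your outline follows the paper's own argument. The paper simply combines the uniform approximation of $[N^{-1}\bs{\bar{X}}^TW\bs{\bar{X}}-\lambda(s^2)D]^{-1}$ with $N^{-1}\bs{\bar{X}}^TW\bs{Y}=N^{-1}X^TWX\beta+O_\prec(N^{-1/2})$, and your resolvent-identity bookkeeping is fine.

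\textbf{The gap.} You share one input with the paper (``it is straightforward to verify''): that $s^2\Delta$ is the smallest eigenvalue of $A_0(s^2)$ for every $s\in[0,\Upsilon]$. For $\beta\neq0$ this holds only at $s\in\{0,s_x\}$, and your own decomposition in the last paragraph shows why.
\begin{itemize}
\item The kernel of the rank-$p$ part of $A_0(s^2)$ is spanned by $u_s=(s\beta^TD^{1/2},\,1)^T$, not $(\beta^T,1)^T$.
\item $A_0(s^2)u_s=\operatorname{Diag}(s_x^2\Delta I_p,\,s^2\Delta)u_s$, which is a multiple of $u_s$ only if $s^2=s_x^2$ or $s=0$.
\item Moreover,
\[
u_s^T\bigl\{A_0(s^2)-s^2\Delta I_{p+1}\bigr\}u_s=-(s^2-s_x^2)\,s^2\Delta\,\beta^TD\beta<0\qquad(s>s_x).
\]
\end{itemize}
So $A_0(s^2)-s^2\Delta I_{p+1}$ is \emph{not} positive semidefinite for $s>s_x$. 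There $\lambda_0(s^2)\le s^2\Delta(1+s_x^2\beta^TD\beta)/(1+s^2\beta^TD\beta)<s^2\Delta$, while for $0<s<s_x$ one gets $\lambda_0(s^2)>s^2\Delta$.

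This breaks both steps:
\begin{itemize}
\item The premise of Step~1 fails for $s>s_x$. With bounded $n_i$ and large $\Upsilon$, $M_0(s)$ can even be singular inside $[0,\Upsilon]$; for $p=1$ this happens at $s^2=s_x^2+n_1N^{-1}X^TWX/\Delta$.
\item The Step~2 claim $\sup_s|\lambda(s^2)-s^2\Delta|=O_p(N^{-1/2})$ is false. Weyl's inequality controls only $\lambda(s^2)-\lambda_0(s^2)$. The difference $\lambda_0(s^2)-s^2\Delta$ is a deterministic bias, at least of order $\beta^TD\beta\gtrsim(\log N)^{-1}$ for fixed $s>s_x$ under C1.
\end{itemize}

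\textbf{Consequence for the statement.} The bias cannot be absorbed, because it propagates to the conclusion. It enters $M_N(s)-M_0(s)$ as the deterministic matrix $(s^2\Delta-\lambda_0(s^2))D$. Hence $\hat\beta(s)$ concentrates around the population TLS minimizer
\[
\bar\beta(s)=[N^{-1}X^TWX-(\lambda_0(s^2)-s_x^2\Delta)D]^{-1}N^{-1}X^TWX\beta,
\]
not around $\beta(s)$. The gap $\bar\beta(s)-\beta(s)$ is of order $s^2|s^2-s_x^2|\,(\beta^TD\beta)\,\|D\beta\|_2$, which is $\gtrsim(\log N)^{-2}$ for fixed $s\notin\{0,s_x\}$. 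Concretely, take $p=1$, $N^{-1}X^TWX=1$, $\Delta=1$, $\beta=1$, $D=1/4$, $s_x^2=1$, $s^2=2$. Then $\beta(s)=4/3$ but $\bar\beta(s)=(\sqrt{33}-1)/4\approx1.19$. So the claimed bound cannot hold uniformly on $[0,\Upsilon]$. It holds at $s\in\{0,s_x\}$ (note $\beta(s_x)=\beta$), or if $\|D\|_2^2=O(N^{-1/2})$, which C1 rules out.

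\textbf{What survives.} After replacing $s^2\Delta$ by $\lambda_0(s^2)$, your machinery proves the analogous uniform bound for $\hat\beta(s)-\bar\beta(s)$. Your Step~1 idea, positive semidefiniteness of the upper-left block of $A_0(s^2)-\lambda_0(s^2)I_{p+1}$, is the right starting point for invertibility of that corrected matrix.

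\textbf{Two smaller bookkeeping points.}
\begin{itemize}
\item The constant on the diagonal of $A_0$ is really $\tau=N^{-1}\operatorname{tr}(W\Sigma)\approx\Delta_\Theta/s_z^2$. Combining $G_\Theta$ (which removes $\alpha\Delta_\Theta D\approx s_x^2\tau D$) with $M_0(s)$ therefore leaves an uncancelled term $(\alpha-s_x^2)\Delta_\Theta D$ unless $s_z=1$.
\item $E_N$ is only $O_\prec(\sqrt{\max_in_i/N})$, since the cross terms carry $D^{-1/2}$. This is harmless after multiplication by $D$.
\end{itemize}

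Finally, Theorem~\ref{thm:consistency_s_hat} does not need the lemma away from $s_x$. The minimized TLS objective equals $N\lambda(s^2)/s^2$, so the loss is just $|\lambda(s^2)-\alpha\Delta_\Theta|$. Uniform closeness of $\lambda$ to $\lambda_0$, together with monotonicity of $\lambda_0$ in $s$, is what is actually needed.
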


To show the convergence of $\hat{s}_x (W_\Theta)$, consider the loss function $L(s) = L(s, W_\Theta)$. First of all, under our assumptions, 
\[ \alpha \Delta_\Theta = \frac{s_x^2}{N} \tr (W_\Theta \Sigma) + O_p(1/\sqrt{N}).\]
Consider $\big\|W^{1/2} ( \bs{Y} - \bs{\bar{X}} \hat{\beta}(s)) \big\|_2^2$. We decompose it as 
\begin{align*}
    &W^{1/2} \Big(\bs{Y} - \bs{\bar{X}}^T \hat{\beta}(s) \Big)  = W^{1/2}\Big[ \bs{Y} - \bs{\bar{X}} \beta(s) \Big] + W^{1/2}\Big[ \bs{\bar{X}} \beta(s) - \bs{\bar{X}}\hat{\beta}(s) \Big]\\
    & = W^{1/2} \Big[  \bs{e} - s_x  \Sigma^{1/2} \bs{Q} D^{1/2} \beta(s) \Big] + W^{1/2}\bs{\bar{X}} \Big[  \beta(s) - \hat{\beta}(s)\Big] + W^{1/2} X \Big[ \beta - \beta(s) \Big] .
\end{align*}
Using Lemma \ref{lemma:concentration_quadratic_form}, the first term on the right-hand side is such that uniformly on $s\in [0, \Upsilon]$, 
\[\frac{1}{N} \Big\| W^{1/2} \Big(\Sigma^{1/2}\bs{e} - s_x \Sigma^{1/2} \bs{Q} D^{1/2} \beta(s)\Big) \Big\|_2^2 = \big(1 + s_x^2 \beta^T(s)  D \beta(s)\big) \frac{1}{N}\tr (W\Sigma) + O_p(1/\sqrt{N}). \]
The second term on the right-hand side is such that uniformly on $s\in [0,\Upsilon]$, 
\[\frac{1}{N} \Big\| W^{1/2} \bs{\bar{X}} [\beta(s) -\hat{\beta}(s)]\Big\|_2^2 = O_p(N^{-1}),\]
which is deduced from Lemma \ref{lemma:quantify_beta}.

Together, we conclude that uniformly on $s\in [0, \Upsilon]$,
\begin{align*}
    &\frac{1}{N} \Big\| W^{1/2} \Big(\bs{Y} - \bs{\bar{X}} \hat{\beta}(s) \Big)  \Big\|_2^2\\ = 
    &\big(1+s_x^2 \beta^T(s)  D\beta(s)\big) \frac{1}{N}\tr(W\Sigma)
    + \frac{1}{N}[\beta(s) -\beta]^T X^T WX [\beta(s) -\beta] + O_p(N^{-1/2}).
\end{align*}

Next, we show the convergence of the loss function $L(s)$. Define
\[ h_n(s) = \frac{s^2}{N} \frac{\|W^{1/2} (\bs{Y} - \bs{\bar{X}}\hat\beta(s) ) \|_2^2}{ 1+ s^2 \hat{\beta}^T(s) D \hat{\beta}(s)}  - \alpha \Delta_\Theta.\]
Define
{\small
\[ h(s) = \frac{s^2}{N} \frac{  \big(1+ s_x^2 \beta^T(s) D \beta(s) \big)N^{-1}\tr[W\Sigma] + N^{-1} [\beta(s) -\beta ]^T X^T W X [\beta(s) -\beta] }{  1+ s^2 \beta^T(s) D \beta(s) } - \frac{s_x^2}{N}\tr[W\Sigma].\]
}

Due to the characterization of the limiting behavior of $\|W^{1/2} (\bs{Y}- \bs{\bar{X}} \hat{\beta}(s))\|_2^2$, we conclude that uniformly on $s\in[0,\Upsilon]$,
\[ h_n(s) - h(s) \stackrel{p}{\longrightarrow}0.\]

Next, we consider the properties of $h(s)$. Clearly, $h(s_x) = 0$. For all sufficiently large $N$, $s_x$ is the unique root to $h(s_x) = 0$.

Let $\hat{s}_x$ be a minimizer of $h_n(s)$ on $[0,\Upsilon]$. Due to the convergence of $h_n(s)$ to $h(s)$, we can conclude that 
\[ \hat{s}_x - s_x \stackrel{p}{\longrightarrow} 0.\]

It remains to verify that the convergence rate is $O_p(1/\sqrt{N})$. To this end, note that there exists a neighborhood of $s_x$ such that within the neighborhood, $|h'(s)|> C>0$ for some constant $C$. On the other hand, since $h(\Upsilon)>0$ for all sufficiently large $N$ when $ \Upsilon> s_x$, we conclude that 
\[ \mathbb{P} \Big(  \frac{\Upsilon^2}{N}\frac{\|W^{1/2} (\bs{Y} -\bs{\bar{X}}\hat\beta(\Upsilon) ) \|_2^2}{1+ \Upsilon^2 \hat{\beta}^T(\Upsilon) D\hat{\beta}(\Upsilon) } >\alpha \Delta_\Theta \Big) \longrightarrow 1,\quad \text{as }N\to \infty. \]
Therefore, with high probability, the solution $\hat{s}_x$ on $[0, \Upsilon]$ is obtained at  
\[  h_n(\hat{s}_x) =  0.\]
It follows then
\[ h(\hat{s}_x) - h(s_x) = h'(s^*) (\hat{s}_x - s_x) = h_n(\hat{s}_x) - h(s_x) + O_p(N^{-1/2}) = O_p(N^{-1/2}), \]
for some $s^*$ in between $\hat{s}_x$ and $s_x$. While $|h'(s^*)|>C>0$, we conclude then
\[ \hat{s}_x - s_x = O_p(N^{-1/2}).\]
Lastly, the convergence of $\hat{s}_z$ follows from the fact that $\alpha = s_x^2/s_z^2 + O_p(1/\sqrt{N\log N})$.

\subsection{Proof of Theorem \ref{thm:normality}}\label{subsec:proof_normality}
Recall that Theorem \ref{thm:normality} states the asymptotic normality of the TLS estimator $\hat{\beta}(\hat{s}_x, W_\Theta)$ with the estimated value $\hat{s}_x$. The proof proceeds in two steps. We first show that the TLS estimator $\hat{\beta}(s_x, W_\Theta)$ with the true value $s_x$ is asymptotically normal. Secondly, we quantify the difference between $\hat{\beta}(\hat{s}_x, W_\Theta)$ and $\hat{\beta}(s_x, W_\Theta)$. 

Recall the formulation of $\bs{\bar{X}}$ and $\bs{Y}$ as 
\[\bs{\bar{X}} = X + s_x \Sigma^{1/2} \bs{Q} D^{1/2} \qquad \text{and}\qquad \bs{Y} = X\beta + \Sigma^{1/2}\bs{e}.\]
Consider the eigendecomposition of $W^{1/2}_\Theta \Sigma W^{1/2}_\Theta = \Gamma A \Gamma^T$, where $A = \operatorname{Diag}(a_1, a_2, \dots, a_N)$ is the diagonal matrix of eigenvalues, and $\Gamma$ is the corresponding matrix of eigenvectors. Let 
\begin{align*}
    &\bs{Y}^* = \Gamma^T W^{1/2}_\Theta \bs{Y},  \qquad X^* = s_x^{-1} \Gamma^T W^{1/2}_\Theta X D^{-1/2}, \qquad \bs{e}^* = \Gamma^T W^{1/2}_\Theta \Sigma^{1/2}\bs{e},\\
    &\bs{\bar{X}}^* = s_x^{-1} \Gamma^T W^{1/2}_\Theta \bs{\bar{X}} D^{-1/2}, \qquad \bs{V}^* =  \Gamma^T W^{1/2}_\Theta \Sigma^{1/2} \bs{Q},\qquad \beta^* = s_x D^{1/2}\beta. 
\end{align*}
With the definition, we have 
\[ \bs{Y}^* = X^* \beta^* + \bs{e}^*\qquad \text{and} \qquad \bs{\bar{X}}^* = X^* + \bs{V}^*.\]
\[ \frac{\| W^{1/2}_\Theta ( \bs{Y} - \bs{\bar{X}}\beta )  \|_2^2}{  1+ s_x^2 \beta^T D\beta} = \frac{\| \bs{Y}^* - \bs{\bar{X}}^*\beta^* \|_2^2 }{ 1+ {\beta^*}^T  \beta^*}.\]
Clearly, if $\hat\beta^*$ is the minimizer of the right-hand side, the TLS estimator is $\hat{\beta}(s_x) = s_x^{-1} D^{-1/2}\hat{\beta}^*$. 

Taking the derivative of the objective function with respect to $\beta^*$, we find that the minimizer solves the score equation
\[  S(\hat{\beta}^*) = \frac{ (\bs{\bar{X}}^*)^T (\bs{Y}^* -\bs{\bar{X}}^* \hat{\beta}^*)}{N} + \hat{\beta}^* \frac{ \|\bs{Y}^* - \bs{\bar{X}}^* \hat{\beta}^* \|_2^2}{ N (1+ (\hat{\beta}^*)^T  \hat{\beta}^*)} =0, \]
where $S(\beta) =  (S_1(\beta), S_2(\beta), \dots, S_p(\beta))^T$ is a $p$-dimensional vector.

By Taylor's theorem, there exists a series of $\tilde{\beta}_j^*$ on the line segment between $\hat{\beta}_j^*$ and $\beta_j^*$ for $j=1,2,\dots, p$, such that 
\[ S(\hat{\beta}^*) = S(\beta^*) + H (\hat{\beta}^* - \beta^*) = 0, \]
where $H = \nabla S(\tilde{\beta}^*)$ is the $p\times p $ Hessian matrix at $\tilde{\beta}^* = (\tilde{\beta}_1^*, \tilde{\beta}_2^*,  \dots, \tilde{\beta}_p^* )^T$. 

It follows that 
\begin{align*}
H \sqrt{N} (\hat{\beta}^*  -\beta^*) &= -\sqrt{N} S(\beta^*) \\
& = -\frac{1}{\sqrt{N}} \sum_{i=1}^N \Big\{ (\bs{e}_i^* -  (\bs{v}_i^*)^T \beta^*)(x_i^* + \bs{v}_i^* )  + \beta^* \frac{ (\bs{e}_i^* - (\bs{v}_i^*)^T \beta^*)^2  }{ 1+  (\beta^*)^T\beta^*}   \Big\}\\
& \coloneqq -\frac{1}{\sqrt{N}} \sum_{i=1}^N M_i(\beta^*),
\end{align*}
where for $i=1,\dots, N$, $\bs{e}_i^*$ is the $i$th element of $\bs{e}^*$, $\bs{v}_i^*$ is the $i$th row vector of the matrix $\bs{V}^*$, and $x_i^*$ is the $i$th row vector of the matrix $X^*$.

From the normality of $\bs{e}$ and $\bs{Q}$, it is straightforward that $\bs{e}_i^* \mid W_\Theta  \sim N(0, a_i)$, $\bs{v}_i^*\mid W_\Theta \sim N(0, a_i I_p)$, and $M_i(\beta^*) \mid W_\Theta$ are mutually independent vectors with finite (conditional) covariance matrices: 
\begin{align*}
     &\operatorname{cov}(M_i(\beta^*) \mid W_\Theta)\\
     &=  a_i  x_i^* (x_i^*)^T ( 1+ (\beta^*)^T \beta^*) + a_i^2 (I_p  + I_p (\beta^*)^T \beta^* + 2 \beta^* (\beta^*)^T) - 3 a_i^2 \beta^* (\beta^*)^T\\
     & = [ a_i x_i^* (x_i^*)^T + a_i^2 (I_p + \beta^* (\beta^*)^T)^{-1} ] ( 1+ (\beta^*)^T\beta^*). 
\end{align*}

It follows that 
\begin{align*}
    \mathfrak{M} \coloneqq \frac{1}{N} \sum_{i=1}^N \operatorname{cov}( M_i(\beta^*) \mid W_\Theta) = \Big\{  \frac{1}{N} s_x^{-2} D^{-1/2} X^T W_\Theta \Sigma W_\Theta X D^{-1/2}\\
    + \frac{1}{N} \tr\Big[ W_\Theta \Sigma W_\Theta \Sigma\Big] (I_p + s_x^2 D^{1/2} \beta \beta^T D^{1/2})^{-1}\Big\} (1+ s_x^2 \beta^T D\beta).  
\end{align*}
Moreover, it is straightforward to show that $\mE (M_i*(\beta^*) \mid W_\Theta) = 0$. 

Using the Lindeberg-Feller central limit theorem, we obtain that for any vector $a\in \mathbb{R}^p$ and any $t\in\mathbb{R}$, 
\[   \mathbb{P}\Big( \frac{1}{\sqrt{N a^T \mathfrak{M} a }} \sum_{i=1}^N a^T M_i(\beta^*) \leq t \Big) \longrightarrow \Phi(t), \quad \text{as } N\to\infty, \]
where $\Phi(\cdot)$ denotes the cumulative distribution function of the standard normal distribution $N(0,1)$. Here, the Lindeberg condition is satisfied due to the existence of higher-order moments of $\bs{e}_i\mid W_\Theta$ and $\bs{v}_i^* \mid W_\Theta$ and the boundedness of  $\mathfrak{M}$. 

All together, we conclude that 
\[ \frac{1}{\sqrt{N  a^T  \mathfrak{M} a }} \sum_{i=1}^N a^T M_i(\beta^*) \stackrel{d}{\longrightarrow} N(0,1). \]

Moreover, by Lemma \ref{lemma:consistency_Delta_Phi},
\begin{align*}
    \mathfrak{M} = (1  + s_x^2 \beta^T D \beta)\Big( s_x^{-2}s_z^{-2} D^{-1/2}\Omega_\Theta D^{-1/2} + s_z^{-4} \Phi_\Theta ( I_p + s_x^2 D^{1/2} \beta\beta^T D^{1/2} )^{-1}\Big) + o_{\prec}(1). 
\end{align*}

Next, consider the Hessian at $\tilde{\beta}^*$. At any value $\beta_0$, 
\begin{align*}
    \nabla S(\beta_0) &= -\frac{ (\bs{\bar{X}}^*)^T \bs{\bar{X}}^*}{N} + \frac{\|\bs{Y}^*  -\bs{\bar{X}}^* \beta_0 \|_2^2  }{N ( 1+ \beta^T_0\beta_0 )} I_p + \beta_0 \frac{\partial \|  \bs{Y}^* - \bs{\bar{X}}^*\beta_0 \|_2^2/ \{  N(1+\beta_0^T \beta_0) \}  }{\partial \beta_0^T }\\
    & = -\frac{(\bs{\bar{X}}^*)^T \bs{\bar{X}}^*}{N} + \frac{\| \bs{Y}^* - \bs{\bar{X}}^* \beta_0 \|_2^2 }{ N (1+\beta^T_0\beta_0)}I_p + \frac{\beta_0^T \{S(\beta_0) \}^T }{1+\beta^T_0\beta_0}.
\end{align*}

Since as $N\to\infty$ and $\beta_0 \to \hat{\beta}^*$, we have $S(\beta_0) \to 0$. Therefore, the Hessian at $\tilde{\beta}^*$ is such that 
\begin{align*}
    H &= -\frac{ (\bs{\bar{X}}^*)^T \bs{\bar{X}}^*  }{N} + \frac{ \|\bs{Y}^* - \bs{\bar{X}}^* \tilde{\beta}^* \|_2^2 }{N (1 + (\tilde{\beta}^*)^T \tilde{\beta}^* )} + o_{\prec}(1)\\
      &= -\frac{1}{N} s_x^{-2} D^{-1/2}X^T W_\Theta  X D^{-1/2} - \frac{1}{N} \tr[ W_\Theta \Sigma] I_p + \frac{1}{N} \tr[W_\Theta \Sigma] I_p \\
      &\qquad + \frac{ (\tilde{\beta}^* -\beta^*)^T N^{-1} (X^*)^T W_\Theta \Sigma W_\Theta X^* (\tilde{\beta}^* -\beta^*)   }{1+ (\tilde{\beta}^*)^T\tilde{\beta}^* } I_p + o_{\prec}(1)\\
      & = -s_x^{-2} D^{-1/2}G_\Theta D^{-1/2} + o_{\prec}(1).
\end{align*}

In summary, due to Slutsky's theorem, we conclude that 
\[ \sqrt{N} \Big\{ s_x^4 D^{1/2} G_\Theta^{-1} D^{1/2} \mathfrak{M} D^{1/2} G_\Theta^{-1} D^{1/2}\Big\}^{-1/2} (\hat{\beta}^* -\beta^*) \stackrel{d}{\longrightarrow} N(0,I_p).\] 

Using the convergence of $\hat{s}_x$ and $\hat{s}_z$ shown in Theorem \ref{thm:consistency_s_hat}, we conclude that 
\[  \sqrt{N} \Xi_\Theta^{-1/2} (\hat{\beta}(s_x) - \beta) \stackrel{d}{\longrightarrow}N(0, I_p).\]
It complete the proof of asymptotic normality of the TLS estimator  $\hat{\beta}(s_x)$ using the true multiplier $s_x$.

It remain to quantify the difference induced by the replacement of $s_x$ by $\hat{s}_x$. Recall that $\hat{s}_x = s_x+ O_p(1/\sqrt{N})$. Using Lemma \ref{lemma:closed_form_TLS},
the difference is such that 
\begin{align*}
    &\sqrt{N} [ \hat{\beta}(\hat{s}_x)  - \hat{\beta}(s_x)]  \\
    & =\sqrt{N} [\lambda(\hat{s}_x) - \lambda(s_x)] \Big[ \frac{1}{N} \bs{\bar{X}}^T W_\Theta \bs{\bar{X}} - \lambda (\hat{s}_x) D \Big]^{-1} D \Big[\frac{1}{N} \bs{\bar{X}}^T W_\Theta \bs{\bar{X}} - \lambda(s_x) D \Big]^{-1} \frac{1}{N}\bs{\bar{X}}^T W_\Theta \bs{Y}.  
\end{align*}

It is straightforward that in a neighborhood of $s_x$, we have $\lambda'(s_x) = 1$. Therefore, 
\begin{align*}
    \sqrt{N} [\lambda(\hat{s}) -\lambda (s_x)] = O_p(\sqrt{N} [\hat{s}_x - s_x]), \\
    \sqrt{N}[\hat{\beta}(\hat{s}_x) -\hat{\beta}(s_x)] = O_p(\|D\|_2) = O_p(1/\log N). 
\end{align*}
It follows then 
\[ \sqrt{N} \Xi_\Theta^{-1/2} (\hat{\beta}(\hat{s}_x) - \beta) \stackrel{d}{\longrightarrow} N(0,I_p).\]

\subsection{Proof of Theorem \ref{thm:adequacy_test}}\label{subsec:proof_adequacy}

Consider the decomposition
\[ \hat{\epsilon} = \bs{Y} -\bs{\bar{X}} \hat{\beta} = X\beta + \Sigma^{1/2}\bs{e} - X\hat\beta  - s_x \Sigma^{1/2} \bs{Q}D^{1/2}\hat{\beta} .\]
It follows then
$$
\begin{aligned}
\frac{1}{N} & \hat{\epsilon}^T W_\Theta \hat{\epsilon}-\frac{1}{N} \operatorname{tr}[W_\Theta \Sigma]-s_x^2 \hat{\beta}^T D \hat{\beta} \frac{1}{N} \operatorname{tr}[W_\Theta \Sigma] \\
= & \frac{s_x^2}{N} \hat{\beta}^T D^{1 / 2}\left[\bs{Q}^T \Sigma^{1 / 2} W_\Theta \Sigma^{1 / 2} \bs{Q} -\operatorname{tr}[W_\Theta \Sigma] I_p\right] D^{1 / 2} \hat{\beta} \\
& +\frac{1}{N}\left[\bs{e}^T \Sigma^{1 / 2} W_\Theta \Sigma^{1 / 2} \bs{e} -\operatorname{tr}[W_\Theta \Sigma]\right] \\
& -\frac{2 s_X}{N} \bs{e}^T \Sigma^{1 / 2} W_\Theta \Sigma^{1 / 2} \bs{Q} D^{1 / 2} \hat{\beta} \\
& +\frac{1}{N}(\hat{\beta}-\beta)^T X^T W_\Theta X(\hat{\beta}-\beta) \\
& +\frac{2}{N} \bs{e}^T \Sigma^{1 / 2} W_\Theta X (\beta-\hat{\beta}) \\
& +\frac{2 s_x}{N} \hat{\beta}^T D^{1 / 2} \bs{Q}^T \Sigma^{1 / 2} W_\Theta X(\hat{\beta}-\beta)
\end{aligned}
$$

Since $N^{-1} X^T W_\Theta X = O_{\prec}(1)$ and from Theorem \ref{thm:normality}, $\hat{\beta} -\beta = O_p(N^{-1/2})$, we have 
\[\frac{1}{N}(\hat{\beta} -\beta)^T X^T W_\Theta X (\hat{\beta} - \beta) = O_p(N^{-1}).\]
Also, due to Lemma \ref{lemma:concentration_quadratic_form}, 
\[ \frac{2}{N} \bs{e}^T \Sigma W_\Theta X X^T W_\Theta \Sigma^{1/2}\bs{e}  = O_p(1). \]
It follows that 
\[ \frac{1}{N} \bs{e}^T \Sigma^{1/2} W_\Theta X (\hat{\beta} -\beta) = O_p(N^{-1}).\]
Due to analogous arguments, 
\[ \frac{s_x}{N} \hat{\beta}^T D^{1/2} \bs{Q}^T \Sigma^{1/2} W_\Theta  X (\hat{\beta} -\beta) = O_p(N^{-1}).\]
Moreover, due to Lemma \ref{lemma:consistency_Delta_Phi}, 
\[ \frac{1}{N} \bs{Q}^T \Sigma^{1/2} W_\Theta \Sigma^{1/2} \bs{Q} - \frac{1}{N} \tr[W_\Theta \Sigma] I_p = O_{\prec}(N^{-1/2}).\]
It follows then 
\begin{align*}
    \frac{s_x^2}{N} \hat{\beta}^T D^{1/2} \Big[  \bs{Q}^T \Sigma^{1/2} W_\Theta \Sigma^{1/2} \bs{Q} -\tr[W_\Theta \Sigma ] I_p \Big]D^{1/2} \hat{\beta} \\
    - \frac{s_x^2}{N} \beta^T D^{1/2} \Big[ \bs{Q}^T \Sigma^{1/2} W_\Theta \Sigma^{1/2} \bs{Q} - \tr[W_\Theta \Sigma] I_p \Big]D^{1/2}\beta = O_p(N^{-1}).
\end{align*}

Also, 
\[ \frac{1}{N}\bs{e}^T \Sigma^{1/2} W_\Theta \Sigma^{1/2} \bs{Q} D^{1/2}\hat{\beta} - \frac{1}{N} \bs{e}^T \Sigma^{1/2} W_\Theta \Sigma^{1/2} \bs{Q} D^{1/2}\beta = O_p(N^{-1}). \]
We conclude that 
\begin{align*}
   & \frac{1}{N} \hat{\epsilon}^T W_\Theta \hat{\epsilon} - (1+s_x^2 \hat{\beta}^T D\hat{\beta}) \frac{1}{N} \tr[W_\Theta \Sigma]\\
    & = \frac{1}{N} [ \bs{e} -s_x \bs{Q} D^{1/2} \beta]^T \Sigma^{1/2}W_\Theta \Sigma^{1/2} [\bs{e} - s_x \bs{Q}D^{1/2}\beta]\\
    &\quad  - \frac{1}{N} \mE [ \bs{e} -s_x \bs{Q} D^{1/2} \beta]^T \Sigma^{1/2}W_\Theta \Sigma^{1/2} [\bs{e} - s_x \bs{Q}D^{1/2}\beta] + o_p(N^{-1/2}).  
\end{align*}
Clearly, $\bs{e} - s_x \bs{Q} D^{1/2}\beta $ is a $N\times 1$ vector whose entries are iid $N(0, 1+ s_x^2 \beta^T D \beta)$. Using Lindeberg-Fuller central limit theorem, 
\[ \sqrt{N} \frac{N^{-1} \hat{\epsilon}^T W_\Theta \hat{\epsilon} - (1+ s_x^2\beta^T D\beta) N^{-1}\tr[W_\Theta \Sigma]}{ \sqrt{2 (1+ s_x^2 \beta^T D\beta)^2 N^{-1}\tr[W_\Theta \Sigma W_\Theta \Sigma] } }  \stackrel{d}{\longrightarrow}N(0,1).\]

On the other hand, we have 
\[ \frac{s_z^2}{N}\tr[W_\Theta \Sigma ] = \Delta_\Theta +O_p(N^{-1/2}).\]
\[ \frac{s_z^4}{N} \tr[W_\Theta \Sigma W_\Theta \Sigma] = \Phi_\Theta + o_p(1).\]
\[\hat{s}_x - s_x = O_p(N^{-1/2}).\]
\[ \hat{\beta} -\beta = O_p(N^{-1/2}).\]
Therefore, using Slutsky's theorem, we conclude that 
\[ \frac{Q(\hat{\epsilon})}{\sqrt{V_Q}} \stackrel{d}{\longrightarrow}N(0,1),  \quad \text{where}\quad V_\Theta = \frac{2}{N} (1+\hat{s}_x^2 \hat{\beta}^TD\hat{\beta})^2 \Phi_{\Theta}. \]

\section{Detailed Simulation Results}\label{sec:add_results}

This section presents the complete simulation results that complement the selected findings reported in the main text. The additional results include both regression coefficients, $\beta_1$ and $\beta_2$, for $K=3$ and $K=5$, under Gaussian and heavy-tailed $t$ errors. For each configuration, we report the empirical bias and RMSE of the point estimators, together with the empirical coverage probability and average length of the 90\% confidence intervals.

Tables~\ref{tab:bias_rmse_case_normal_beta1_K3}--\ref{tab:bias_rmse_case_t_beta2_K5}
summarize the empirical bias and RMSE, and
Tables~\ref{tab:coverage_length_case_normal_beta1_K3}--\ref{tab:coverage_length_case_t_beta2_K5}
summarize the corresponding confidence interval performance. The results are reported across dimensions, ensemble sizes, signal-to-noise ratios, spike detection probabilities, and variance inflation levels. Overall, the additional simulations are consistent with the findings in the main text.

\begin{table}[htbp]
\scriptsize
\setlength{\tabcolsep}{2.4pt}
\renewcommand{\arraystretch}{0.8}
\centering
\caption{Empirical bias (RMSE) of point estimators $\hat{\beta}_1$ under Normal errors with $K=3$.}
\label{tab:bias_rmse_case_normal_beta1_K3}

\end{table}

\begin{table}[htbp]
\scriptsize
\setlength{\tabcolsep}{2.4pt}
\renewcommand{\arraystretch}{0.8}
\centering
\caption{Empirical bias (RMSE) of point estimators $\hat{\beta}_1$ under Normal errors with $K=5$.}
\label{tab:bias_rmse_case_normal_beta1_K5}
%
\end{table}

\begin{table}[htbp]
\scriptsize
\setlength{\tabcolsep}{2.4pt}
\renewcommand{\arraystretch}{0.8}
\centering
\caption{Empirical bias (RMSE) of point estimators $\hat{\beta}_1$ under T errors with $K=3$.}
\label{tab:bias_rmse_case_t_beta1_K3}
%
\end{table}

\begin{table}[htbp]
\scriptsize
\setlength{\tabcolsep}{2.4pt}
\renewcommand{\arraystretch}{0.8}
\centering
\caption{Empirical bias (RMSE) of point estimators $\hat{\beta}_1$ under T errors with $K=5$.}
\label{tab:bias_rmse_case_t_beta1_K5}
%
\end{table}

\begin{table}[htbp]
\scriptsize
\setlength{\tabcolsep}{2.4pt}
\renewcommand{\arraystretch}{0.8}
\centering
\caption{Empirical bias (RMSE) of point estimators $\hat{\beta}_2$ under Normal errors with $K=3$.}
\label{tab:bias_rmse_case_normal_beta2_K3}
%
\end{table}

\begin{table}[htbp]
\scriptsize
\setlength{\tabcolsep}{2.4pt}
\renewcommand{\arraystretch}{0.8}
\centering
\caption{Empirical bias (RMSE) of point estimators $\hat{\beta}_2$ under Normal errors with $K=5$.}
\label{tab:bias_rmse_case_normal_beta2_K5}
%
\end{table}

\begin{table}[htbp]
\scriptsize
\setlength{\tabcolsep}{2.4pt}
\renewcommand{\arraystretch}{0.8}
\centering
\caption{Empirical bias (RMSE) of point estimators $\hat{\beta}_2$ under T errors with $K=3$.}
\label{tab:bias_rmse_case_t_beta2_K3}
%
\end{table}

\begin{table}[htbp]
\scriptsize
\setlength{\tabcolsep}{2.4pt}
\renewcommand{\arraystretch}{0.8}
\centering
\caption{Empirical coverage rate $\times 100\%$ (average length) of 90\% confidence intervals for $\beta_1$ under Normal errors with $K=3$.}
\label{tab:coverage_length_case_normal_beta1_K3}
%
\end{table}

\begin{table}[htbp]
\scriptsize
\setlength{\tabcolsep}{2.4pt}
\renewcommand{\arraystretch}{0.8}
\centering
\caption{Empirical coverage rate $\times 100\%$ (average length) of 90\% confidence intervals for $\beta_1$ under Normal errors with $K=5$.}
\label{tab:coverage_length_case_normal_beta1_K5}
%
\end{table}

\begin{table}[htbp]
\scriptsize
\setlength{\tabcolsep}{2.4pt}
\renewcommand{\arraystretch}{0.8}
\centering
\caption{Empirical coverage rate $\times 100\%$ (average length) of 90\% confidence intervals for $\beta_1$ under T errors with $K=3$.}
\label{tab:coverage_length_case_t_beta1_K3}
%
\end{table}

\begin{table}[htbp]
\scriptsize
\setlength{\tabcolsep}{2.4pt}
\renewcommand{\arraystretch}{0.8}
\centering
\caption{Empirical coverage rate $\times 100\%$ (average length) of 90\% confidence intervals for $\beta_1$ under T errors with $K=5$.}
\label{tab:coverage_length_case_t_beta1_K5}
%
\end{table}

\begin{table}[htbp]
\scriptsize
\setlength{\tabcolsep}{2.4pt}
\renewcommand{\arraystretch}{0.8}
\centering
\caption{Empirical coverage rate $\times 100\%$ (average length) of 90\% confidence intervals for $\beta_2$ under Normal errors with $K=3$.}
\label{tab:coverage_length_case_normal_beta2_K3}
%
\end{table}

\begin{table}[htbp]
\scriptsize
\setlength{\tabcolsep}{2.4pt}
\renewcommand{\arraystretch}{0.8}
\centering
\caption{Empirical coverage rate $\times 100\%$ (average length) of 90\% confidence intervals for $\beta_2$ under T errors with $K=3$.}
\label{tab:coverage_length_case_t_beta2_K3}
%
\end{table}

\end{document}